\documentclass[version=preprint]{iacrcc}

\usepackage{lmodern}
\usepackage{microtype}
\usepackage[T1]{fontenc}
\usepackage{algorithm}
\usepackage{algpseudocode}
\usepackage{booktabs}
\usepackage[inline]{enumitem}
\usepackage{amsmath,amssymb}
\usepackage{tikz}
\usepackage{pgfplots}
\pgfplotsset{compat=1.18}
\usetikzlibrary{arrows.meta,positioning,fit,backgrounds}

\newcommand{\Z}{\mathbb{Z}}
\newcommand{\Rq}{R_q}
\newcommand{\bA}{\mathbf{A}}
\newcommand{\by}{\mathbf{y}}
\newcommand{\bz}{\mathbf{z}}
\newcommand{\bs}{\mathbf{s}}
\newcommand{\bw}{\mathbf{w}}
\newcommand{\HighBits}{\mathsf{HighBits}}
\newcommand{\LowBits}{\mathsf{LowBits}}
\newcommand{\wone}{w_1}
\newcommand{\rzero}{r_0}
\newcommand{\BCC}{\mathsf{BCC}}

\newcommand{\Decompose}{\mathsf{Decompose}}
\newcommand{\MakeHint}{\mathsf{MakeHint}}
\newcommand{\UseHint}{\mathsf{UseHint}}
\newcommand{\Hash}{\mathsf{H}}
\newcommand{\infnorm}[1]{\lVert #1\rVert_\infty}
\newcommand{\getsr}{\leftarrow\!\!\text{\tiny\$}\,}
\newcommand{\sthr}[1]{[\![#1]\!]}
\newcommand{\IH}{\mathsf{IH}}
\newcommand{\bh}{\mathbf{h}}
\newcommand{\bt}{\mathbf{t}}
\newcommand{\bu}{\mathbf{u}}
\newcommand{\br}{\mathbf{r}}

\newcommand{\pk}{\mathsf{pk}}

\title[running={TALUS: Threshold ML-DSA}]{TALUS: FIPS-204-Exact Threshold ML-DSA
  via Boundary Clearance}

\addauthor[inst={1},
           email={leo@codebat.ai},
          ]{Leo Kao}
\addauthor[inst={1}]{Raymond Chang}
\addaffiliation{Codebat Technologies Inc.}

\keywords{threshold signatures, ML-DSA, FIPS 204, post-quantum cryptography,
          lattice-based cryptography, boundary clearance, carry elimination}

\begin{document}
\maketitle

\begin{abstract}
We present TALUS, a threshold signing protocol for ML-DSA (FIPS~204) that supports an
\emph{arbitrary} number of parties, emits signatures accepted by any unmodified ML-DSA verifier, and
reduces, for a bounded number of signatures per key, to ML-DSA's own assumptions (Module-LWE and
SelfTargetMSIS).

Its core is the \emph{Boundary Clearance Condition} (BCC): for a constant fraction of nonces ($31.7\%$ at
ML-DSA-65), the secret vector $\bs_2$ provably cannot cross a rounding boundary, so the $\bs_2$-dependent
rejection check can be enforced offline on preprocessed nonces. This removes the interaction that forces
multi-round signing in prior schemes, yielding a TEE-assisted one-round profile and a fully distributed
honest-majority MPC profile with two online rounds, both built on a Carry Elimination Framework that
computes the shared commitment $\HighBits(\bA\by)$ on secret shares, a step we show is unavoidable.

We prove a lower bound: any FIPS-204-exact threshold scheme revealing a summed (Irwin--Hall) nonce admits
an efficient key-recovery attack after about $2^{30}$ signatures under one key (at ML-DSA-65), so no
$q_s$-independent unforgeability is possible for this class. Complementing it, a $q_s$-bounded analysis
certifies over $105$ bits at the operational signing cap for the external view, with insider ceilings
disclosed per view and a mandatory key rotation capping each key's signing lifetime. A Rust implementation
across all three FIPS~204 levels and a hardware-validated single-party core demonstrate
practicality.\footnote{Single-machine microbenchmarks isolate the arithmetic at low single-digit
milliseconds per signature; deployment wall-clock is dominated by the one or two network round-trips, which
we do not measure. What is proven in this paper versus deferred to the extended version is stated precisely
in \S\ref{sec:intro}.}

\end{abstract}

\section{Introduction}
\label{sec:intro}

Post-quantum migration is turning ML-DSA (FIPS~204~\cite{FIPS204}) into the default signature for
high-assurance deployments: certificate authorities, hardware security modules, and enterprise key
custody. Many of these settings need \emph{threshold} signing (splitting the signing key across $N$
parties so that any $T$ can sign but fewer than $T$ learn nothing) while still emitting signatures
that existing verifiers accept without modification. Threshold ML-DSA is therefore both practically
pressing and technically awkward: unlike Schnorr, ML-DSA's signing loop contains a
\emph{secret-dependent rejection check}, and evaluating it on secret shares is what makes prior
FIPS-compatible threshold schemes multi-round.\footnote{This version supersedes the earlier preprint;
\S\ref{sec:disc} records its relationship to concurrent cryptanalysis~\cite{Niot26}.}

\begin{table}[t]
\centering
\small
\begin{tabular}{lccl}
\toprule
Scheme & Online rounds & Party scale & Assumption \\
\midrule
Quorus~\cite{BdCE25}          & $16$--$29$              & large $N$      & standard, honest maj. \\
Efficient Thr. ML-DSA~\cite{CDENP26} & $3$ / attempt      & $N\le 6$       & standard \\
Trilithium~\cite{Trilithium25}& $14$ / attempt          & $N=2$          & standard \\
\midrule
\textbf{TALUS} (this work)    & \textbf{TEE 1 · MPC 2}  & arbitrary $N$  & standard (MLWE, SelfTargetMSIS) \\
\bottomrule
\end{tabular}
\caption{Online round count and party scale of FIPS-204-compatible threshold ML-DSA schemes. Rounds trade
off against party scale, so the meaningful comparison is the pair. TALUS holds a low online round count at
arbitrary $N$; the security cost of doing so is a $q_s$-bounded guarantee (\S\ref{sec:security}). Beyond the
lattice assumptions shown, TALUS-TEE additionally assumes a trusted enclave and TALUS-MPC an honest majority
($N\ge 2T{-}1$).}
\label{tab:compare}
\end{table}

\paragraph{The observation.}
ML-DSA signs $\bz = \by + c\bs_1$ and accepts only if two norm bounds hold; the one that involves
the second secret vector $\bs_2$ is a check on the low bits $\rzero$ of $\bw=\bA\by$. Prior threshold
schemes evaluate this $\bs_2$-dependent check distributively on every rejection-sampling attempt, and
that distributed evaluation is the source of their round complexity. We observe that the check is
frequently \emph{redundant}: since $\lVert c\bs_2\rVert_\infty \le \beta = \tau\eta$, a coefficient of
$\rzero$ that sits more than $\beta$ from its rounding boundary cannot be moved across the boundary by
\emph{any} admissible $\bs_2$ and challenge $c$. We call the condition that \emph{every} coefficient
clears its boundary by $\beta$ the Boundary Clearance Condition (BCC). The per-coefficient geometry is
exactly the hint-correctness lemma of Dilithium~\cite{DKLLSS18,FIPS204}; our contribution is the two
observations layered on top of it: BCC is \emph{key- and message-independent}, hence checkable in
preprocessing, and enforcing it lets a scheme \emph{delete} the distributed $\bs_2$ \emph{rejection-check}
rather than perform it on shares. The emitted hint still carries a $q_s$-bounded $\bs_2$ residual
(\S\ref{sec:security}), so what is removed is the online $\bs_2$ MPC, not every trace of $\bs_2$. For ML-DSA-65, BCC holds for $\approx 31.7\%$ of nonces, so a usable
nonce is found after $\approx 3.15$ offline trials.

BCC has no benefit for a single signer: there, the $\rzero$ check is free (one resamples $\by$
locally), so pre-selecting nonces only wastes work. The value is intrinsic to the threshold setting,
where that same check otherwise costs a multi-round MPC on $\bs_2$. Offline nonce selection for
single-party Dilithium has been explored~\cite{PreRej24}, but for a different rejection condition and
by filtering the \emph{key} at key generation in a way that changes the key distribution; BCC filters
\emph{nonces}, keeps the exact FIPS-204 nonce range $\gamma_1$ and the key distribution, and eliminates the
online $\bs_2$ computation, though not the $\bs_2$ dependence of the emitted signature, which retains a
$q_s$-bounded residual (\S\ref{sec:security}). The impossibility
of a homomorphic one-round shortcut (that some carry resolution is unavoidable) rests on
$\HighBits$ being non-homomorphic, a fact traceable to~\cite{CozzoSmart19,PMS26} that we formalize for
this setting rather than claim as new.

\paragraph{The cost of exactness.}
Keeping the FIPS-204 signature distribution has a price that we make precise rather than hide. In the
threshold setting the nonce is a \emph{sum} of the parties' shares, so its distribution is
Irwin--Hall, more concentrated than the uniform nonce of single-party ML-DSA. This concentration is
the principal distributional difference from ML-DSA; moving the $\bs_2$-shaping offline adds a second,
$q_s$-bounded departure (the residual $\bs_2$ channel of \S\ref{sec:security}). Neither is free: we show
(\S\ref{sec:security}) that any FIPS-204-exact threshold scheme that \emph{reveals} such a summed nonce leaks
a constant amount of Fisher information about $\bs_1$ per signature, giving an efficient key-recovery attack
after $q_s\approx 4/(I\tau)$ signatures under a single key. No assumption removes this bound; it is intrinsic
to the combination of exact FIPS output and a revealed sum-of-shares nonce. Schemes that avoid it do so
by revealing an internally \emph{uniform} nonce (via more interaction) or by changing the distribution
(non-FIPS); TALUS instead accepts a $q_s$-bounded guarantee and pairs it with a mandatory key \emph{rotation}
(fresh key generation to a new public key) that caps a key's signing lifetime with margin below every wall in
the accounting of \S\ref{sec:security}. We regard this lower bound as a
contribution: it delineates the design space that any FIPS-exact threshold ML-DSA must live in.

\paragraph{Contributions.}
\begin{itemize}[leftmargin=1.4em]
  \item \textbf{The Boundary Clearance Condition (\S\ref{sec:bcc}).} A key- and message-independent
    predicate on the nonce that makes ML-DSA's $\bs_2$-dependent \emph{middle} rejection check redundant,
    moving that $\bs_2$ evaluation (and the online rounds it costs) into preprocessing. We prove the
    redundancy (building on the Dilithium hint lemma), characterize the $\approx 31.7\%$ acceptance rate,
    and disclose the $q_s$-bounded $\bs_2$ residual it leaves in the emitted signature.
  \item \textbf{A lower bound for FIPS-exact threshold ML-DSA (\S\ref{sec:security}).} An efficient,
    assumption-free key-recovery attack showing that any FIPS-204-exact scheme revealing a summed nonce
    is $q_s$-bounded, with the wall at $q_s\approx 4/(I\tau)$. This is the technical boundary separating
    TALUS from schemes (Quorus~\cite{BdCE25}, Trilithium~\cite{Trilithium25}) that reveal an internally uniform nonce.
  \item \textbf{Two practical profiles and a carry framework (\S\ref{sec:tee}--\S\ref{sec:mpc}).} A
    TEE-assisted profile with one online round for any $T$-of-$N$; a fully distributed MPC profile with
    two online rounds under an honest majority (identifiable abort for preprocessing faults, online attribution deferred); and a Carry Elimination Framework
    that computes $\wone=\HighBits(\bA\by)$ on shares without revealing the nonce, together with the
    impossibility result motivating it.
  \item \textbf{Implementation (\S\ref{sec:impl}).} A Rust implementation across all three FIPS~204
    levels and a hardware-validated single-party core, with single-machine compute microbenchmarks. All outputs are
    accepted by an unmodified ML-DSA verifier.
\end{itemize}

\paragraph{Scope of the security claims.}
We are explicit about what is and is not proven. The signatures are byte-identical in format to FIPS-204 outputs
and accepted by any unmodified verifier; the \emph{signing distribution} departs from single-party
ML-DSA in $q_s$-bounded ways: the nonce concentration quantified above, and the residual $\bs_2$ channel
that moving the online $\bs_2$-shaping offline introduces (\S\ref{sec:security}). Security is analyzed directly against ML-DSA's
own assumptions (Module-LWE for key hiding, SelfTargetMSIS for unforgeability), not by treating
single-party ML-DSA as a black box. The summed nonce is not ML-DSA-distributed, so such a reduction
does not apply. The guarantee is $q_s$-bounded as the lower bound forces. EUF-CMA unforgeability is proven
against malicious adversaries with well-formed (extractable) key-generation outputs
(Theorem~\ref{thm:ub}, Remark~\ref{rem:malicious-eufcma}): the game sequence operates
on honest-party randomness, with the insider ceiling $\lambda/m$ for $m$ honest nonce shares.
Simulation-based transcript privacy is proven in the semi-honest model
(Proposition~\ref{prop:semihonest}); its malicious lift (requiring ZK-extractable inputs in the DKG and nonce
generation) and the DKG instantiation are developed in the extended version. We flag each such boundary in
place.

\section{Preliminaries}
\label{sec:prelim}

\subsection{Notation}
Let $q = 8380417$ and $\Rq = \Z_q[X]/(X^{256}+1)$. Bold lower-case letters denote vectors over $\Rq$
and bold upper-case letters matrices; $\infnorm{\cdot}$ is the coefficient-wise $\ell_\infty$ norm.
For a set $S$, $x \getsr S$ is a uniform draw. We write $[k]=\{1,\dots,k\}$. The public matrix is
$\bA \in \Rq^{k\times\ell}$ with $k\ge\ell$; the secret is $(\bs_1,\bs_2)$ with
$\infnorm{\bs_1},\infnorm{\bs_2}\le\eta$; the public key is $\bt=\bA\bs_1+\bs_2$, split as
$\bt=\bt_1\cdot 2^d+\bt_0$. TALUS publishes the full $\bt$ (both $\bt_1$ and $\bt_0$) so that
hint formation is public (\S\ref{sec:impl}); standard ML-DSA keeps $\bt_0$ in the secret key.
We write $n_\ell=256\ell$ and $n_k=256k$ for the number of integer coefficients in
an $\ell$-vector and a $k$-vector over $R_q$.

\subsection{ML-DSA (FIPS 204)}
\label{sec:prelim-mldsa}
We recall the parameters used throughout at security level~3 (ML-DSA-65); levels~2 and~5 are analogous
and stated where they matter.
\begin{itemize}[nosep,leftmargin=1.4em]
  \item $\gamma_1 = 2^{19}$ (nonce range), $\gamma_2 = (q-1)/32 = 261{,}888$ (rounding granularity),
    $\alpha = 2\gamma_2 = (q-1)/16$ (stripe width), $\tau=49$, $\eta=4$, hence $\beta=\tau\eta=196$;
    $(k,\ell)=(6,5)$, $\omega=55$, $d=13$.
\end{itemize}
All three levels are tabulated in Table~\ref{tab:params}; every per-level security number in this paper is
computed from these FIPS-204 parameters.
\begin{table}[h]\centering\small
\caption{FIPS-204 parameters for the three security levels ($q=8{,}380{,}417$, $d=13$ throughout).}
\label{tab:params}
\begin{tabular}{lcccccccccc}
\hline
Level & $(k,\ell)$ & $\gamma_1$ & $\gamma_2$ & $\alpha=2\gamma_2$ & $\tau$ & $\eta$ & $\beta$ & $\omega$ & $n_\ell$ & $n_k$ \\
\hline
ML-DSA-44 & $(4,4)$ & $2^{17}$ & $95{,}232$ & $190{,}464$ & $39$ & $2$ & $78$ & $80$ & $1024$ & $1024$ \\
ML-DSA-65 & $(6,5)$ & $2^{19}$ & $261{,}888$ & $523{,}776$ & $49$ & $4$ & $196$ & $55$ & $1280$ & $1536$ \\
ML-DSA-87 & $(8,7)$ & $2^{19}$ & $261{,}888$ & $523{,}776$ & $60$ & $2$ & $120$ & $75$ & $1792$ & $2048$ \\
\hline
\end{tabular}
\end{table}

\paragraph{Rounding.} $\Decompose(w,\alpha)=(\wone,\rzero)$ writes $w = \wone\cdot\alpha+\rzero$ with
$\rzero\in(-\alpha/2,\alpha/2]$ (with the FIPS-204 boundary convention); $\HighBits(w,\alpha)=\wone$ and
$\LowBits(w,\alpha)=\rzero$. The hint functions $\MakeHint$ and $\UseHint$ let a verifier recover
$\HighBits(w)$ from an approximation, and satisfy the Dilithium hint-correctness property recalled in
\S\ref{sec:bcc}.

\paragraph{Signing (single-party, simplified).}
Sample $\by \getsr [-\gamma_1{+}1,\gamma_1]^{n_\ell}$, set $\bw=\bA\by$ and
$(\wone,\rzero)=\Decompose(\bw,\alpha)$, compute, for the message representative $\mu$, the challenge
$c=\mathsf{SampleInBall}(\Hash(\mu\Vert\wone))$ (a polynomial in $R_q$ with exactly $\tau$ coefficients in
$\{-1,+1\}$ and the rest zero) and the response $\bz=\by+c\bs_1$. \emph{Reject} (resample) unless
\begin{equation}\label{eq:reject}
  \infnorm{\bz} < \gamma_1-\beta,\qquad
  \infnorm{\rzero-c\bs_2} < \gamma_2-\beta,\qquad
  \mathrm{wt}(\bh)\le\omega,
\end{equation}
where the hint is $\bh=\MakeHint(-c\bt_0,\,\bw-c\bs_2+c\bt_0,\,\alpha)$. The signature is
$(c,\bz,\bh)$.

\paragraph{Verification.}
Recover $\wone'=\UseHint(\bh,\,\bA\bz-c\bt_1\cdot 2^d,\,\alpha)$ and accept iff
$\infnorm{\bz}<\gamma_1-\beta$, $\mathrm{wt}(\bh)\le\omega$, and $c=\Hash(\mu\Vert\wone')$.
A signature accepted by this procedure is accepted by any unmodified FIPS-204 verifier.

\paragraph{The secret-dependent check.} Of the three rejection conditions in~\eqref{eq:reject}, the
$\bz$-norm and hint-weight conditions depend on the message (through $c$) and are evaluated online in
every threshold scheme; the middle condition, $\infnorm{\rzero-c\bs_2}<\gamma_2-\beta$, is the one that
touches $\bs_2$, and evaluating it distributively is the source of prior schemes' online round cost.
It is this condition that BCC makes redundant (\S\ref{sec:bcc}).

\subsection{Shamir sharing and the nonce}
\label{sec:prelim-share}
We use standard Shamir~\cite{Shamir79} $(T,N)$ sharing over $\Rq$: $\sthr{x}$ denotes a degree-$(T{-}1)$ sharing, and any
$T$ parties reconstruct $x$ by Lagrange interpolation while $T{-}1$ learn nothing. The signing key
$\bs_1$ is Shamir-shared and never reconstructed during signing.

\paragraph{The summed nonce.} In the distributed profile no party holds the nonce; each contributes a
share, and the effective nonce $\by=\sum_h \by_h$ is summed over the $\nu$ parties that contribute one. Because
$\by$ is a \emph{sum} of $\nu$ independent bounded terms, its per-coefficient law is a \emph{discrete}
Irwin--Hall law: the distribution of a sum of $\nu$ discrete uniforms, not the continuous Irwin--Hall density.
We write $\IH(\gamma_1,\nu)$ for it, keeping the name by analogy. The count $\nu$ is profile-dependent
($\nu\approx T$ signer shares for the TEE profile, $\nu=2T-1$ for the MPC committee; \S\ref{sec:mpc}), and this
law is more concentrated than the uniform law of the single-party nonce. This concentration is the
principal distributional difference between TALUS signatures and single-party ML-DSA signatures; moving the
$\bs_2$-shaping offline adds a second, $q_s$-bounded one (the residual $\bs_2$ channel), and both security
costs are quantified in \S\ref{sec:security}.

\subsection{Security definitions}
\label{sec:prelim-sec}
\begin{definition}[EUF-CMA, threshold]\label{def:eufcma}
A $(T,N)$-threshold signature is EUF-CMA secure against \emph{static} corruption if no PPT adversary that
statically corrupts up to $T{-}1$ parties, and interacts with the honest parties over polynomially many
signing sessions on adaptively chosen messages, produces a valid signature on a fresh message except with
negligible probability. The adversary's view includes the full protocol transcript of every session it
participates in, including any values revealed to corrupt parties before a message is chosen.
\end{definition}
We treat static corruption throughout; adaptive corruption is left open. Definition~\ref{def:eufcma}
explicitly grants the adversary the pre-message transcript, so that a proof under it must handle values
(such as an offline commitment) that are fixed before the message, the setting relevant to
\S\ref{sec:security}.

\paragraph{Assumptions.} Key hiding rests on Module-LWE and unforgeability on the SelfTargetMSIS problem,
the same pair underlying ML-DSA's own security~\cite{DKLLSS18,KLS18,FIPS204}. We state both in
Appendix~\ref{app:assumptions}. We do \emph{not} reduce to single-party ML-DSA as a black box: the summed
nonce is not ML-DSA-distributed, so such a reduction does not apply (\S\ref{sec:security}).

\paragraph{MPC primitives and threat model.} The distributed profile uses standard honest-majority MPC
building blocks (secret-shared multiplication, opening) over $\Rq$; we specify them together with the
carry framework in Appendix~\ref{app:cef}. The MPC profile assumes an honest majority ($N\ge 2T-1$) and provides identifiable
abort on the optimistic path; EUF-CMA unforgeability is proven against malicious adversaries with
well-formed key-generation outputs (\S\ref{sec:security}); simulation-based
transcript privacy is proven in the semi-honest model, with the malicious lift in the extended version
(\S\ref{sec:disc}).

\section{The Boundary: Two Jobs of the Rejection Check}
\label{sec:boundary}

Threshold ML-DSA inherits a single hard constraint from FIPS-204, and every design decision in this paper is a
response to it. We state it first, as one theorem, because it explains what BCC can and cannot buy, why the
security analysis takes the shape it does, and where the honest limit of the construction lies.

\paragraph{The rejection check does two separable jobs.} ML-DSA's middle rejection condition
$\infnorm{\LowBits(\bw-c\bs_2)}<\gamma_2-\beta$ (\S\ref{sec:prelim}) simultaneously performs two logically
distinct tasks:
\begin{enumerate}[nosep,leftmargin=1.6em]
  \item \textbf{correctness}: it certifies that the emitted hint recovers $\wone$, so the signature verifies
    under an unmodified verifier;
  \item \textbf{secrecy}: as a rejection-sampling step it re-centres the accepted transcript so that the
    emitted signature is statistically \emph{independent of $\bs_2$}.
\end{enumerate}
Any threshold scheme that emits FIPS-204 signatures must reproduce job~1. Whether it reproduces job~2 is a
design choice, and it has a price.

\paragraph{A nonce-only predicate can do only correctness.} The Boundary Clearance Condition of
\S\ref{sec:bcc} is a predicate on $(\bA,\by)$ alone; it is blind to $\bs_2$. It can therefore be enforced
offline to certify job~1, but it cannot perform job~2, which is an operation \emph{on $\bs_2$}. Replacing the
online $\bs_2$-rejection by an offline BCC filter thus keeps correctness but drops the $\bs_2$-shaping: each
emitted signature then carries a residual, secret-dependent quantity.

\paragraph{The two channels, both forced by the byte-exact corner.} A byte-exact threshold scheme reveals an
aggregate nonce $\by=\sum_h\by_h$ that is a \emph{sum} of bounded shares, hence Irwin--Hall rather than uniform.
Write $I\approx\tfrac{2}{\gamma_1^2}\ln(\gamma_1/\beta)$ for the $\beta$-truncated Fisher information of that summed
per-coordinate marginal about a location shift; $I>0$ exactly because the sum is non-uniform, and $I=0$ only
when the revealed marginal is uniform (a single term, or an MPC-assembled uniform nonce). Two facts of the
byte-exact threshold design, the revealed non-uniform summed nonce and BCC's dropping of the
$\bs_2$-shaping (job~2), open \emph{two} leakage channels:
\begin{itemize}[nosep,leftmargin=1.6em]
  \item \emph{the $\bs_1$ channel (inherent to any summed nonce).} The response $\bz=\by+c\bs_1$ is a location
    observation of $\bs_1$ with per-coordinate information $I$; a passive, assumption-free estimator recovers
    $\bs_1$ once $q_s\gtrsim 4/(I\tau)$ (\S\ref{sec:lb}, Theorem~\ref{thm:lb}). At ML-DSA-65,
    $I=5.74\times10^{-11}$ and this wall is $\approx 2^{30}$.
  \item \emph{the $\bs_2$ channel (introduced by dropping job~2).} Once the online $\bs_2$-rejection is removed,
    the emitted hint reveals $\mathbf{O}=\rzero+c(\bt_0-\bs_2)$ with $\rzero=\LowBits(\bA\by)$; since
    $\bt_0-\bs_2=\bA\bs_1-\bt_1 2^d$, recovering $\mathbf{O}$ yields $\bA\bs_1$, hence the key. The noise $\rzero$ is
    the BCC-cleared low part: \emph{uniform, with a sharp edge}. A bounded-noise estimator exploits that edge at
    a rate linear (not quadratic) in the noise range, giving a per-key wall $\approx 2\gamma_2$:
    $2^{17.5}/2^{19}/2^{19}$ at ML-DSA-44/65/87. Since $\bt_0$ is published, $\mathbf{O}$ is directly
    computable from public data ($\mathbf{O}=\bA\bz-c\bt_1 2^d-\alpha\wone$), so the sharp-edge observable is
    the actual channel, not an idealization.
\end{itemize}
The two channels are governed by \emph{different} constants: the $\bs_1$ wall by the nonce Fisher
information $I$, the $\bs_2$ wall by the range $2\gamma_2$ of the BCC-cleared low part (FIPS-fixed). But they
are governed by a \emph{single cause}: keeping the output byte-exact forces both a revealed summed nonce ($I>0$) and a BCC
that cannot reproduce the $\bs_2$-shaping. Reshaping the nonce (wider or Gaussian) would relax the $\bs_1$
side by lowering $I$, but forfeits byte-exactness and leaves the FIPS-fixed $\gamma_2$ untouched. \emph{Two
prices, one corner.}

\begin{theorem}[The FIPS-exact threshold boundary; informal]\label{thm:boundary}
Let a threshold scheme emit FIPS-204-exact ML-DSA signatures whose revealed nonce is a sum of $\ge 2$ bounded
shares (Fisher information $I>0$). Then its per-key security is capped: the $\bs_1$ channel alone forces
$q_s\lesssim 4/(I\tau)$ ($\approx 2^{30}$ at ML-DSA-65), and if the scheme additionally omits the
$\bs_2$-output-shaping the $\bs_2$ channel adds a per-key wall of order $\gamma_2$ ($\approx 2\gamma_2$ for
the sharp-edge observable, directly computable from public data when $\bt_0$ is published). No such scheme
has $q_s$-independent unforgeability under any assumption.
\end{theorem}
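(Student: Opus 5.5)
Since the theorem is informal, the plan is to prove it as the conjunction of two explicit passive attacks, one per channel. Each attack uses only public transcripts $(c,\bz,\bh)$ and the public key, and needs no computational assumption; this gives the final clause, since a key-recovery attack succeeding with non-negligible probability at some $q_s$ rules out $q_s$-independent unforgeability. The structure is a reduction from ``security'' to ``estimation'': an adversary that recovers $\bs_1$ (or $\bA\bs_1$, hence $\bs_1$ by solving the linear system modulo $q$) signs arbitrary messages with the single-party algorithm, and by FIPS-exactness those signatures verify under any unmodified verifier. So the task reduces to bounding the sample complexity of two statistical estimation problems.

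\textbf{The $\bs_1$ channel.} Fix a coordinate $j$ of $\bs_1$. Each signature gives $\bz=\by+c\bs_1$, so each coordinate of $\bz$ is $y_i+\sum_j \pm s_{1,j}$, a sum of $\tau$ terms. Multiplying by the known rotation of $c$ and averaging across the $\tau$ coefficients of $c$ that hit $s_{1,j}$, each signature contributes $\tau$ location observations of $s_{1,j}$ under $\IH(\gamma_1,\nu)$ noise. I will use a score-based (linearized maximum-likelihood) estimator, truncated at distance $\beta$ from the edges so that the $\bz$-norm rejection does not bias it. By the standard Cram\'er--Rao/LAN argument its variance after $q_s$ signatures is about $1/(q_s\tau I)$. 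Rounding to the nearest integer recovers $s_{1,j}$ once the standard deviation falls below $1/2$, which happens when $q_s\gtrsim 4/(I\tau)$. A union bound over the $n_\ell$ coordinates adds only a $\log n_\ell$ factor. I would then compute $I$ from the discrete Irwin--Hall density: for $\nu\ge2$ the log-density has derivative of order $1/|\text{distance to edge}|$, and integrating its square up to the $\beta$-cutoff gives $I\approx\frac{2}{\gamma_1^2}\ln(\gamma_1/\beta)$, which evaluates to the stated $\approx2^{30}$.

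\textbf{The $\bs_2$ channel.} When $\bt_0$ is public, $\mathbf{O}=\bA\bz-c\bt_1 2^d-\alpha\wone$, using the $\wone$ recovered by $\UseHint$, equals $\rzero+c(\bt_0-\bs_2)=\rzero+c(\bA\bs_1-\bt_1 2^d)$. Without the $\bs_2$ rejection, and given BCC, $\rzero$ is uniform on an interval of width about $2\gamma_2$ (minus margins) with sharp edges. The unknown shift is recovered with the min/max (midrange) estimator: its error after $m$ samples is $O(\gamma_2/m)$ rather than $O(\gamma_2/\sqrt m)$. Each signature gives one observation per coordinate after undoing $c$; the shift is an integer vector, so exact recovery needs $m\approx2\gamma_2$, which yields the stated wall.

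\textbf{Main obstacle.} The hardest step is making the $\bs_1$ estimator rigorous despite two complications. First, the rejection conditions (the $\bz$-bound and BCC) condition $\by$ in ways correlated with $c\bs_1$. Second, each $z_i$ mixes $\tau$ unknowns, so the problem is really a joint linear-regression estimate. To handle this, I would first try estimating $\bs_1$ jointly, by least squares on score-transformed observations, treating the cross terms as additional bounded noise, and bounding the rejection-induced bias by the $\beta$-truncation.
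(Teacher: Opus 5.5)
Your decomposition matches the paper's: two passive, assumption-free attacks, one per channel. Your $\bs_1$ argument is the maximum-likelihood route behind Theorem~\ref{thm:lb}: a box-truncated score/ML estimator with variance $\approx 1/(q_s\tau I)$, rounded once the standard deviation drops below $1/2$, using the $\nu=2$ closed form for $I$. The step that fails is your $\bs_2$ estimator. A midrange estimator needs repeated observations of one fixed location. But coordinate $i$ of $\mathbf{O}$ is shifted by $(c\,\mathbf{u})_i$, with $\mathbf{u}=\bt_0-\bs_2$, and this shift changes with every challenge. There is also no way to ``undo $c$'' that keeps the noise sharp-edged:
\begin{itemize}
\item multiplying by $c^{-1}$ in $\Rq$ spreads $c^{-1}\rzero$ over all of $\Z_q$;
\item applying the signed selection $\mathbf{C}^{\top}$ sums $\tau$ uniforms into a smooth Irwin--Hall law with no usable edge;
\item reading a single coordinate while treating the other $\tau-1$ terms of $(c\bs_2)_i$ as noise smears the edge over their spread (RMS $\approx 18$ at ML-DSA-65, up to $(\tau-1)\eta$).
\end{itemize}
In every version the $O(\gamma_2/m)$ rate is lost, so $m\approx 2\gamma_2$ does not follow from your count.

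Compare your $\bs_1$ step, where treating the cross terms as bounded noise really is harmless. There the information lives inside the acceptance box, at distance $\ge\beta$ from the support edge. A symmetric perturbation of size $\le(\tau-1)\eta=\beta-\eta$ leaves the density there essentially unchanged (for $\nu=2$ it is piecewise linear). In the $\bs_2$ channel, by contrast, all the information sits at the sharp edge, which is exactly what those cross terms destroy.

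The repair is the route the paper takes. Subtract the public $c\bt_0$, so that $\mathbf{O}-c\bt_0=\rzero-c\bs_2$ observes the short unknown $\bs_2\in[-\eta,\eta]^{n_k}$ through sharp-edged uniform noise of half-width $\gamma_2-\beta$. Then solve for $\bs_2$ jointly as an integer-LWE / bounded-noise problem: each coordinate gives a linear inequality in all of $\bs_2$, e.g.\ defining a linear-programming feasibility region, as in the ILWE references the paper cites. The wall must then come out of that joint analysis, in which each entry of $\bs_2$ meets the edge in $\tau$ coordinates per signature. The paper itself supports $\approx 2\gamma_2$ only as an estimator-model wall checked at reduced dimension. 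You should state it at that heuristic level rather than derive it from a per-coordinate midrange bound.
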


\paragraph{A trilemma, not a defect.} Theorem~\ref{thm:boundary} says a byte-exact threshold ML-DSA cannot
have all three of \{FIPS-204-exact output, an efficient online phase, an uncapped/tight reduction\}: it may
have any two. Schemes that reveal an internally uniform nonce (Quorus~\cite{BdCE25}, Trilithium~\cite{Trilithium25})
buy $I=0$ with heavier interaction; flooding schemes (Raccoon lineage~\cite{Raccoon2024}) buy it by reshaping
the nonce, and are not FIPS-exact. TALUS keeps byte-exactness \emph{and} an efficient online phase, and pays
with a $q_s$-capped reduction. We regard characterising that price precisely as a contribution
in its own right: an explicit attack (\S\ref{sec:lb}), an optimized upper bound whose remaining distance to
the attack is an exact per-level constant we argue no black-box reduction closes (\S\ref{sec:ub}), and an
honestly placed operational cap (\S\ref{sec:refresh-cap}). The
rest of the paper does exactly that.

\section{The Boundary Clearance Condition}
\label{sec:bcc}

This section defines BCC, proves that it makes the secret-dependent rejection check redundant, and
records its acceptance rate. The per-coefficient geometry is the Dilithium hint-correctness lemma; the
contribution is the observation that the resulting condition is key- and message-independent, hence
enforceable offline, and that enforcing it lets a threshold scheme delete the distributed $\bs_2$
computation rather than perform it.

\subsection{Definition}
\begin{definition}[Boundary Clearance Condition]\label{def:bcc}
Fix public $\bA$ and a nonce $\by$, and let $\rzero=\LowBits(\bA\by,\alpha)$. The nonce satisfies $\BCC$ if
\[
  \infnorm{(\rzero)_j} < \gamma_2-\beta \quad\text{for every coefficient } j,
  \qquad \beta=\tau\eta .
\]
\end{definition}
$\BCC$ is a predicate on $(\bA,\by)$ alone: it does not depend on the secret key $(\bs_1,\bs_2,\bt_0)$ or
on the message $\mu$. This is the property that makes it useful, and we return to it in
\S\ref{sec:bcc-independence}.

\subsection{Redundancy of the secret-dependent check}
The middle rejection condition of \eqref{eq:reject} is $\infnorm{\rzero-c\bs_2}<\gamma_2-\beta$, equivalently
$\HighBits(\bw-c\bs_2)=\HighBits(\bw)=\wone$. The following is the standard Dilithium hint-correctness
lemma~\cite{DKLLSS18,FIPS204}, restated for our use.

\begin{lemma}[Clearance implies invariance]\label{lem:clearance}
Let $\bw=\bA\by$ and $\rzero=\LowBits(\bw,\alpha)$. If $\infnorm{(\rzero)_j}<\gamma_2-\beta$ for all $j$, then
for every admissible challenge $c$ (with $\tau$ nonzero $\pm1$ coefficients) and every secret with
$\infnorm{\bs_2}\le\eta$,
\[
  \HighBits(\bw-c\bs_2,\alpha)=\HighBits(\bw,\alpha).
\]
\end{lemma}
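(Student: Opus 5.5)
The plan is a coefficient-wise argument, since $\HighBits$ and $\LowBits$ act on each integer coefficient separately. Fix a coefficient index $j$, write $w=(\bw)_j$, and let $\Decompose(w,\alpha)=(w_1,r_0)$. By the definition of $\Decompose$, this gives $w=w_1\alpha+r_0$ with $r_0\in(-\alpha/2,\alpha/2]$ and $|r_0|<\gamma_2-\beta$. Set $e=(c\bs_2)_j$.

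\textbf{Step 1: bound $e$.} Each coefficient of $c\bs_2$ is a signed sum of exactly $\tau$ coefficients of $\bs_2$, because $c$ has $\tau$ entries in $\{\pm1\}$ and multiplication modulo $X^{256}+1$ only permutes and negates coefficients. Hence $|e|\le\tau\eta=\beta$, which is the bound $\infnorm{c\bs_2}\le\beta$ already used in the introduction.

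\textbf{Step 2: decompose $w-e$.} We have
\[
w-e=w_1\alpha+(r_0-e), \qquad |r_0-e|<(\gamma_2-\beta)+\beta=\gamma_2=\alpha/2 .
\]
So $r_0-e$ lies strictly inside $(-\alpha/2,\alpha/2)$. By uniqueness of the centered representation, $\Decompose(w-e,\alpha)=(w_1,r_0-e)$, and therefore $\HighBits(w-e)=w_1$. Applying this to every $j$ gives the vector statement.

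\textbf{Main obstacle: the FIPS-204 boundary convention.} $\Decompose$ works on representatives in $\Z_q$, and $q-1=16\alpha$. When $w_1\alpha+r_0$ lands near $q-1$, FIPS sets $w_1=0$ and subtracts one from $r_0$; this is the wraparound case. I would handle it in two steps.

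First, note that BCC is stated on $r_0$ as FIPS itself outputs it. The strict margin $|r_0-e|<\gamma_2$ then stays away from the one exceptional residue class, since the wraparound only arises when the centered remainder equals $-\alpha/2$ or its neighbour.

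Second, check that the identity $w-e\equiv w_1\alpha+(r_0-e)\pmod q$, with $w_1\in\{0,\dots,15\}$ and $|r_0-e|<\alpha/2$, forces the canonical output. The map $(w_1,r)\mapsto w_1\alpha+r \bmod q$ is injective on $\{0,\dots,15\}\times(-\alpha/2,\alpha/2)$, and it agrees with FIPS $\Decompose$ away from the edge case. When $w_1=0$, the value $w_1\alpha+r$ may be negative, but its reduction mod $q$ still decomposes with high part $0$, because $q-1-\alpha/2<w-e+q$. Confirming this last point is a short case check, and it is exactly the content of the cited Dilithium hint lemma.
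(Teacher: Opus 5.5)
Your proof is correct and follows the paper's argument: coefficient-wise $|(c\bs_2)_j|\le\tau\eta=\beta$, so a low part with $|(\rzero)_j|<\gamma_2-\beta$ cannot be pushed across a stripe boundary at $\pm\gamma_2$, and the high part is unchanged. Your injectivity treatment of the FIPS-204 mod-$q$ wraparound is more careful than the paper's proof, which does not address it; two small fixes: write $m=(q-1)/\alpha$ (which is $44$ at ML-DSA-44) instead of $16$, and note that the wrap branch of $\Decompose$ fires for every representative in $(q-1-\gamma_2,\,q-1]$, not only near remainder $-\alpha/2$, a case your injectivity step already covers.
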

\begin{proof}
Coefficient-wise, $\infnorm{c\bs_2}\le\tau\eta=\beta$, so each coefficient of $c\bs_2$ moves the
corresponding coefficient of $\bw$ by at most $\beta$. Under $\BCC$ the low part $(\rzero)_j$ is more than
$\beta$ from either stripe boundary $\pm\gamma_2$, so adding a shift of magnitude $\le\beta$ cannot cross a
boundary; the high part is unchanged. This is exactly the argument by which the Dilithium hint is correct;
we use it in the opposite direction, to certify when the low-bits check is automatically satisfied.
\end{proof}

\begin{corollary}[The $\bs_2$-check is redundant under BCC]\label{cor:redundant}
For any BCC-passing nonce, $\HighBits(\bw-c\bs_2)=\wone$ for every admissible $c,\bs_2$
(Lemma~\ref{lem:clearance}), equivalently $\infnorm{\rzero-c\bs_2}<\gamma_2$. A scheme that signs only with
BCC-passing nonces therefore substitutes BCC for this middle rejection condition and never evaluates it, and
$\bs_2$ affects neither the high bits $\wone$, the challenge $c=\Hash(\mu\Vert\wone)$, nor the response
$\bz=\by+c\bs_1$. This is \emph{not} the ML-DSA norm condition $\infnorm{\rzero-c\bs_2}<\gamma_2-\beta$:
TALUS's middle accept-region is a $c\bs_2$-shifted variant of ML-DSA's, accepting a thin $\beta$-shell that
ML-DSA rejects and rejecting one it accepts; this shell is the source of the residual $\bs_2$-channel of
\S\ref{sec:security}. BCC removes only this middle check; $\bs_2$ still enters the online hint-weight
rejection.
\end{corollary}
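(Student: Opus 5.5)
The plan is to derive Corollary~\ref{cor:redundant} directly from Lemma~\ref{lem:clearance} in three steps. First the high-bits invariance and its low-bits reformulation. Then the observation that $\bs_2$ does not enter $\wone$, $c$, or $\bz$. Finally the precise comparison between the TALUS middle accept-region and the ML-DSA norm condition.

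\textbf{Step 1: high-bits invariance and its reformulation.} Given a BCC-passing nonce, Lemma~\ref{lem:clearance} yields $\HighBits(\bw-c\bs_2)=\HighBits(\bw)=\wone$ for every admissible $c$ and every $\bs_2$ with $\infnorm{\bs_2}\le\eta$. To turn this into a statement about low bits, I work coefficientwise with $\Decompose$. Write $w_j=\wone_j\alpha+(\rzero)_j$. Since $|(\rzero)_j|<\gamma_2-\beta$ and $|(c\bs_2)_j|\le\beta$, we get $|(\rzero)_j-(c\bs_2)_j|<\gamma_2$. By uniqueness of the decomposition with remainder in $(-\gamma_2,\gamma_2]$, this gives $\LowBits(\bw-c\bs_2)=\rzero-c\bs_2$ and the same high part. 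So both forms of the statement come from a single triangle-inequality computation. I must also check that the $q-1$ wraparound convention in $\Decompose$ cannot interfere. I would argue this from the same clearance: the strictly interior low part means we never land in the special top stripe where FIPS~204 folds $\wone$ to $0$.

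\textbf{Step 2: $\bs_2$ does not enter $\wone$, $c$, or $\bz$.} This is immediate from the signing definitions. We have $\wone=\HighBits(\bA\by)$, $c=\mathsf{SampleInBall}(\Hash(\mu\Vert\wone))$, and $\bz=\by+c\bs_1$. None of these mention $\bs_2$, and BCC itself is a predicate on $(\bA,\by)$ alone (Definition~\ref{def:bcc}). Hence a scheme that only uses BCC-passing nonces never needs to evaluate the middle condition to guarantee hint correctness.

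\textbf{Step 3: the accept-region is not the ML-DSA norm condition.} To show that TALUS's region differs from $\infnorm{\rzero-c\bs_2}<\gamma_2-\beta$, I would exhibit two witnesses.
\begin{itemize}[nosep,leftmargin=1.6em]
  \item \emph{A $\beta$-shell TALUS accepts but ML-DSA rejects.} Take a coefficient with $(\rzero)_j$ just below $\gamma_2-\beta$ and a shift $(c\bs_2)_j$ of opposite sign and size near $\beta$. Then $|(\rzero)_j-(c\bs_2)_j|\in[\gamma_2-\beta,\gamma_2)$: BCC passes, but ML-DSA rejects.
  \item \emph{A shell ML-DSA accepts but TALUS rejects.} Take $(\rzero)_j\in[\gamma_2-\beta,\gamma_2-\beta+|(c\bs_2)_j|)$ with a shift of the same sign. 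Then the ML-DSA norm is below $\gamma_2-\beta$, but BCC fails.
\end{itemize}
These show the two regions are $c\bs_2$-shifted copies of each other, which is the source of the residual channel deferred to \S\ref{sec:security}. The final sentence of the statement, that $\bs_2$ still enters the hint-weight check, follows because $\bh=\MakeHint(-c\bt_0,\bw-c\bs_2+c\bt_0)$ depends on $\bs_2$.

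The main obstacle is the boundary convention in Step~1: the wrap at $q-1$ and the half-open interval $(-\gamma_2,\gamma_2]$. I would handle this by quoting the FIPS~204 hint-correctness argument, which the paper explicitly treats as standard.
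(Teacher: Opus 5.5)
Your proposal is correct and takes essentially the paper's route. The corollary is read off from three ingredients:
\begin{itemize}
\item Lemma~\ref{lem:clearance} for the high bits, together with the triangle inequality $|(\rzero)_j-(c\bs_2)_j|<(\gamma_2-\beta)+\beta$ for the low-bits form;
\item the fact that $\wone$, $c$ and $\bz$ are $\bs_2$-free by definition;
\item the two $\beta$-shells as the symmetric difference of $\{|(\rzero)_j|<\gamma_2-\beta\}$ and its $c\bs_2$-translate, which is exactly what your witnesses describe.
\end{itemize}

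One correction to your wrap remark. BCC-passing coefficients \emph{can} lie in the FIPS-204 folded top stripe: for example, $w_j=q-1$ decomposes to $(0,-1)$ and passes BCC. So ``we never land there'' is false. Nothing depends on it, however. The statement's low-bits form concerns $\rzero=\LowBits(\bw)$, not $\LowBits(\bw-c\bs_2)$. If you do want $\LowBits(\bw-c\bs_2)=\rzero-c\bs_2$, take the high-bits invariance from Lemma~\ref{lem:clearance} and use $\bw\equiv\alpha\wone+\rzero \pmod q$. Both sides are then congruent mod $q$ and lie in $[-\gamma_2,\gamma_2]$, so they are equal.
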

Corollary~\ref{cor:redundant} is what removes $\bs_2$ from the online protocol: the one rejection condition
that a threshold scheme would otherwise evaluate on secret shares is, on the BCC-passing subset, satisfied
unconditionally. This redundancy concerns the \emph{online computation of $\bs_2$}, which BCC removes; it is
not a claim of $\bs_2$-secrecy for the emitted signature. The hint still carries a residual
$\bs_2$-dependence, a $q_s$-bounded leakage channel accounted for in \S\ref{sec:security}. Thus BCC provides
hint-correctness, not $\bs_2$-independence of the output.

\subsection{Key- and message-independence, and where the value lies}
\label{sec:bcc-independence}
$\BCC$ depends only on $\bA\by$. Both inputs are independent of the long-term secret: $\bA$ is public, and
$\by$ is a fresh ephemeral nonce, uncorrelated with $(\bs_1,\bs_2)$. Hence the BCC bit reveals nothing about
the key, and, being independent of $\mu$, can be evaluated before any message is known. This is what lets
BCC filtering run in preprocessing. We emphasize that $\by$ itself is secret; it is BCC's independence from
the key and message, not any publicity of $\by$, that makes it a sound offline filter.

\paragraph{BCC is a predicate on the aggregate, evaluated privately.} A point that matters in the
distributed profile: $\BCC$ is a predicate on the \emph{aggregate} nonce $\by=\sum_h\by_h$, not on any
party's individual share. It is not, and cannot be, checked per-party: $\rzero=\LowBits(\bA\by)$ is a
non-linear function of $\by$, so $\BCC(\by_h)$ says nothing about $\BCC(\sum_h\by_h)$. Since no party holds
$\by$, and $\by$ is not known ahead of time, offline BCC filtering must be a \emph{privacy-preserving} secure
computation of the aggregate BCC bit on shares, revealing only the single (key-independent) pass/fail bit and
never $\by$, $\bw$, or $\rzero$. This is precisely what the Carry Elimination Framework provides
(\S\ref{sec:carry}); the enclave profile instead evaluates the same predicate on the aggregate held
in the clear inside the enclave (\S\ref{sec:tee}).

\paragraph{No single-party incentive.} $\BCC$ offers no cost benefit to a single signer: there the
$\rzero$-check is a local, essentially free comparison, so pre-selecting nonces to avoid it only wastes the
$\approx\!3\times$ oversampling below. The value is intrinsic to the threshold setting, where that same
check costs a distributed, multi-round MPC on $\bs_2$. Offline nonce selection for single-party Dilithium
has been studied~\cite{PreRej24}, but it filters the \emph{key} at key generation for a different rejection
condition and changes the key distribution; BCC filters \emph{nonces}, keeps the exact FIPS-204 nonce range
$\gamma_1$ and the key distribution, and eliminates the $\bs_2$ computation. This asymmetry (value only under threshold) is the
reason the observation has been available yet unused.

\subsection{Acceptance rate}
\begin{proposition}[BCC rate]\label{prop:bccrate}
For a uniform single-coefficient low part, $\Pr[\,|(\rzero)_j|<\gamma_2-\beta\,]=1-\beta/\gamma_2$, and over
$n_k$ independent coefficients the BCC acceptance probability is
$p_{\BCC}=(1-\beta/\gamma_2)^{n_k}$.
\end{proposition}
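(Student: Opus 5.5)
The argument is a counting step followed by an independence step. Model $(\rzero)_j$ as uniform on the $\alpha=2\gamma_2$ residues of the centred interval $(-\gamma_2,\gamma_2]$. This is the range fixed by $\Decompose$ with the FIPS-204 convention.

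\emph{Single coefficient.} The event $|(\rzero)_j|<\gamma_2-\beta$ holds exactly for the integers in $(-(\gamma_2-\beta),\gamma_2-\beta)$. There are $2(\gamma_2-\beta)-1$ of them. Dividing by $2\gamma_2$ gives
\[
\Pr\bigl[\,|(\rzero)_j|<\gamma_2-\beta\,\bigr]=\frac{2(\gamma_2-\beta)-1}{2\gamma_2}=1-\frac{\beta}{\gamma_2}-\frac{1}{2\gamma_2}.
\]
The term $1/(2\gamma_2)\approx 2^{-19}$ comes from integer endpoints. It is absorbed in the stated formula, which is exact for the continuous uniform idealisation of the low part. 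I would note this discrepancy explicitly rather than hide it. Over $n_k$ coefficients it changes $p_{\BCC}$ by a factor of roughly $e^{-n_k/(2\gamma_2)}$, which is below a $0.3\%$ relative error at ML-DSA-65.

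One detail needs care. The wrap-around residue class near $q-1$ makes the last stripe of $\Decompose$ slightly shorter. That class has mass $O(1/q)$ per coefficient, so I would treat it as a negligible correction in the same way.

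\emph{Product over coefficients.} The proposition assumes the $n_k$ low parts are independent. Under that assumption BCC, being the intersection of the per-coefficient events, has probability equal to the product of the per-coefficient probabilities, which is $(1-\beta/\gamma_2)^{n_k}$.

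To justify the heuristic, observe that for a fixed full-rank $\bA$ and uniform $\by$, the vector $\bw=\bA\by$ is close to uniform on $\Rq^k$. When $\by$ is uniform on the full ring this is exact whenever $\bA$ has a right inverse, since then $\bA$ maps onto $\Rq^k$. The bounded range $[-\gamma_1+1,\gamma_1]$ makes the statement only approximate, and I would defer this modelling point to the standard Dilithium analysis rather than prove a fresh leftover-hash bound.

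\emph{Numerical check.} As a sanity check I would substitute the ML-DSA-65 parameters: $\beta/\gamma_2=196/261888$ and $n_k=1536$. This gives $p_{\BCC}\approx e^{-1.1496}\approx 0.317$, matching the $31.7\%$ quoted earlier.

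\emph{Main obstacle.} The main difficulty is the uniformity-and-independence modelling of $\LowBits(\bA\by)$ under the bounded, and in TALUS summed Irwin--Hall, nonce, not the calculation itself. I would state the proposition explicitly as conditional on the uniform model, exactly as its hypothesis reads.
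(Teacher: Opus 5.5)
Your core argument is the paper's own. The paper justifies Proposition~\ref{prop:bccrate} only by three things: the interval-length ratio for one uniform low coefficient, the product over the $n_k$ coefficients under the stated independence hypothesis, and a Monte-Carlo check.

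Your integer count $2(\gamma_2-\beta)-1$ is correct; it is the count Dilithium uses in its own rejection-rate estimate. So you are right that the displayed identity $1-\beta/\gamma_2$ is exact only in the continuous idealisation. The discrete factor $\bigl(1-\tfrac{1}{2(\gamma_2-\beta)}\bigr)^{n_k}$ shifts $p_{\BCC}$ by about $0.29\%$ relative at ML-DSA-65. At ML-DSA-44 the shift is about $0.5\%$, so your sub-$0.3\%$ bound holds only at level~3. For a coefficient uniform on $\Z_q$, the $\Decompose$ wrap-around folds the top half-stripe into stripe~$0$. It only multiplies the per-coefficient probability by $1-1/q$, as you say.

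The part to delete is your justification of the uniformity and independence model. At ML-DSA-65 and ML-DSA-87 the matrix $\bA\in\Rq^{k\times\ell}$ has $k>\ell$. It is injective, which the paper relies on, but it has no right inverse and cannot map onto $\Rq^k$, since $|\Rq^\ell|=q^{n_\ell}<q^{n_k}$. Even for $\by$ uniform on all of $\Rq^\ell$, $\bw=\bA\by$ is uniform only on the proper submodule $\mathrm{Im}(\bA)$.

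For the boxed nonce the problem holds at every level. There $\bw$ takes at most $(2\gamma_1)^{n_\ell}\ll q^{n_k}$ values, so it is far from uniform on $\Rq^k$. Given the public $\bA$, one can even recover $\by$ from $\bw$ and check that it is short.

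What you can defend is only the heuristic that each low coefficient is close to uniform and that the $n_k$ clearance events behave as if independent. This is the same heuristic Dilithium uses, and it is why the paper supports $p_{\BCC}$ empirically. Since the proposition takes independence as a hypothesis, your proof of the statement itself is unaffected. But that paragraph asserts something false and should be replaced by this heuristic.
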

Numerically $p_{\BCC}=31.7\%$ at ML-DSA-65, and $43.2\%$ / $39.1\%$ at levels~2 / 5, matching a Monte-Carlo
estimate to within $0.3\%$. A usable nonce is therefore found after $1/p_{\BCC}\approx 3.15$ offline trials
at level~3. \emph{These are offline nonce trials, done before any message is known; they are not online
interaction rounds}, a distinction we keep throughout (the online round count is stated separately in
\S\ref{sec:tee}--\S\ref{sec:mpc}).

\section{Carry Elimination}
\label{sec:carry}

BCC removes the $\bs_2$ computation, but the parties still need the commitment $\wone=\HighBits(\bA\by)$ to
form the challenge $c=\Hash(\mu\Vert\wone)$. In the distributed profile no party holds $\by$, so $\wone$
must be computed on secret shares. This section gives that computation, the Carry Elimination Framework
(CEF), and shows that some form of carry resolution is unavoidable. We begin with a protocol overview that
ties together the full signing lifecycle (detailed in the subsections that follow and in the algorithms of
Appendix~\ref{app:cef}).

\begin{algorithm}[t]
\caption{Protocol overview: TALUS signing lifecycle (MPC profile; TEE differences noted).}\label{alg:overview}
\begin{algorithmic}[1]
\Statex \textbf{Phase 0 --- Key generation} (once per key; DKG details in the extended version)
\State dealer-free DKG $\to$ Shamir shares $\sthr{\bs_1},\sthr{\bs_2}$; publish $\pk=(\rho,\bt_1,\bt_0)$
  with $\bt=\bA\bs_1+\bs_2$
\Statex
\Statex \textbf{Phase 1 --- Preprocess} (offline, repeated; Alg.~\ref{alg:preprocess}--\ref{alg:cef})
\State nonce DKG: each committee member $h$ samples $\hat\by_h$; Shamir shares sent to all $N$ parties give $\sthr{\by}$
\State each $h$: $\hat\bw_h\gets\bA\hat\by_h$;\
  $(w_{1,h},r_{0,h})\gets\Decompose(\hat\bw_h,\alpha)$
\State \textbf{CEF} (Alg.~\ref{alg:cef}): full-stripe masked broadcast of low parts $\to$ carry
  $\kappa$; combine with shared high parts $\to$ $\wone=\HighBits(\bA\by)$, exact
  \Comment{$\by,\rzero$ never opened}
\State \textbf{BCC filter}: on shares, check $\infnorm{(\rzero)_j}<\gamma_2-\beta$; open one bit;
  discard if fail ($\approx 69\%$ at level~3; \S\ref{sec:bcc})
\State \textbf{[TEE]} enclave holds $\by$ in the clear; computes $\wone$ and BCC directly
\State commit: $\mathsf{com}\gets\Hash(\wone\Vert r)$; pool $(\wone,\mathsf{com},r,\{\hat\by_i\}_{i\in[N]})$
\Statex
\Statex \textbf{Phase 2 --- Sign}$(\mu)$ (online; Alg.~\ref{alg:sign})
\State \textbf{[MPC]} Round 1: open $\mathsf{com}$; all learn $\wone$
\State all signers: $c\gets\mathsf{SampleInBall}(\Hash(\mu\Vert\wone))$
\State Round 2: each signer $h$ sends $\bz_h\gets\hat\by_h+c\cdot\bs_{1,h}$
\State coordinator: $\bz\gets\sum_h\lambda_h\bz_h$; $\bh\gets\MakeHint(\ldots)$;
  verify; emit $\sigma=(c,\bz,\bh)$
\State \textbf{[TEE]} Rounds 1--2 collapse: enclave releases $\wone$ and aggregates $\bz$ internally;
  only accepted $\sigma$ leaves
\end{algorithmic}
\end{algorithm}

\subsection{What is easy and what is hard}
The public matrix $\bA=\mathsf{ExpandA}(\rho)$ is derived from the public $\rho$, so it is known to all
parties (and is injective with overwhelming probability over $\rho$; see Appendix~\ref{app:trilemma}).
Consequently $\bw=\bA\by=\bA\sum_h\by_h=\sum_h\bA\by_h$ is \emph{linear} in the shares: each party
locally forms $\bA\by_h$, and the additive combination gives a sharing $\sthr{\bw}$ of the aggregate with no
interaction. What is \emph{not} linear is $\Decompose(\bw)$: $\HighBits$ does not distribute over the sum,
\[
  \textstyle\sum_h \HighBits(\bA\by_h)\ \ne\ \HighBits(\sum_h \bA\by_h)
\]
in general, because the low parts $\LowBits(\bA\by_h)$ can sum past a stripe boundary and carry into the high
part. Resolving that carry on shares, without revealing the low sum, is the technical core.

\begin{proposition}[Carry--privacy]\label{prop:carrypriv}
Revealing the aggregate low part $\sum_h\LowBits(\bA\by_h) \bmod\alpha$ (equivalently the exact $\rzero$) is
unsafe: from a single signature it yields a noise-free lattice sample, from which $\bs_1$ is recovered by
linear algebra. Hence the low sum must never be opened \emph{in the clear}. CEF instead opens only a
\emph{full-stripe--masked} low sum $B=\sum_h(r_{0,h}+\rho_h)$, each $\rho_h$ drawn uniformly on the whole
stripe $[0,\alpha)$, so that $B\bmod\alpha$ is a \emph{one-time pad} and the opening leaks nothing about
$\rzero$ beyond the unmasked coarse quotient $\lfloor B/\alpha\rfloor$; only the key-independent stripe-carry,
modulus-wrap, and FIPS boundary bits are then extracted (on shares, never opened). That residual quotient has committee-independent per-key wall
$\ge 2^{28.2}$ across levels ($2^{28.2}/2^{30.8}/2^{30.5}$), above the $\bs_1$ wall, hence non-binding
(\S\ref{sec:security}, Appendix~\ref{app:cef}); we disclose it rather than treat it as zero.
\end{proposition}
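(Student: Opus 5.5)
The proposition has three parts, and I will prove them in this order: the attack from an exposed low sum, the one-time-pad property of the masked opening, and the accounting for the residual quotient.

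\textbf{The attack from the exposed low sum.} Suppose the exact $\rzero=\LowBits(\bA\by,\alpha)$ of a signed nonce is revealed. Then $\bw=\bA\by=\alpha\wone+\rzero$ is known exactly, because $\wone$ is public (it determines $c$ and is recovered by verification). The signature gives $\bz=\by+c\bs_1$, so
\[
\bA\bz-\bw=c\bA\bs_1 .
\]
This is an equation over $\Rq$ with no noise term. Invertibility of $c$ in $\Rq$ is not guaranteed. So I would not divide by $c$. Instead I would read $c\bA\bs_1=\bA(c\bs_1)$ and use injectivity of $\bA$ (Appendix~\ref{app:trilemma}) to solve the linear system $\bA\mathbf{x}=\bA\bz-\bw$ for $\mathbf{x}=c\bs_1$. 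Since $k\ge\ell$, this is Gaussian elimination over $\Z_q$ in the NTT domain. Alternatively, one can bypass $\bA$: $\by=\bz-c\bs_1$ and $\bw$ together pin down $\by$ whenever $\bA$ is injective, which again gives $c\bs_1=\bz-\by$. To recover $\bs_1$ from $c\bs_1$, I would use that $c$ is invertible in $\Rq$ with overwhelming probability. In the NTT domain $c$ is a unit unless some NTT slot vanishes, and the vanishing probability is bounded by a union over $256$ slots of roughly $1/q$ each. If $c$ happens not to be invertible, a second signature with a fresh $c$ fixes the remaining slots. This establishes ``from a single signature $\bs_1$ is recovered by linear algebra.''

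\textbf{The one-time pad.} Condition on the shares. The value $B=\sum_h(r_{0,h}+\rho_h)$ has
\[
B\bmod\alpha=\Bigl(\textstyle\sum_h r_{0,h}+\sum_h\rho_h\Bigr)\bmod\alpha .
\]
At least one $\rho_h$ belongs to an honest party and is uniform on $[0,\alpha)$, independent of everything else. Adding a uniform element of $\Z_\alpha$ to any value gives a uniform element, so $B\bmod\alpha$ is uniform and independent of $\sum_h r_{0,h}$, and hence of $\rzero$. The only remaining information in $B$ is the quotient $\lfloor B/\alpha\rfloor$. It lies in a range of size about $2\nu$ and is correlated with $\rzero$ through the carries of the summed shares. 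The carry, wrap and boundary bits are then computed on shares by Algorithm~\ref{alg:cef} and never opened, so by composition in the honest-majority model they add nothing to the adversary's view.

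\textbf{The residual quotient, which I expect to be the main obstacle.} Quantifying the leakage from the quotient is a statistical-estimation bound, not an exact argument. My plan is to model each coordinate's quotient as a discrete observation whose law depends on $\rzero_j$ only through the probability that the sum of $\nu$ uniforms crosses a multiple of $\alpha$. I would compute the per-observation Fisher information this gives about the $\bs_1$-dependent shift, maximize it over $\nu$ (which is what makes the bound committee-independent), and convert it to a wall $q_s\approx 4/(I'\tau)$ in the same way as for the $\bs_1$ channel (Theorem~\ref{thm:lb}). The per-level values $2^{28.2}/2^{30.8}/2^{30.5}$ would come from evaluating this numerically with the parameters of Table~\ref{tab:params}, carried out in Appendix~\ref{app:cef}.
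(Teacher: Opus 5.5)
Your first two parts agree with the paper. For the attack, the paper likewise recovers $\bw=\alpha\wone+\rzero$, uses injectivity of $\bA$ to get $\by$, and sets $\bs_1=c^{-1}(\bz-\by)$; you are more careful than the paper about invertibility of $c$. Your one-time-pad step is Lemma~\ref{lem:fuzzy}(ii). You take part~(i) of that lemma for granted: pairwise zero-sum masks make $B$ the only low-part value that is opened.

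One caution, which applies equally to that lemma. Uniformity of $B\bmod\alpha$ does not by itself make $\lfloor B/\alpha\rfloor$ the only informative part of $B$. Conditioned on the quotient, the remainder still correlates with the low sum: with one mask, $B=r+\rho$ confines $r$ to $(B-\alpha,B]$, which the quotient alone does not. You inherit this step from the paper rather than introduce it.

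The genuine gap is the third part. It carries all of the proposition's quantitative content, and you leave it as a plan whose numbers you attribute to Appendix~\ref{app:cef}. That appendix computes no Fisher information of the quotient's law. Instead it:
\begin{itemize}
\item treats $\lfloor B/\alpha\rfloor$ as a variance-$\alpha^2/12$ observation (a uniform draw on one stripe) with no sharp edge, so that only the quadratic moment estimator applies;
\item inserts this variance into the wall $4\sigma^2/\tau$, giving the closed form $4(\alpha^2/12)/\tau=\alpha^2/(3\tau)$.
\end{itemize}
This is $2^{28.2}$ for $\alpha=190{,}464$, $\tau=39$, and $2^{30.8}$ resp.\ $2^{30.5}$ for $\alpha=523{,}776$ with $\tau=49$ resp.\ $60$, exactly the stated values. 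Committee-independence likewise comes from the pad: one honest full-stripe mask fixes the residual at stripe scale for any $|C|$. It does not come from a maximum over $\nu$.

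Your route would not return these constants. The quotient depends on the low part through crossing probabilities. A Bernoulli observation with $p=b/\alpha$ has Fisher information $1/(b(\alpha-b))$ about a shift of $b$. Averaged over $b$, with cutoff at the shift scale $\beta$, this is about $2\ln(\alpha/\beta)/\alpha^2$. Since $2\ln(\alpha/\beta)\approx 15.6$--$16.8$ exceeds the $12$ implicit in $\alpha^2/(3\tau)$ at every level, your wall $4/(I'\tau)$ falls below the stated one even in this simple model (about $2^{27.8}$ at ML-DSA-44). So carrying out your plan would not establish the proposition's numbers. They come from the moment-estimator closed form, which your proposal never uses.
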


\subsection{The framework}
A modulus identity of ML-DSA makes the carry cheap. Because $q-1$ is an exact multiple of the stripe width
($q=m\alpha+1$), the high part takes one of
$m=(q-1)/\alpha$ values ($m=16$ at levels~3 and~5; $m=44$ at level~2, handled by parameterizing the
reduction). With the full-stripe masks of Proposition~\ref{prop:carrypriv} the masked low sum can wrap
several stripes, so the secure carry is $\lceil\log_2|C|\rceil+1$ bits per coefficient (how many stripe
boundaries the masked sum crossed) rather than the single bit a magnitude-bounded mask would give; this
widening is exactly what makes the mask a true one-time pad while keeping $\wone$ byte-exact. CEF computes,
on shares:
\begin{enumerate}[nosep,leftmargin=1.6em]
  \item the shared low-sum in carry-save form, without opening it;
  \item the aggregate stripe carry and modulus-wrap (the multi-bit $\kappa^\rho$, the sub-stripe borrow $b$,
    and the mod-$q$ wrap count $k$; Appendix~\ref{app:cef}) by secure comparisons on the shares;
  \item the FIPS round-to-nearest centering bit $\delta$ for the centered representative,
\end{enumerate}
and combines them to obtain $\sthr{\wone}=\HighBits(\bA\by)$ on shares. The round and communication cost is
$O(\log N)$ per signature and is confined to preprocessing; the full protocol, including the $T=2$
specialization and the $m=44$ case, is in Appendix~\ref{app:cef}.

\paragraph{BCC is a by-product of the same computation.} Once CEF has produced $\sthr{\wone}$, the low part
follows linearly: $\sthr{\rzero}=\sthr{\bw}-\alpha\cdot\sthr{\wone}$. The aggregate BCC bit is then the
conjunction $\bigwedge_j\big[\,|(\rzero)_j|<\gamma_2-\beta\,\big]$, evaluated by the same secret-shared
comparisons; only that single (key-independent) bit is opened, and a failing candidate is discarded.
Thus one carry resolution yields both the commitment $\wone$ (used online) and the offline BCC filter; the
exact low sum and $\by$ are never opened in the clear. Only the masked (fuzzy) $\rzero$ is, per
Proposition~\ref{prop:carrypriv}, and the full product $\bw$ is never reconstructed.
Because the opened bit is key-independent it leaks nothing about the key per candidate; the residual
information from \emph{which} candidates fail, accumulated over many signatures, is exactly the
$q_s$-bounded regime of \S\ref{sec:security} and is capped by the mandatory key rotation.

\subsection{Unavoidability}
CEF adds an offline carry resolution; the following says no FIPS-compatible scheme can avoid some such step
by a one-round homomorphic shortcut. We state it here and prove it in Appendix~\ref{app:trilemma}.

\begin{theorem}[Carry resolution is necessary; informal]\label{thm:carry-necessary}
There is no one-round protocol in the plaintext linear-aggregation model (parties broadcast a linear
image $\varphi(\by_h)$ and everyone computes $\wone$ from $\sum_h\varphi(\by_h)$ by a public map, with no
trusted reconstruction, no extra rounds, and no rejection) that computes $\wone=\HighBits(\bA\by)$ while
hiding $\by$. Any FIPS-compatible lattice threshold scheme must therefore use one of: trusted/enclave or
MPC reconstruction of the high bits, extra interaction, or a rejection step.
\end{theorem}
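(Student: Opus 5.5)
The plan is to argue by contradiction inside the plaintext linear-aggregation model. Fix a public linear map $\varphi$ (over $\Z_q$, possibly after $\Rq$-linear embedding) and a public decoding map $D$ such that, for every admissible share vector, $D\bigl(\sum_h\varphi(\by_h)\bigr)=\HighBits(\bA\sum_h\by_h,\alpha)$. Since $\varphi$ is linear, $\sum_h\varphi(\by_h)=\varphi(\by)$, so the protocol's output is a function $D\circ\varphi$ of the aggregate alone, and the broadcast transcript is $\{\varphi(\by_h)\}_h$. The argument then splits into two cases according to the kernel of $\varphi$ restricted to the nonce box, and shows that in each case either correctness fails or $\by$ is not hidden.

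\emph{Case 1: $\varphi$ is injective on the relevant domain.} Then $\varphi(\by)$ determines $\by$ by linear algebra, and even a single broadcast $\varphi(\by_h)$ determines $\by_h$. This directly violates hiding. In particular, with $N=2$ the corrupt party learns $\by$ from its own share and the honest broadcast.

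\emph{Case 2: $\ker\varphi$ contains a nonzero $\mathbf{v}$ with $\by$ and $\by+\mathbf{v}$ both in the admissible range.} Correctness then forces $\HighBits(\bA\by)=\HighBits(\bA(\by+\mathbf{v}))$ for all such pairs. Iterating along $\mathbf{v}$, or along a short kernel vector obtained from a pigeonhole argument on the box $[-\gamma_1{+}1,\gamma_1]^{n_\ell}$ versus the image size, $\HighBits(\bA\by)$ would have to stay constant along a line segment of $\bA$-images. Because $\bA$ is injective (Appendix~\ref{app:trilemma}), $\bA\mathbf{v}\neq 0$, so some coordinate of $\bA(\by+t\mathbf{v})$ moves by a nonzero amount per step. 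Choose $\by$ with that coordinate's low part within one step of a stripe boundary $\pm\gamma_2$; such $\by$ exist in the box with positive density, as in Proposition~\ref{prop:bccrate}. The next step then crosses the boundary, and $\HighBits$ changes, which contradicts correctness. This is the formal content of the non-homomorphy $\sum_h\HighBits(\bA\by_h)\neq\HighBits(\sum_h\bA\by_h)$.

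To connect the two cases, I will quantify "hiding" as: the transcript $\varphi(\by)$ leaves at least two candidate aggregates. So hiding forces a nontrivial kernel (Case 2), and correctness forces the kernel to be trivial on the box (Case 1).

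The main obstacle is Case 2 when the kernel vectors are long. In that case $\by$ and $\by+\mathbf{v}$ need not both lie in the box, and $\bA\mathbf{v}$ mod $q$ could be large, jumping whole stripes. I would handle this by a counting argument in place of a single line. Hiding requires that the fibers of $\varphi$ on the box have size $\ge 2$ for a noticeable fraction of $\by$. Correctness requires each fiber to lie in one $\HighBits$ level set. Since $\bA\by$ is close to uniform mod $q$, two independent elements of a fiber differ in $\HighBits$ with probability bounded away from zero (about $1-1/m$ per coordinate). So a fiber containing two "generic" points violates correctness.

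Finally, the closing sentence of the theorem follows by contraposition. The three excluded features (trusted reconstruction, extra rounds, rejection) are exactly the hypotheses of the model. A rejection step, for instance, permits failure on near-boundary $\by$, which is what BCC and CEF exploit.
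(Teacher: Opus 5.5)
Your overall structure matches the paper's: hiding forces $\varphi$ to be non-injective, and correctness forces $\ker\varphi$ to be trivial. Your Case~1 is essentially the paper's final leak step. The gap is in Case~2, which you run on the box $[-\gamma_1{+}1,\gamma_1]^{n_\ell}$.

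\textbf{Where Case~2 fails.} On the box, a kernel vector $\mathbf v$ constrains only the pairs with both $\by$ and $\by+\mathbf v$ inside the box. For a long $\mathbf v$, this pair set shrinks to a sub-box of side $2\gamma_1-|v_i|$ in coordinate $i$. It is a single point when every $|v_i|=2\gamma_1-1$. Nothing forces any of these few pairs to straddle a stripe boundary, so correctness need not fail. Your appeal to Proposition~\ref{prop:bccrate} for a positive density of near-boundary points presupposes a large, well-spread pair set.

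\textbf{Why the counting repair does not close it.} Two points of a fiber are not independent draws: $\bA(\by+\mathbf v)=\bA\by+\bA\mathbf v$ is a fixed shift. Whether their high bits differ is governed by $\bA\mathbf v$ and by the spread of $\bA\by$ over the pair set. It is not governed by a $1-1/m$ per-coordinate collision estimate; if $\bA\mathbf v$ is small, the two points agree with probability close to $1$. Marginal near-uniformity of $\bA\by$ says nothing about this joint law. What is missing is an argument that a kernel whose box pairs cover a noticeable fraction of the box must contain a vector with a large, well-spread pair set. You do not supply one.

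\textbf{How the paper avoids this.} It proves the theorem in the wide-nonce model, where correctness $\wone(\by)=\psi(\varphi(\by))$ is required for every $\by\in\Z_q^{n_\ell}$.
\begin{enumerate}
  \item $\mathrm{Im}(\bA)$ is a $\Z_q$-subspace, so for $k\in\ker\varphi$ the invariance $\HighBits(v+\bA k)=\HighBits(v)$ can be evaluated at $v=t\,\bA k$ for every $t$.
  \item Coordinate $j$ then runs through $ta$, where $a=(\bA k)_j\neq 0$ is a unit because $q$ is prime.
  \item So the scalar high-bits map would be $a$-periodic on all of $\Z_q$, hence constant (Lemma~\ref{lem:cyclic}), a contradiction.
\end{enumerate}
This gives $\ker\varphi\subseteq\ker\bA=\{0\}$ with no near-boundary selection, density estimate, or short/long split. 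The explicit left inverse of $\Phi$, together with the non-injectivity of $\wone$, then gives the leak.

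For the boxed nonce, the paper claims only a with-high-probability statement under an expansion hypothesis on $\bA$. An unconditional boxed version, which is what your Case~2 attempts, needs the additional ingredient named above.
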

The mechanism is that $\HighBits$ is not a homomorphism (a fact we attribute to
\cite{CozzoSmart19,PMS26}); the impossibility is a formalization of this folklore at the scope the result
needs, not a claim that the non-homomorphism is new. It is unconditional for a wide nonce, and holds with
overwhelming probability for ML-DSA's boxed nonce under a mild expansion hypothesis (Appendix~\ref{app:trilemma}).

\section{TALUS-TEE}
\label{sec:tee}

The TEE profile establishes the one-round online phase using a trusted execution environment as the trust
anchor, for any $T$-of-$N$ and with no honest-majority requirement.

\paragraph{Protocol.} In preprocessing, an enclave coordinator receives the parties' nonce shares, forms the
aggregate nonce $\by$ internally, computes $\wone$ and checks BCC directly (with $\by$ in cleartext inside
the enclave), and discards nonces that fail BCC. It thereby holds a pool of BCC-passing nonces with their
$\wone$. To sign $\mu$: the enclave releases $\wone$, the parties compute $c=\Hash(\mu\Vert\wone)$ and
broadcast $\bz_h=\by_h+c\bs_{1,h}$, and the enclave aggregates $\bz$, forms the hint, verifies, and emits
$\sigma$ only if verification passes. This is a single online broadcast round.

\paragraph{Cost and guarantees.} Online latency is one round; measured single-box compute $\approx 1.5$~ms at
ML-DSA-65, $T=3$ (\S\ref{sec:impl}). Because BCC is enforced offline, the $\rzero$ check never fires online; a first
signing attempt succeeds with probability $\approx 99.6\%$ at ML-DSA-65 (measured over $12{,}645$ attempts,
Appendix~\ref{app:empirical}; $\approx 98.0\%$ / $99.2\%$ at levels~2 / 5). The residual comes \emph{entirely} from
ML-DSA's message-dependent hint-weight check: across our $\approx 33{,}000$ measured attempts the $\bz$-norm
check never fired, the summed nonce being concentrated well inside the norm bound (the same Irwin--Hall
concentration quantified in \S\ref{sec:security}). Neither residual can be pre-filtered offline, so the
online phase is one round per attempt with expected $\approx 1.004$ attempts at level~3, not a guaranteed
single pass.

\paragraph{Trust and security.} The enclave is the coordinator; the $\bz_h$, the aggregate, and the norm
check are enclave-internal, and only an accepted $\sigma$ leaves. An adversary corrupting up to $T{-}1$
signers (not the enclave) never observes an aggregate or a rejected $\bz$; its view is $\wone$ and
$\sigma$, simulatable from $(\pk,\sigma)$ up to the $q_s$-bounded channels of \S\ref{sec:security}; for a
coalition of signers, subtracting its own contributions from $\sigma$'s response reduces the summed nonce to
the conditional residual of the insider regime in Appendix~\ref{app:ub} (and the
emitted $\sigma$ carries the summed nonce, so Theorem~\ref{thm:lb} and the rotation cap apply to the TEE
profile as well). The enclave must be trusted for both confidentiality and integrity: it holds the aggregate
nonce $\by$ \emph{and} aggregates $\bz=\by+c\bs_1$, so a confidentiality breach alone recovers
$\bs_1=c^{-1}(\bz-\by)$ by a single linear solve (a full key compromise, not a reduction to
Module-LWE), and an integrity breach can violate the $\wone$-after-$\mu$ ordering or the verify-before-emit
guard. The TEE profile thus rests on the enclave assumption plus the standard assumptions of
\S\ref{sec:security}; under those, unforgeability follows from the $q_s$-bounded reduction of Theorem~\ref{thm:ub}.

\section{TALUS-MPC}
\label{sec:mpc}

The distributed profile removes the enclave. No party holds the nonce, so $\wone$ is computed on shares by
CEF and BCC is filtered on shares; the online phase is two rounds. This section gives the structure, the
principle that determines the round count, and the closure conditions for malicious security.

\subsection{Compute offline, reveal online}
The challenge $c=\mathsf{SampleInBall}(\Hash(\mu\Vert\wone))$ needs both $\mu$ (online) and $\wone$.
Crucially, \emph{computing} $\wone=\HighBits(\bA\by)$ is message-independent and is done offline (by CEF, on
shares); only \emph{revealing} $\wone$ is timing-sensitive. This split is the whole design.

\paragraph{What must stay hidden.} Since $\bA$ is injective, $\bw=\bA\by$ determines $\by$; revealing $\bw$
would reveal $\by$, and with the released $\bz=\by+c\bs_1$ that gives $\bs_1$. So the full product $\bw$ and
its low part $\rzero$ are \emph{never} opened; only the massively lossy high part $\wone=\HighBits(\bw)$ is
released (and only online). CEF's job is exactly to extract the shared $\sthr{\wone}$ from the shared
$\sthr{\bw}$ without ever reconstructing $\bw$ or $\rzero$ (\S\ref{sec:carry}).

\paragraph{Offline.} Parties run a nonce DKG to obtain $\sthr{\by}$ (each party $h$'s free term is
$\hat\by_h$; we write $\by_h$ when the hat is clear from context) and locally form $\sthr{\bw}=\sum_h
\bA\hat\by_h$; CEF resolves the carry on shares by masked broadcast: each party masks its low part with a mask
$\rho_h$ drawn uniformly on the \emph{full stripe} $[0,\alpha)$, and the masked aggregate low part
$B=\sum_h(r_{0,h}+\rho_h)$ is opened. Because each mask spans the whole stripe, $B\bmod\alpha$ is a
\emph{one-time pad} on $\rzero$ (it reveals nothing about the true low sum), and only the key-independent
stripe-carry and FIPS boundary bits, computed securely from the shared masks, are extracted
(Proposition~\ref{prop:carrypriv}; the carry is correspondingly widened to $\lceil\log_2|C|\rceil+1$ bits and
$\wone$ stays byte-exact). This full-stripe masking makes $B\bmod\alpha$ a one-time pad, so the residual is
confined to the coarse quotient $\lfloor B/\alpha\rfloor$, with committee-independent per-key wall
$\ge 2^{28.2}$ across levels ($2^{28.2}/2^{30.8}/2^{30.5}$; \S\ref{sec:security}), well above the cap. To keep the two-round guarantee, $\wone$ is not opened offline:
it is placed under an offline hiding-and-binding commitment (\S\ref{sec:mpc-comm}), so the pool holds
BCC-passing nonces each with a \emph{committed} $\wone$, opened only online. (Producing a genuinely
secret-shared $\sthr{\wone}$, rather than a committed reconstruction opened in Round~1, is a construction
refinement over the reference implementation, which reconstructs $\wone$; both keep $\wone$ hidden until
after $\mu$, which is what the security argument needs.) The on-shares BCC filter opens a single
key-independent pass/fail bit and discards failing nonces; about $3.15$ candidates are sifted per pooled
nonce, in batch ahead of any message.

\paragraph{Online (two rounds).}
\begin{center}\itshape
Round 1: open the committed $\sthr{\wone}$ to reveal $\wone$; each signer locally computes
$c=\mathsf{SampleInBall}(\Hash(\mu\Vert\wone))$.\quad
Round 2: reveal $\bz_h=\by_h+c\bs_{1,h}$; aggregate $\bz=\sum_h\bz_h$, form the hint (a local public step,
since $\bt_0$ is published; \S\ref{sec:impl}), verify, emit $\sigma$.
\end{center}
CEF is not run online; the two online rounds are the two reveals, with the $c$ computation a local step
between them. The final $\sigma$ is self-verifying, so a wrong aggregate is caught before emission.
Algorithm~\ref{alg:sign} gives the pseudocode; in the TEE profile Round~1 is absent (the enclave holds
$\wone$ in cleartext) and all aggregation is enclave-internal.

\begin{algorithm}[t]
\caption{$\mathsf{TALUS.Sign}$: online signing (both profiles).}\label{alg:sign}
\begin{algorithmic}[1]
\Statex \textbf{Input:} message $\mu$; from the pool of Alg.~\ref{alg:preprocess}: $(\wone,\mathsf{com},r)$
  and the quorum's shares $\{\hat\by_h\}_{h\in S}$; signing quorum~$S$, $|S|=T$.
\Statex \textbf{Output:} FIPS-204 signature $\sigma=(c,\bz,\bh)$ or $\bot$.
\State \textbf{[MPC only]} signers open $\mathsf{com}$; all learn $\wone$ \Comment{Round~1}
\State each signer $h\!\in\!S$: $c\gets\mathsf{SampleInBall}\bigl(\Hash(\mu\Vert\wone)\bigr)$ \Comment{local}
\State each signer $h$ sends $\bz_h\gets\hat\by_h+c\cdot\bs_{1,h}$ to coordinator \Comment{Round~2 (Round~1 for TEE)}
\State $\bz\gets\sum_{h\in S}\lambda_h\,\bz_h$ \Comment{Lagrange at $0$: $\lambda_h=\prod_{j\in S\setminus\{h\}}j/(j{-}h)$}
\State \textbf{if} $\infnorm{\bz}\ge\gamma_1-\beta$ \textbf{then return} $\bot$
  \Comment{$\bz$-norm check}
\State $\bh\gets\MakeHint(-c\bt_0,\;\bA\bz-c\bt_1\cdot 2^d,\;\alpha)$
  \Comment{$\bt_0$ is public (\S\ref{sec:impl})}
\State \textbf{if} $\operatorname{wt}(\bh)>\omega$ \textbf{then return} $\bot$
  \Comment{hint-weight check}
\State $\wone'\gets\UseHint(\bh,\;\bA\bz-c\bt_1\cdot 2^d,\;\alpha)$
\State \textbf{verify} $c\stackrel{?}{=}\Hash(\mu\Vert\wone')$; \textbf{if not, return} $\bot$
  \Comment{self-verification}
\State \Return $\sigma=(c,\bz,\bh)$ \Comment{accepted by any unmodified FIPS-204 verifier}
\end{algorithmic}
\end{algorithm}

\subsection{Why two rounds, and why standard assumptions}
\label{sec:mpc-why}
The timing of the $\wone$ reveal is exactly what separates a secure standard-assumption protocol from an
insecure one.
\begin{itemize}[nosep,leftmargin=1.4em]
  \item If $\wone$ is revealed \emph{offline}, before $\mu$ (a one-round online phase), then corrupt parties
    hold the offline commitments before choosing messages, the commit-before-message setting underlying
    ROS/Wagner attacks on rejection-based threshold signatures~\cite{Wagner02,Drijvers19,BLLOR21}. A standard
    reduction cannot then program the challenge, and security would need a one-more assumption.
  \item If $\wone$ is kept \emph{committed/shared until online} and \emph{opened} only after $\mu$ (the
    two-round phase above), there is no pre-message \emph{reveal}, and security rests on standard Module-LWE
    and SelfTargetMSIS (\S\ref{sec:security}).
\end{itemize}
This is the same axis that makes the TEE profile one round: the enclave holds the aggregate in the clear, so
it can apply the BCC filter and be ready to release $\wone$ immediately once $\mu$ arrives, collapsing the
reveal into the single online round, whereas the MPC profile has no such trusted party and must run the
commit-then-open across the two rounds. In both profiles $\wone$ is revealed only after $\mu$; what the
enclave saves is the extra online round the distributed open would otherwise cost. TALUS-MPC adopts the two-round ordering: keeping the
headline standard-assumption is worth one online round. The one-round MPC variant (secure only under a
one-more lattice assumption, in the manner of FROST~\cite{KG20} under OMDL~\cite{BCKMTZ22}) is not pursued here, as its required
structure conflicts with FIPS-exactness and arbitrary~$N$.

\subsection{Honest majority is for the carry, not the round count}
CEF is a non-linear computation on shares and requires an honest majority; the linear online reconstruction
of $\bz$ tolerates $T{-}1$ corruptions. These are independent axes: the two-round ordering fixes the
\emph{assumption} (removes the one-more requirement), while honest majority is what makes the on-shares
$\wone$ computation possible at all. To place the honest-majority guarantee correctly, CEF is run over a
committee of $\ge 2T-1$ parties, so a corrupt minority is a minority \emph{within the committee}, decoupled
from the size-$T$ online signing quorum. The nonce is carried in two coexisting representations from the
same committee polynomials: an additive form over the committee (for the offline $\wone$) and a
degree-$(T{-}1)$ Shamir form (for the online $\bz$), which are numerically equal by linearity of Shamir
interpolation ($\sum_h g_h(0)$ is reconstructed from the shares $\{\sum_h g_h(i)\}$). Because the committee's
nonce polynomials are evaluated at all $N$ points (Alg.~\ref{alg:preprocess}), any $T$ of the $N$ parties hold
both a nonce share and a key share at their evaluation point, so the $T$-of-$N$ signing threshold is unchanged;
the requirement added is offline liveness of the committee.

\subsection{Communication model and identifiable abort}
\label{sec:mpc-comm}
The round counts above are stated in the standard threshold-signature model of one broadcast per round, as
in FROST~\cite{KG20}. We do \emph{not} require a heavyweight Byzantine broadcast channel: the parties
communicate through an \emph{untrusted} aggregator (coordinator), exactly the FROST topology. An untrusted
aggregator could try to equivocate (send different $\wone$ to different signers, so they compute different
$c$), but the offline \emph{hiding-and-binding commitment} to $\wone$ binds it to a single value, so a
divergent opening is detected (a lightweight echo of the opened value across signers suffices for delivery
agreement; a full Byzantine broadcast is not needed). The same commitment removes the rushing/adaptive-$c$
concern (\S\ref{sec:mpc-malicious}). A wrong aggregate $\bz$ fails the pre-emission verification, so the
coordinator cannot substitute a bad signature; note, however, that self-verification blocks a \emph{forged}
signature but not the coordinator-triggered Blame attack, which is the dominant malicious break and is closed
separately in \S\ref{sec:mpc-malicious}.

\paragraph{Identifiable abort.} On the optimistic (fault-free) path, identifiable abort costs \emph{no extra
online round}: attribution rides on the authenticated/committed messages already sent in the two rounds. Only
when a fault is detected does a \emph{Blame subprotocol} run, adding a round on that path alone. Crucially,
Blame attributes faults from the \emph{authenticated offline transcript}, before the online round, and must
never open a post-$\bz$ nonce; otherwise it becomes the key-recovery of \S\ref{sec:mpc-malicious}. Offline
faults (bad preprocessing contributions) are attributable this way. A purely online fault, a signer sending
a malformed $\bz_h$, is the residual case: since a public per-party check would require publishing
$\bA\by_h$ (which must stay hidden), attributing it needs a per-party verifiable-$\bz_h$ proof; we defer that
component, and the complete identifiable-abort treatment, to the extended version. A broadcast-style
agreement may be needed for dispute resolution inside Blame, but only on that failure path, never in the
optimistic two rounds.

\subsection{Malicious model: the closure conditions}
\label{sec:mpc-malicious}
EUF-CMA unforgeability is proven against malicious adversaries with well-formed key-generation outputs
(\S\ref{sec:security},
Remark~\ref{rem:malicious-eufcma}); simulation-based transcript privacy is proven in the semi-honest model
(Proposition~\ref{prop:semihonest}). Identifiable abort on the optimistic path is provided as above; the
simulation-based malicious lift, including online malformed-$\bz_h$ attribution, is deferred to the extended
version. We state here the protocol mechanisms it requires, because they
determine the construction. Every catastrophic malicious break we are aware of has the same shape: two
individually key-independent revealed values whose \emph{join} determines the key. Each is closed by a
standard identifiable-abort or threshold technique that severs one of the two reveals:
\begin{itemize}[nosep,leftmargin=1.4em]
  \item \textbf{Post-$\bz$ nonce reveal.} A coordinator holding $\bz=\by+c\bs_1$ that could also trigger a
    Blame opening $\by$ would recover $\bs_1=c^{-1}(\bz-\by)$. Closure: attribute preprocessing faults
    \emph{before} the online round, via authenticated offline transcripts, so Blame never opens a post-$\bz$
    nonce.
  \item \textbf{Round-1 reveal binding.} A bare open of $\wone$ is rushing-steerable, which would restore
    adversary-chosen $c$ and re-introduce the one-more setting. Closure: an offline hiding-and-binding
    commitment to the $\wone$-shares, opened online; this costs no extra online round, since $\wone$ is
    message-independent.
  \item \textbf{CEF integrity and key generation.} CEF uses authenticated (rather than plain
    preprocessing-based) multiplication; key generation is a dealer-free DKG whose only public lattice output
    is $\bt=\bA\bs_1+\bs_2$ (an MLWE sample, hence hiding) and \emph{never} a per-party $\bA\bs_{1,h}$: that
    is a bare, error-free lattice image, and since $\bA$ has full column rank at ML-DSA's dimensions its
    left-inversion would expose the share $\bs_{1,h}$. Short-share well-formedness is therefore proved in zero
    knowledge rather than by a Feldman $\bA\bs_{1,h}$ commitment. Such a DKG is realizable in the manner of
    Quorus~\cite{BdCE25}; not publishing the non-hiding image closes the passive-share-recovery route by
    construction, while the rogue-contribution and rogue-key routes are closed subject to the deferred UC
    composition; its full instantiation is developed in the extended version.
\end{itemize}
Because the committee fix increases the number of parties touching the sensitive computation, the Blame
closure is applied first. With the full set in place, we expect malicious security in the $q_s$-bounded
regime forced by \S\ref{sec:security}; the mechanisms are recorded in Appendix~\ref{app:malicious}, and the
complete simulation-based proof is deferred to the extended version, which we do not claim here.

\section{Security}
\label{sec:security}

We instantiate the boundary of \S\ref{sec:boundary} against ML-DSA's own assumptions. Treating single-party
ML-DSA as a black box is not available to us: the summed (Irwin--Hall) nonce is not ML-DSA-distributed, so a
uniform-nonce oracle reduction does not apply. We give a concrete lower-bound attack, an optimized upper
bound together with the exact proof--attack gap, the resulting cap, and the privacy accounting.

\subsection{A passive key-recovery attack}
\label{sec:lb}
The revealed nonce is a sum of $\ge 2$ bounded shares, so its per-coordinate marginal is non-uniform, carrying
$\beta$-truncated Fisher information $I>0$ about a location shift (\S\ref{sec:boundary}); the response
$\bz=\by+c\bs_1$ is exactly such a shifted observation of $\bs_1$.

\begin{theorem}[FIPS-exact revealed-sum-nonce lower bound]\label{thm:lb}
Let a threshold ML-DSA scheme emit FIPS-204-exact signatures whose revealed nonce is a sum of $\ge 2$
independent bounded shares, with per-coefficient nonce density $f$ of Fisher information $I=\int(f'/f)^2f>0$
($I=0$ iff the revealed marginal is uniform). Then a passive, assumption-free estimator recovers $\bs_1$ from
the signatures of a single key once $q_s\gtrsim 4/(I\tau)$, with $\tau$ the challenge weight. Consequently no
such scheme has $q_s$-independent EUF-CMA under any assumption.
\end{theorem}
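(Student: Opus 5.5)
The plan is to reduce key recovery to a family of independent scalar location-estimation problems, one per coefficient of $\bs_1$, and apply a maximum-likelihood (or one-step score) estimator whose error is governed by the Fisher information $I$.

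\textbf{Step 1: isolate a coefficient.} Fix a coefficient index $(i,j)$ of $\bs_1$. Each emitted signature reveals $c$ and $\bz=\by+c\bs_1$. Write $(c\bs_1)_{i,j}=\sum_{t}\pm (\bs_1)_{i,j-t}$ as a signed sum of the $\tau$ coefficients selected by $c$; this sum is known to be a linear function of the unknowns with public $\pm1$ weights. So every coordinate of $\bz$ is a noisy observation $z=y+\langle a,\bs_1\rangle$, with $a\in\{0,\pm1\}^{n_\ell}$ public, $\tau$-sparse, and $y\sim\IH(\gamma_1,\nu)$ independent of everything else.

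The conditioning on acceptance ($\infnorm{\bz}<\gamma_1-\beta$, the hint weight, and BCC) must be handled next. I would note two facts. First, BCC and $\wone$ depend only on $\by$, and the nonce after BCC still has a per-coordinate marginal that I would treat as the $f$ in the statement; this is the ``revealed marginal''. Second, the $\bz$-norm truncation is what makes the relevant information the $\beta$-truncated $I$ rather than the untruncated one. The paper reports that this check essentially never fires.

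\textbf{Step 2: score estimator.} Estimate $\bs_1$ coordinate-wise by maximising $\sum_{\text{sig}}\sum_{\text{coords}}\log f\bigl(z-\langle a,\bs_1\rangle\bigr)$ over $\bs_1\in[-\eta,\eta]^{n_\ell}$. Because $\bs_1$ is small relative to $\gamma_1$, I would linearise around $\bs_1=0$ (one Newton step). A given coefficient $(\bs_1)_m$ appears in about $\tau$ output coordinates per signature, each with coefficient $\pm1$. Its accumulated Fisher information after $q_s$ signatures is therefore about $q_s\tau I$, counting cross-terms. The cross-terms come from the other coefficients, which enter as independent random-sign contributions; they average out because $c$ is uniform, so the information matrix is close to $\tau I\cdot\mathrm{Id}$ per signature.

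By the standard asymptotic normality of the MLE (or a direct Chebyshev/Bernstein bound on the score sum), the estimation error is approximately $\mathcal N\bigl(0,1/(q_s\tau I)\bigr)$. Rounding to the nearest integer is correct when the standard deviation is below about $1/2$, that is, when $q_s\gtrsim 4/(I\tau)$. A union bound over the $n_\ell$ coefficients adds only a logarithmic factor, absorbed into $\gtrsim$. Alternatively, a partially correct $\bs_1$ finishes via $\bt-\bA\bs_1$ short-vector checks.

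\textbf{Step 3: conclude.} Recovering $\bs_1$ gives $\bs_2=\bt-\bA\bs_1$, so honest signing is possible and forgery follows. The estimator uses no computational assumption, so EUF-CMA security cannot be independent of $q_s$. The claim that $I=0$ iff the marginal is uniform follows because the score $f'/f$ vanishes a.e.\ only for constant $f$. For $\nu\ge2$ the discrete Irwin--Hall law is triangular or smoother, hence non-constant.

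\textbf{Main obstacle.} The hard part is making Step 2 rigorous despite the discreteness of $f$ and the selection effects. The Fisher information of a discrete, bounded-support location family needs a finite-difference definition, and the support edges produce the ``sharp edge'' terms that the $\beta$-truncation is meant to discard. I would first try proving the bound for the explicit linear score estimator, using $-f'/f$ of the continuous Irwin--Hall density restricted to $|y|<\gamma_1-\beta$. I would bound its bias and variance directly via Bernstein's inequality, treating the BCC/hint conditioning as a bounded-ratio reweighting of the $\by$ law. This yields the stated threshold up to constants, matching the informal $\gtrsim$.
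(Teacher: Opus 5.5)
Your proposal is correct and is essentially the paper's argument. Each signature gives, on the $\tau$ coordinates touched by $c$, a location observation of $\bs_1$ through the acceptance-truncated summed-nonce law, and the independent random-oracle challenges make the Fisher information accumulate to $J=I\tau q_s\cdot\mathbf{Id}$, so an efficient estimator whose per-coefficient error drops below $1/2$ gives the wall $q_s\approx 4/(I\tau)$, after which $\bs_2=\bt-\bA\bs_1$ yields forgery; the paper likewise treats the BCC and hint-weight conditioning only informally. The only difference is emphasis: your one-step score (linearized MLE) estimator is the paper's MLE route attaining $4/(I\tau)$, whereas the paper also highlights the cruder least-squares solve $(\sum_i\mathbf{C}_i^{\top}\mathbf{C}_i)^{-1}\sum_i\mathbf{C}_i^{\top}\bz_i$, whose signal gain $g=1-2Bf(B)/Z(0)$ comes from the box truncation (vanishing exactly for a uniform nonce) and which recovers $\bs_1$ only at $\approx 2^{31.8}$.
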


\begin{proof}[Proof sketch]
The $\tau$ coordinates touched by $c$ give location observations of $\bs_1$ through $\bz$; the challenges are
independent random-oracle outputs, so Fisher information accumulates linearly, $J=I\tau q_s\cdot\mathbf{Id}$.
The estimator is \emph{concrete}, not merely a converse. The poly-time linear solve $\hat\bs_1=(\sum_i
\mathbf{C}_i^{\!\top}\mathbf{C}_i)^{-1}\sum_i\mathbf{C}_i^{\!\top}\bz_i$ (with $\mathbf{C}_i$ the matrix of $c_i$) has
per-coordinate signal gain $g\approx 0.9992$ (exactly $0$ when $T=1$, recovering single-party ML-DSA) and
already recovers $\bs_1$ at $q_s\approx 2^{31.8}$; the maximum-likelihood estimator, which additionally exploits
the density's sharp edge, reaches the tighter $\approx 2^{30.4}$ wall quoted below.
Both are efficient and use only public signatures; their construction, gains, and walls are in
Appendix~\ref{app:lb}.
\end{proof}

At ML-DSA-65, $I=5.74\times10^{-11}$ and the wall is $q_s\approx 2^{30}$; across levels it is
$2^{26.8}/2^{30.4}/2^{30.0}$ (ML-DSA-44/65/87; the two-share case $\nu{=}2$, with the share-count sweep in
Appendix~\ref{app:lb}). Two points, both in Appendix~\ref{app:lb}. First, the wall
$4/(I\tau)$ is \emph{parameter-specific}: widening or reshaping the nonce pushes it out, but that departs from
ML-DSA's fixed $\gamma_1$, so TALUS's only lever is the cap of \S\ref{sec:refresh-cap}. Second, the attack
degrades \emph{linearly} in $q_s$ with the sharp wall at $4/(I\tau)$; the $\sqrt{q_s}$ rate elsewhere is a
feature of the security \emph{proof}'s change-of-measure bookkeeping (\S\ref{sec:ub}), not of the attack.

\paragraph{Placement.} Theorem~\ref{thm:lb} separates TALUS from schemes that reveal an internally
\emph{uniform} nonce (Quorus~\cite{BdCE25}, Trilithium~\cite{Trilithium25}), which have $I=0$ and escape the
bound at the cost of heavier interaction, and from Raccoon-lineage flooding~\cite{Raccoon2024}, which escapes
by reshaping the nonce and is not FIPS-exact. TALUS accepts the bound and delimits it operationally.

\subsection{An upper bound, and the proof--attack gap}
\label{sec:ub}
Because $\wone$ is committed offline and opened only after $\mu$ (\S\ref{sec:mpc-why}), there is no pre-message
reveal an adversary could pin, and unforgeability reduces to Module-LWE (key hiding) and SelfTargetMSIS, with
no one-more assumption. The Irwin--Hall nonce is discharged not by statistical distance (which saturates once
the outputs are distinguishable) but by a \emph{box-restricted} R\'enyi change of measure, and the forgery is
extracted directly as a SelfTargetMSIS witness in the Dilithium manner, with no forking (Appendix~\ref{app:ub}).

\begin{theorem}[EUF-CMA, TALUS]\label{thm:ub}
In the random-oracle model, TALUS-TEE and two-round TALUS-MPC are EUF-CMA (Definition~\ref{def:eufcma}) in the
sequential/bounded-concurrency setting under Module-LWE and SelfTargetMSIS, via the reduction of
Appendix~\ref{app:ub}, for $q_s\le Q_{\mathrm{cap}}$ (\S\ref{sec:refresh-cap}), with the per-view certificate of
that appendix (full-strength for the external view; insider views are weaker, below).
\end{theorem}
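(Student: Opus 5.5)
The proof is a game-hopping reduction in the random-oracle model. It runs from the real EUF-CMA experiment of Definition~\ref{def:eufcma} to a game whose signing oracle uses no secret, and then extracts a SelfTargetMSIS witness from the forgery.

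\textbf{Step 1: the public key.} Replace $\bt=\bA\bs_1+\bs_2$ (published in full, including $\bt_0$) by a uniform $\bt$. This costs the Module-LWE advantage and is valid because, by the closure conditions of \S\ref{sec:mpc-malicious}, the DKG never publishes a bare image $\bA\bs_{1,h}$. Corrupt parties' key contributions are handled through the well-formedness (extractability) of Remark~\ref{rem:malicious-eufcma}, so the reduction knows their shares. The game sequence then acts only on honest-party randomness.

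\textbf{Step 2: simulate signing by programming the random oracle.} Since $\wone$ is committed offline and opened only after $\mu$ (\S\ref{sec:mpc-why}), the simulator can sample $c$ first, sample the honest parties' contribution to $\bz$, and program $\Hash(\mu\Vert\wone)=c$. Here $\wone$ is derived as $\HighBits(\bA\bz-c\bt)$. That derivation is consistent because BCC forces $\HighBits(\bw-c\bs_2)=\wone$ (Lemma~\ref{lem:clearance}, Corollary~\ref{cor:redundant}). The hint is then computed publicly from $\bt_0$. The hiding commitment is equivocated in the random-oracle model, and programming fails with probability at most about $q_s q_H/2^{\text{min-entropy of }\wone}$ under sequential or bounded concurrency.

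\textbf{Step 3: change of measure.} This is the main obstacle. The real response is $\bz=\by+c\bs_1$, with $\by$ an Irwin--Hall sum of honest and corrupt shares, conditioned on BCC and on the $\bz$-norm and hint-weight acceptance. The simulated response is drawn key-independently from the same summed law, centred at $0$. The two differ by a location shift of size at most $\beta$. I would bound the R\'enyi divergence of order $a$ between the shifted and unshifted box-restricted Irwin--Hall laws per coordinate, obtaining roughly $1+O(a I\beta^2)$, with $I$ the Fisher information of \S\ref{sec:lb}. Multiplicativity then gives a factor $\exp(O(q_s\tau I\beta^2 a))$ over $q_s$ signatures. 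The residual $\bs_2$ shell channel of Corollary~\ref{cor:redundant}, and the CEF quotient of Proposition~\ref{prop:carrypriv}, are each charged as separate bounded-divergence terms whose sizes scale with $q_s/\gamma_2$ and the stated quotient wall. Finally, the probability-preservation inequality $\Pr_{\mathrm{real}}\le(R_a\cdot\Pr_{\mathrm{sim}})^{(a-1)/a}$ produces the $\sqrt{q_s}$-type bookkeeping noted after Theorem~\ref{thm:lb}.

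\textbf{Step 4: extraction.} In the final game, a valid forgery $(c^*,\bz^*,\bh^*)$ on a fresh $\mu^*$ satisfies $\Hash(\mu^*\Vert\UseHint(\bh^*,\bA\bz^*-c^*\bt_1 2^d))=c^*$. In the Dilithium manner, with no forking, this gives a short vector $(\bz^*,c^*,\text{hint/rounding error})$ that solves SelfTargetMSIS for $[\bA\mid\bt_1 2^d\mid\mathbf{I}]$.

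\textbf{Step 5: the cap.} Combine the losses and set $Q_{\mathrm{cap}}$ (\S\ref{sec:refresh-cap}) as the largest $q_s$ for which the accumulated divergence term stays constant-bounded. For the external view, the full $\nu$-share sum enters $I$. For insider views, corrupt shares are subtracted, leaving $m$ honest shares and a larger conditional Fisher information. This yields the weaker per-view certificate (the $\lambda/m$ ceiling) asserted in the theorem.
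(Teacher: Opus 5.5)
Your skeleton matches the paper's: a Module-LWE hop, a box-restricted R\'enyi change of measure with probability preservation, and straight-line SelfTargetMSIS extraction. But there is a genuine gap in Steps~2--3, and Step~5 has errors too.

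\textbf{The $\bs_2$ shell (Steps~2--3).} Your simulator is the Dilithium HVZK one: sample $c,\bz$, set $\wone:=\HighBits(\bA\bz-c\bt)$, and program $\Hash$.
\begin{itemize}
\item In ML-DSA this is exact because the middle check is applied to $\bw-c\bs_2=\bA\bz-c\bt$, a public quantity.
\item TALUS applies BCC to $\bw$ itself. So in a real execution $\LowBits(\bA\bz-c\bt)_j=(\rzero)_j-(c\bs_2)_j$ is essentially uniform on the BCC window shifted by $-(c\bs_2)_j$. This is the shell of Corollary~\ref{cor:redundant}, and since $\bt$ is public it is computable from every signature.
\item A simulator without $\bs_2$ cannot reproduce this, and it is not a ``bounded-divergence term''. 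If you condition on the unshifted window, the real law has mass where the simulated one has none, so every R\'enyi divergence of order $\ge 1$ is infinite.
\item The statistical distance does scale like $1/\gamma_2$, but it is $\approx\sum_j|(c\bs_2)_j|/(2\gamma_2)$ with $\sum_j|(c\bs_2)_j|\approx 14\,n_k$, i.e.\ about 4\% per session at ML-DSA-65.
\item If you do not condition, you pay $\log_2(1/p_{\BCC})\approx 1.7$ bits of R\'enyi divergence per session at every order.
\end{itemize}
Either way the bound is void within roughly a hundred signatures, far below $Q_{\mathrm{cap}}$. The CEF ``quotient wall'' is likewise the sample count of a moment attack, not a divergence, so it cannot be fed into a reduction. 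Step~2 also invokes $\bs_2$ after Step~1 has made $\bt$ uniform; the Module-LWE hop is only available once signing no longer uses the key.

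\textbf{How the paper's Game~2 differs.} It does no random-oracle programming and leaves the oracle code unchanged: $\by$ is sampled, BCC is applied to it identically in both measures (and said to add no term), and $\wone=\HighBits(\bA\by)$ is committed. Only the conditional law of $\bz$ given $(\by,c)$ is moved. That change is charged at $R_\alpha^{\mathrm{sess}}=\tfrac{\alpha}{2}J(\nu)\tau n_\ell\sigma^2$ (Lemma~\ref{lem:persession}) and optimized to $\lambda(1-\sqrt{q_s/Q_{\max}})^2$ (Theorem~\ref{thm:holder}). The CEF quotient does not enter that certificate; it appears only in the privacy accounting (Proposition~\ref{prop:semihonest}). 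Be aware that this accounting compares only the laws of $\bz$ and never ties $\wone$ to $\bz$ through $\bt$. Importing it will not supply the joint-law estimate your Step~3 is missing.

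\textbf{The insider ceiling (Step~5).} A larger but finite Fisher information only shrinks $Q_{\max}$. Your optimization would still let the certificate tend to $\lambda$ as $q_s\to 1$, so no $q_s$-independent $\lambda/m$ ceiling comes out. In the paper the ceiling is a restriction on admissible R\'enyi orders:
\begin{itemize}
\item After subtracting its shares, an insider faces the residual $\IH(m)$ with $m=\nu-t$. Its support $\pm m\gamma_1/\nu$ lies strictly inside the acceptance box, so the box no longer cuts the support edge, which is the overhang Lemma~\ref{lem:persession} needs.
\item For $m\ge 2$, $R_\alpha=\infty$ once $\alpha\ge 1+\tfrac{1}{m-1}$. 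This forces $\tfrac{\alpha-1}{\alpha}<\tfrac1m$ in the preservation inequality, hence the ceiling $\lambda/m$.
\item At $m=1$ (a TEE coalition of $T{-}1$ signers) no order survives and nothing is certified.
\end{itemize}

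\textbf{Quantitative points in Step~5.} $Q_{\mathrm{cap}}$ is not where the divergence stops being bounded. It sits about $9$ bits below $Q_{\max}$, placed under the $\bs_2$ attack-model wall (\S\ref{sec:refresh-cap}). Also, the stated $\approx 114$--$118$ external-view bits at the cap require the per-session charge to be summed over all $n_\ell$ coordinates of $c\bs_1$ (not $\tau$ of them), using the average $\mathbb{E}_c\|c\bs_1\|_2^2=\tau\|\bs_1\|_2^2$. Your worst-case $\beta^2$ per coordinate loses $\log_2(\tau\eta^2/\sigma^2)\approx 6$--$7$ bits of $Q_{\max}$, and with it the full-strength claim.
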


Since Definition~\ref{def:eufcma} permits arbitrary adversary behaviour, the theorem covers malicious
adversaries with well-formed (extractable) key-generation outputs: each game transition operates on
honest-party randomness, from which the corrupt contributions
factor out as known shifts (Remark~\ref{rem:malicious-eufcma}; in the MPC profile honest-majority
interpolation provides the extraction, with the DKG instantiation deferred). The insider ceiling
$\lambda/m$ is the
operative bound for a corrupt minority knowing $t$ nonce shares ($m=\nu-t$ honest); the MPC coordinator's
rejected-response view is bounded separately (Appendix~\ref{app:ub}).

On the acceptance box the summed-nonce density is bounded below, so the per-query divergence is soft (second
order in the shift $c\bs_1$) and \emph{every} R\'enyi order is finite; optimizing the order gives the clean
certificate $\lambda_{\mathrm{prov}}(q_s)=\lambda\,(1-\sqrt{q_s/Q_{\max}})^2$, essentially full at $q_s{=}1$ and vacating at
$Q_{\max}$ (Appendix~\ref{app:ub}, Theorem~\ref{thm:holder}). At each level's operational cap this certifies
$\approx 114$--$118$ bits for the external view over the deployed share counts ($\approx 112/118/121$ at the
$\nu{=}3$ reference $q_s{=}2^{14}$ of Table~\ref{tab:si-loss-new}). The box argument needs the nonce support to
overhang the acceptance window; an \emph{insider} who knows $t$ shares (an MPC committee member, or a
coalition of TEE signers) can retract it inside the window, where the argument caps the certificate at
$\lambda/m$
($m=\nu-t$ unknown honest shares): for an MPC corrupt minority $m=T$, i.e.\ $64$ bits at $T{=}2$ and about
$42.7$ at $T{=}3$, while a TEE coalition of $T{-}1$ signers reaches $m{=}1$, where the argument certifies
nothing. No insider attack below the operational cap is known, and a tight insider bound is left open. The residual $[Q_{\max},\text{attack
wall}]$ interval is characterized exactly (Corollary~\ref{cor:gap}): its width is the $\lambda$-versus-$H(\bs_1)$
gap $2n_\ell\sigma^2/(\lambda\ln 2)$, a per-level constant, and closing it to the attack wall
($\approx 2^{30}$ at ML-DSA-65) would
need a third leakage-hardness assumption equal to the attack itself. We do not claim a tight reduction to the
attack wall; that barrier, and full concurrency (open, as for FROST without its binding factor), are left to
the extended version.

\subsection{The signing-lifetime cap, by key rotation}
\label{sec:refresh-cap}
Theorem~\ref{thm:lb} makes a per-key signing cap mandatory. We set it at
$Q_{\mathrm{cap}}\approx 2^{13}/2^{14.5}/2^{15}$ ($\approx 8$k$\,/\,23$k$\,/\,33$k signatures per key) at
ML-DSA-44/65/87. The binding constraint is not the $\bs_1$ side: the optimized reduction of \S\ref{sec:ub}
stays informative up to $Q_{\max}\approx 2^{21.8}/2^{23.4}/2^{24.4}$ and still certifies
$\approx 114$--$118$ bits at the cap for the external view. It is the \emph{$\bs_2$ channel} below, whose idealized wall
$\approx 2\gamma_2=2^{17.5}/2^{19}/2^{19}$ is the smallest in the accounting: the cap sits $\ge 4$ bits under
it at every level. Because that wall is an idealized (sharp-edge) estimate, proven no smaller but not proven
attained, we keep the margin and present the cap as follows: below $Q_{\mathrm{cap}}$, the external-view
certificate of \S\ref{sec:ub} is proven and every disclosed channel is $\ge 4$ bits from its nearest wall;
between $Q_{\mathrm{cap}}$ and the attack walls no attack is known, and we disclose that interval as such
rather than claim it as proven security.

\emph{The cap is discharged by key rotation, not by proactive share-refresh}, and the distinction is
load-bearing. The $\bs_1$ leak is about the \emph{fixed} secret under a \emph{fixed} public key: the Fisher
clock $J=I\tau q_s$ accumulates over every signature ever emitted under that $\mathsf{pk}$. Re-randomising the sharing
without changing $\mathsf{pk}$ (proactive refresh, which resists a mobile adversary) keeps $\bs_1$ fixed and does
\emph{not} reset the leak; every past $\bz$ remains a valid observation. The cap is met only by retiring the
key and re-running key generation to a fresh $\mathsf{pk}$ (a light, one-time operation, \S\ref{sec:mpc}). A lifetime
cap is the appropriate mechanism for arbitrary $N$; nonce-widening is unavailable to a FIPS-fixed $\gamma_1$.

\subsection{Privacy: the $\bs_2$ and rejected-response channels}
BCC pass/fail bits are key- and message-independent (\S\ref{sec:bcc}), simulatable from public data, and leak
nothing about the key beyond Theorem~\ref{thm:lb}. The accepted transcript departs from single-party ML-DSA
only in $q_s$-bounded ways, every one of which sits below the cap of \S\ref{sec:refresh-cap}:
\begin{itemize}[nosep,leftmargin=1.6em]
  \item the $\bs_1$ nonce concentration (the $\delta_{\IH}(q_s)$ term of Appendix~\ref{app:ub}, accumulating
    the per-session box-restricted R\'enyi divergence); its attack wall is $\approx 2^{30}$ at ML-DSA-65
    (Theorem~\ref{thm:lb});
  \item the $\bs_2$ channel from moving the $\bs_2$-rejection offline: the emitted hint carries
    $\mathbf{O}=\rzero+c(\bt_0-\bs_2)$ with $\rzero=\LowBits(\bA\by)$, and since $\bt_0-\bs_2=\bA\bs_1-\bt_1
    2^d$ this is a key-recovery observation. Its sharp-edged uniform noise makes it an integer-LWE
    recovery~\cite{Bootle18ILWE,Niot26} with a per-key wall $\approx 2\gamma_2$ ($2^{17.5}/2^{19}/2^{19}$;
    since $\bt_0$ is published, $\mathbf{O}=\bA\bz-c\bt_1 2^d-\alpha\wone$ is directly computable from
    public data, so this sharp-edge channel is the actual observable, not an idealization). This is
    \emph{not} $\bs_2$-secrecy of the output (BCC
    provides hint-correctness, not $\bs_2$-independence, \S\ref{sec:bcc}) but a $q_s$-bounded channel;
  \item (MPC only) the CEF fuzzy-$\rzero$ opening: the full-stripe masking of \S\ref{sec:carry} makes
    $B\bmod\alpha$ a one-time pad, leaving only the coarse quotient $\lfloor B/\alpha\rfloor$ as a
    committee-independent residual, with a per-key wall $\ge 2^{28.2}$ across levels
    ($2^{28.2}/2^{30.8}/2^{30.5}$ at ML-DSA-44/65/87).
\end{itemize}
Each wall exceeds $Q_{\mathrm{cap}}$, so one cap covers all three; the ordering $Q_{\mathrm{cap}}<$ (s$_2$)
$<$ $Q_{\max}$ (the proof wall) holds at every level, and both the s$_1$ and fuzzy-$\rzero$ walls
exceed $Q_{\max}$ by $\ge 5$ bits at matched $\nu$ (their relative order is $\nu$-dependent; both are
non-binding; see Table~\ref{tab:si-loss-new}). The $\bs_2$ channel is the \emph{binding} one: its idealized wall
$2^{17.5}$ at ML-DSA-44 sits only $\approx 4.5$ bits above the cap $2^{13}$, so the cap is
placed to preserve that smallest margin rather than a comfortable gap. The $\approx 2\gamma_2$ figure is an
estimator-model wall (demonstrated at reduced dimension, archived with the artifact) over the directly
computable observable, so we treat it as the binding attack-model wall of the accounting and keep the margin
under it, consistent with \S\ref{sec:refresh-cap}.

\paragraph{The revealed-rejected response.} A coordinator collects the aggregate $\bz=\by+c\bs_1$ before the
online $\bz$-norm and hint-weight checks decide acceptance, so a rejected $\bz$, which single-party ML-DSA
never emits, is exposed. It is the same location observation that Theorem~\ref{thm:lb} already charges, so we
count signing \emph{attempts}, not emitted signatures, toward $Q_{\mathrm{cap}}$ (the $\le 1.02\times$
attempt inflation is absorbed within the cap's margin), and we compute the exposure exactly
(Appendix~\ref{app:empirical}). Hint-weight rejections, the only kind observed, satisfy
$\infnorm{\bz}<\gamma_1-\beta$, lie inside the acceptance box, and are covered by the same per-session bound.
A $\bz$-norm-rejected coordinate occurs once per $2^{22}$--$2^{25}$ sessions, an expected $<2^{-9}$
occurrences over a key's \emph{entire signing lifetime} at every level; an informative coordinate, one
exceeding $\gamma_1$ with its excess attributable to $c\bs_1$, once per $2^{32}$--$2^{37}$ sessions, expected
$<2^{-19.4}$ per key lifetime. The coordinator's out-of-box view of a key is therefore empty except with
probability $<2^{-9}$; the $2^{-19.5}$--$2^{-21.6}$ union-bound terms in the certificate of
Appendix~\ref{app:ub} are the bookkeeping shadow of this tail, not an attack surface, and deployments that
want the theorem without the term add one on-shares comparison before the reveal. The committee/Blame
mechanisms of \S\ref{sec:mpc} and the simulation-based malicious-model treatment of rejected-attempt
intermediates are in the extended version. Simulation-based transcript privacy is proven in the semi-honest
model (Proposition~\ref{prop:semihonest}): a corrupt minority's view is simulated from
$(\mathsf{pk},\sigma)$ and the public BCC bits, up to the $q_s$-bounded coarse-quotient residual of the
CEF opening.

\section{Why Carry Resolution Is Necessary}
\label{sec:trilemma}

This section states, as motivation for CEF, the impossibility informally given as
Theorem~\ref{thm:carry-necessary}: no one-round homomorphic shortcut computes $\wone$ while hiding the
nonce. The result is a formalization of the folklore that $\HighBits$ is non-homomorphic; we claim only the
formalization at the scope the construction needs, and the BCC/CEF workaround, not the non-homomorphism
itself, which is attributed to \cite{CozzoSmart19,PMS26}.

\paragraph{Model.} Consider a one-round protocol in which each party broadcasts $\varphi(\by_h)$ for a
public $\Z_q$-linear map $\varphi$ into an explicit module, everyone forms $\varphi(\by)=\sum_h\varphi(\by_h)$
by module addition, and computes $\wone$ from it by a public trapdoor-free map, with no trusted
reconstruction of $\by$ or $\bw$, no extra rounds, and no rejection. Hiding requires that $\varphi(\by)$
reveal no more about $\by$ than $\wone$ does.

\begin{theorem}[Carry necessity]\label{thm:trilemma}
Under the model above, with $\bA$ injective, no $\varphi$ can compute $\wone=\HighBits(\bA\by)$ while hiding
$\by$.
\end{theorem}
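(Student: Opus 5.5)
The plan is to treat the public map $g$ with $\wone=g(\varphi(\by))$ as a function of $\by$ that factors through the linear map $\varphi$, and derive a contradiction from the non-linearity of $\HighBits\circ\bA$. Write $K=\ker\varphi\subseteq\Z_q^{n_\ell}$. Correctness of the protocol means that $\HighBits(\bA\by)$ is constant on every coset $\by+K$ that meets the nonce domain, since $g$ sees only $\varphi(\by)$.

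The argument splits on the dimension of $K$.

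\emph{Case $K=\{0\}$.} Then $\varphi$ is injective, so $\varphi(\by)$ determines $\by$ by linear algebra over $\Z_q$. No trapdoor is needed: $\varphi$ is public and linear, so a public left inverse exists. Hiding then requires that $\wone$ determine $\by$ as well. This is false, because $\HighBits$ is many-to-one. Each coefficient of $\bw$ has $\alpha$ preimages per high value. Since $\bA$ is injective, distinct $\by$ in a single stripe-cell give the same $\wone$ but different $\varphi(\by)$. So $\varphi(\by)$ reveals strictly more than $\wone$, a contradiction.

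\emph{Case $K\neq\{0\}$.} Pick a nonzero $\mathbf{v}\in K$. Then $\bA\mathbf{v}\neq 0$ by injectivity of $\bA$. Consider the line $\by+t\mathbf{v}$ for $t\in\Z_q$. Along it, some coefficient $(\bA\by)_j+t(\bA\mathbf{v})_j$ moves through all of $\Z_q$, because $(\bA\mathbf{v})_j$ is a unit mod the prime $q$. It therefore crosses a stripe boundary, and $\HighBits$ changes value along a single coset of $K$. This contradicts correctness.

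The main obstacle is the nonce domain. The line argument is unconditional for a wide (full $\Z_q$) nonce, but ML-DSA draws $\by$ from the box $[-\gamma_1+1,\gamma_1]^{n_\ell}$. We therefore need two points $\by$ and $\by+\mathbf{v}$, both in the box with the same $\varphi$-image, whose images under $\bA$ lie in different stripes. I would handle this first by a counting/pigeonhole step. The box has $(2\gamma_1)^{n_\ell}$ points. If $K\neq 0$, the quotient image has at most $q^{n_\ell-1}$ values, so some coset contains many box points, and their pairwise differences give short kernel vectors $\mathbf{v}$ with $\infnorm{\mathbf{v}}<2\gamma_1$. I would then invoke the expansion hypothesis from the appendix: for a random $\bA$ and such a short $\mathbf{v}$, $\bA\mathbf{v}$ has some coefficient of magnitude comparable to $q$, with overwhelming probability over $\rho$. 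That is far larger than the stripe width $\alpha$, so a shift by $\mathbf{v}$ changes at least one high-bit coefficient unless the start point is atypical. Averaging over $\by$ in the coset then gives the required pair with overwhelming probability.

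Combining the two cases, every linear $\varphi$ is either too revealing to hide $\by$ or too lossy to determine $\wone$. This proves the theorem and hence the informal Theorem~\ref{thm:carry-necessary}.
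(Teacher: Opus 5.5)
Your wide-nonce argument is the paper's proof. The nonzero-kernel branch is its cyclic-invariance lemma: a coordinate with $(\bA\mathbf v)_j\neq 0$ is a unit, so along $\by+t\mathbf v$ it sweeps all of $\Z_q$, and the non-constant stripe map must then change inside a single coset of $\ker\varphi$. The paper phrases this as $\ker\varphi\subseteq\ker\bA=\{0\}$. Your injective branch is its explicit-left-inverse leak.

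One small repair in the injective branch: the two nonces $\by_0\neq\by_1$ with equal $\wone$ come from counting, $q^{n_\ell}>m^{n_k}$, not from injectivity of $\bA$. (Injectivity of $\varphi$ is what makes their broadcasts differ.) The image of $\bA$ is a thin subset of $\Z_q^{n_k}$, so $\HighBits$ being many-to-one on all of $\Z_q^{n_k}$ does not by itself put two image points in one cell. With that fix you have proved the theorem in the form the paper proves it, where correctness is required for every $\by\in\Z_q^{n_\ell}$.

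The boxed-nonce extension, which you present as the main obstacle, fails at the pigeonhole step because the count runs the wrong way. At ML-DSA-65 the box has $(2\gamma_1)^{n_\ell}=2^{25600}$ points, while a one-dimensional kernel leaves $q^{n_\ell-1}\approx 2^{29415}$ cosets. So pigeonhole forces a box pair inside one coset only once $\dim\ker\varphi\gtrsim 0.13\,n_\ell$. For smaller nonzero kernels $\varphi$ can be injective on the box. In that situation your $K\neq\{0\}$ branch has no box pair to use and your $K=\{0\}$ branch does not apply, so the case split has a hole.

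To close it, split on injectivity of $\varphi$ restricted to the box. In the injective case the hiding violation goes through verbatim: take two box points with equal $\wone$ (they exist since $(2\gamma_1)^{n_\ell}\gg m^{n_k}$) and the distinguisher that tests $z=\varphi(\by_0)$. No linear left inverse is needed.

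In the non-injective case, invoking the expansion property with overwhelming probability over $\rho$ for the particular $\mathbf v$ you found is not enough. The map $\varphi$, and hence $\mathbf v$, may be chosen after $\bA$, and the uniform version is false: Minkowski's theorem gives a nonzero $\mathbf v$ with $\infnorm{\mathbf v},\infnorm{\bA\mathbf v}\le q^{n_k/(n_k+n_\ell)}\approx 2^{12.5}\ll\alpha$. For such vectors, typical start points do not cross a boundary at all, so averaging over start points cuts the wrong way. This is why the paper states the boxed regime only as conditional on its expansion hypothesis, imposed directly on the vectors of $\ker\varphi$ that have a box pair, and keeps it outside the proof of the theorem.

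You should either do the same, or treat short kernel vectors by choosing an atypical start point. Pick $\by$ with $\by,\by+\mathbf v$ both in the box and with some $(\bA\by)_j$, where $(\bA\mathbf v)_j\neq 0$, lying within $|(\bA\mathbf v)_j|$ of a stripe boundary on the side from which the shift crosses it. This is possible when the admissible sub-box is wide enough for $(\bA\by)_j$ to cover $\Z_q$.
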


\paragraph{Proof idea (full proof in Appendix~\ref{app:trilemma}).} The core is a cyclic-invariance lemma:
since $q$ is prime, a unit shift generates $\Z_q$, so if $\HighBits$ were invariant along a nonzero kernel
direction of $\varphi$ it would be constant, contradicting non-constancy. Hence $\ker\varphi\subseteq\ker\bA
=\{0\}$, so $\varphi$ is injective; being an explicit linear map, an injective $\varphi$ leaks $\by$
information-theoretically, breaking hiding. This argument is unconditional for a wide nonce; for ML-DSA's
boxed nonce it holds with overwhelming probability over key generation under a mild expansion hypothesis
(Appendix~\ref{app:trilemma}). The discrete-log map $g^{(\cdot)}$ is \emph{not} a counterexample:
it is not an explicit invertible $\Z_q$-map and cannot expose $\HighBits$, which is exactly why the
FROST--Schnorr template does not transfer to ML-DSA. The escape hatches the theorem leaves open (trusted or
MPC reconstruction, extra rounds, rejection) are precisely the routes that real schemes, including CEF,
take.

\section{Implementation}
\label{sec:impl}

We implemented TALUS across all three FIPS-204 levels in Rust, together with a hardware core, and validated
that every signature is accepted by an unmodified ML-DSA verifier.

\paragraph{Software.} A Rust implementation covers ML-DSA-44/65/87 with both profiles. Offline nonce
preparation sifts $\approx 3.15$ candidates per usable nonce at level~3 and is done ahead of time, so it does
not appear in online cost; the online phase is one round (TEE) or two rounds (MPC). The reference MPC
implementation reconstructs $\wone$ in the clear (the one-round variant); the two-round commit-then-open of
\S\ref{sec:mpc-why} adds only a hash commitment and its opening, so the benchmarks below are representative
of both.

\paragraph{Compute microbenchmarks.} We report single-machine \texttt{cargo bench} compute times (one
$16$-core desktop CPU), which isolate arithmetic cost from network effects. At ML-DSA-65, $T=3$, the online
signing compute is $0.29$~ms (TEE) and $0.68$~ms (MPC), and the per-attempt preprocessing (including the BCC
filter) is $0.39$~ms (TEE) and $0.415$~ms (MPC). Amortizing the offline preprocessing over the
$\approx 3.15$ trials per usable nonce, a signature costs about $0.29+3.15\times0.39\approx 1.5$~ms (TEE) and
$0.68+3.15\times0.415\approx 2.0$~ms (MPC) of arithmetic. These are local compute figures, not a distributed or multi-region latency study;
in deployment the online wall-clock is dominated by the one (TEE) or two (MPC) network round-trips, which we
do not measure here. The BCC pass rate was additionally confirmed by a Monte-Carlo benchmark.

\paragraph{Hardware.} A separate silicon core implements the \emph{single-party} ML-DSA datapath that both
profiles build on: the same arithmetic (matrix--vector product, Decompose/HighBits, hint) that the
preprocessing invokes per candidate nonce. It validates that datapath in hardware; the threshold-specific
on-shares carry resolution is realized in the software implementation above, not in this core. The core
therefore substantiates the base arithmetic at deployment scale, not the distributed protocol.

\paragraph{FIPS conformance.} Output signatures are byte-identical in format to FIPS-204 and are accepted by
an unmodified ML-DSA verifier. The one place TALUS departs from the FIPS-204 byte layout is the public key,
which publishes the full $\bt=\bt_1\cdot 2^d+\bt_0$ rather than keeping $\bt_0$ in the secret key. This leaves
the signature untouched, since $(c,\bz,\bh)$ is unchanged and the verifier still reads only $(\rho,\bt_1)$; its
role is to make hint formation public. Once $\bz$ is opened both arguments of
$\bh=\MakeHint(-c\bt_0,\,\bA\bz-c\bt_1 2^d,\alpha)$ are public, so any party can form the hint locally and the
online phase stays two rounds with no $\bs_2$- or $\bt_0$-dependent step. Publishing $\bt_0$ is harmless: it is
the low part of the Module-LWE sample $\bt$ and so reveals nothing beyond $\bt$, leaving the assumptions of
Theorem~\ref{thm:ub} intact and the $\bs_2$ wall governed by the range of $\rzero$ rather than the secrecy of
$\bt_0$ (\S\ref{sec:security}). The \emph{signing distribution} differs in the $q_s$-bounded ways quantified
in
\S\ref{sec:security}: the nonce concentration and the residual $\bs_2$ channel. We report conformance as
verifier-acceptance (an unmodified verifier accepts the output), not as a modified verifier or a changed
format.

\section{Discussion and Limitations}
\label{sec:disc}

We state the boundaries of the results plainly.

\paragraph{Security model.} EUF-CMA unforgeability is proven against malicious adversaries with well-formed
(extractable) key-generation outputs under Module-LWE
and SelfTargetMSIS and is $q_s$-bounded (Theorems~\ref{thm:lb},~\ref{thm:ub};
Remark~\ref{rem:malicious-eufcma}); the insider ceiling is $\lambda/m$ for $m$ honest nonce shares. The
analytic constant $J(\nu)$ behind the per-session divergence is computed in Appendix~\ref{app:ub}
(Theorem~\ref{thm:holder}); a machine-checked write-up is planned. Simulation-based transcript privacy is
proven in the semi-honest model (Proposition~\ref{prop:semihonest}); identifiable abort is provided for
preprocessing faults, with online $\bz_h$ attribution deferred. The malicious simulation lift requires
ZK-extractable inputs; the closure conditions of \S\ref{sec:mpc-malicious} specify the mechanisms, and the
complete simulation is deferred to the extended version. Concurrency beyond the sequential/bounded setting is
open.

\paragraph{Round count and profiles.} The online phase is one round for TEE and two rounds for MPC, per
accepted nonce, with $\le 1.02$ expected attempts (measured; $1.004$ at ML-DSA-65); we do not claim a
one-round MPC phase under standard
assumptions. Honest majority is required by CEF for $T\ge 2$ committee computation, not by the round count;
the MPC profile requires $N\ge 2T-1$ and offline committee liveness.

\paragraph{Parameters.} The CEF carry reduction is stated for the level-3/5 modulus ($m=16$); level~2
($m=44$) requires parameterizing the reduction, which we do in Appendix~\ref{app:cef}. Claims tabulated per
level are for the profiles as specified there.

\paragraph{Relationship to \cite{Niot26}.} Niot~\cite{Niot26} independently identifies two key-recovery
attacks on the earlier version of TALUS. The first exploits non-hiding commitments of the form
$\bA\mathbf{v}$ (both Feldman key-generation commitments $\bA\bs_{1,h}$ and post-signing blame openings):
since $\bA$ is left-invertible, such a commitment reveals $\mathbf{v}$ by Gaussian elimination. The
present version avoids this by design: the DKG's only public lattice output is $\bt=\bA\bs_1+\bs_2$ (a
hiding MLWE sample), never a per-party $\bA\bs_{1,h}$; short-share well-formedness is proved in zero
knowledge rather than by a Feldman commitment; and Blame attributes faults from authenticated offline
transcripts, never opening a post-$\bz$ nonce (\S\ref{sec:mpc-malicious}). The second attack targets the
$\bs_2$ channel opened by removing the online rejection check: each emitted signature leaks a noisy
observation of~$\bs_2$. This channel is exactly the one quantified in \S\ref{sec:security} (the
$\approx 2\gamma_2$ wall) and bounded by the mandatory key rotation of \S\ref{sec:refresh-cap}; the cap
$Q_{\mathrm{cap}}$ sits $\ge 4$ bits below its wall at every level. We thank Niot for the independent
analysis.

\paragraph{Positioning.} Among FIPS-204-exact threshold schemes, TALUS is distinguished by supporting
arbitrary~$N$ under standard assumptions with a low online round count; the lower bound of \S\ref{sec:lb} is
the technical reason schemes that reveal a uniform nonce (e.g. Quorus) pay more interaction, and the reason
flooding-based schemes (Raccoon-lineage) are not FIPS-exact. Key generation is by a dealer-free DKG whose only
public lattice output is $\bt=\bA\bs_1+\bs_2$ (a hiding MLWE sample), never a per-party $\bA\bs_{1,h}$ whose
left-inversion would expose the share; short-share well-formedness relies on a zero-knowledge proof rather
than on a Feldman commitment, and the small secret coefficients are sampled FIPS-exactly on shares. Such a DKG is
realizable in the manner of Quorus~\cite{BdCE25}; its full instantiation and UC composition with
TALUS-signing are developed in the extended version.

\section{Conclusion}
\label{sec:concl}

TALUS shows that the online, distributed computation of ML-DSA's $\bs_2$-dependent rejection check can be
moved offline, by selecting nonces that satisfy the Boundary Clearance Condition, leaving only a
$q_s$-bounded $\bs_2$ residual in the emitted signature. This yields an arbitrary-$N$, FIPS-204-exact
threshold ML-DSA under standard assumptions, for a bounded number of signatures per key, with a low online
round count. Two profiles realize it: a one-round TEE variant and a two-round honest-majority MPC variant,
both built on a Carry Elimination Framework whose necessity we establish (unconditionally for a wide
nonce, with overwhelming probability for the boxed one) by an impossibility result for
one-round homomorphic shortcuts. The price of exactness is made precise by a lower bound (any
FIPS-exact scheme revealing a summed nonce is $q_s$-bounded) and an upper bound whose distance to it is an
exact per-level constant, paired with a mandatory key rotation. The construction, its FIPS-exact output, and a hardware-validated single-party core are complete;
unforgeability is proven against malicious adversaries (given well-formed key-generation outputs), and the
simulation-based malicious-model privacy proof
and the DKG instantiation are developed in the extended version.

\appendix

\section{Assumptions}
\label{app:assumptions}
We use the two assumptions of the ML-DSA security analysis~\cite{DKLLSS18,KLS18,FIPS204}, at the parameters of the
targeted FIPS-204 level.

\begin{definition}[Module-LWE]\label{def:mlwe}
For a modulus $q$, rank $k$, and secret bound $\eta$, the (decisional) Module-LWE problem is to distinguish
$(\bA,\bt)$ with $\bA\getsr\Rq^{k\times\ell}$ and $\bt=\bA\bs_1+\bs_2$ for short $\bs_1,\bs_2$
($\infnorm{\bs_i}\le\eta$) from $(\bA,\bu)$ with $\bu\getsr\Rq^k$. Key hiding reduces to Module-LWE: the
public key $\bt=\bA\bs_1+\bs_2$ is exactly a Module-LWE sample.
\end{definition}

\begin{definition}[SelfTargetMSIS]\label{def:stmsis}
For a random oracle $\Hash$ and bound $\gamma$, the SelfTargetMSIS problem is to find
$(\br,c,\mu)$ with $\infnorm{\br}\le\gamma$, $c$ in the challenge space, and
$c=\Hash\!\big(\mu \,\Vert\, [\,\bA\mid \bt\mid \mathbf I\,]\cdot(\br,\,-c)\big)$, i.e. a short vector that
hashes to its own challenge. This is ML-DSA's own unforgeability assumption; a forgery yields such a tuple
directly, and the straight-line extraction of \S\ref{sec:ub} produces the short vector.
\end{definition}

We do \emph{not} assume any one-more or otherwise non-standard variant; Definitions~\ref{def:mlwe}
and~\ref{def:stmsis} are exactly those of single-party ML-DSA.

\section{Carry Elimination Framework: full protocol}
\label{app:cef}
We specify CEF and the honest-majority MPC primitives it uses (secret-shared addition is local; secret-shared
multiplication and opening are the standard honest-majority gates).

\paragraph{Masked broadcast.} Each committee party $h$ holds $\by_h$ and forms $\bA\by_h$, decomposed into
$(w_{1,h},r_{0,h})$. It samples a \emph{full-stripe} mask $\rho_h$ (each coordinate uniform on the whole
stripe $[0,\alpha)$, secret-shared so $\sum_h\rho_h$ is jointly known but individually hidden) and a masked
high part $\tilde H_h$. Both the masked high parts and the masked low parts $r_{0,h}+\rho_h$ are combined by a
\emph{secure summation} under pairwise-zero-sum additive masks, so that \emph{only} the aggregates
$\sum_h\tilde H_h$ and $B=\sum_h(r_{0,h}+\rho_h)$ are opened, never an individual $r_{0,h}+\rho_h$. Because at
least one honest mask is uniform on a full stripe,
$B\bmod\alpha$ is uniform, a one-time pad on the low sum (Lemma~\ref{lem:fuzzy}), and reveals nothing about
$\rzero$. The aggregate stripe count $\lfloor B/\alpha\rfloor$, now spanning $\lceil\log_2|C|\rceil+1$ bits
because the summed masks can wrap several stripes, is public. To recover the \emph{true} aggregate carry the
parties subtract the masks' own stripe count $\kappa^\rho_j=\lfloor\sum_h\rho_{h,j}/\alpha\rfloor$, a
sub-stripe borrow $b_j$, and the modulus-wrap count $k_j=\lfloor W_j/q\rfloor$ (where $W_j=\sum_h\hat\bw_{h,j}$
is the integer aggregate; all computed by secure comparison on the shares and never opened), and apply the
FIPS round-to-nearest centering correction $\delta_j=[\,(\bw_j\bmod\alpha)>\gamma_2\,]$ (the $\Decompose$
centering bit for $\alpha=2\gamma_2$, evaluated on the mod-$q$-corrected residue
$\bw_j\bmod\alpha=(B_j-\sum_h\rho_{h,j}-k_j)\bmod\alpha$, never opened). Then
\[
  w_{1,j} \;=\; \big(\textstyle\sum_h \tilde H_{h,j} + \lfloor (B_j-k_j)/\alpha\rfloor - \kappa^\rho_j - b_j
  + \delta_j\big) \bmod m ,
\]
which is byte-exact against single-party $\HighBits$ by Lemma~\ref{lem:reconstruct} below (the
$\sum_h\tilde H_{h,j}\equiv\sum_h w_{1,h,j}\pmod{m}$ equivalence is the mask-cancellation property of
Lemma~\ref{lem:fuzzy}(i); the modulus-wrap carry $k$ is included), as we confirm by an exhaustive numerical
check across all levels and committee sizes.
The reference implementation of \S\ref{sec:impl} instead reconstructs the aggregate in the clear (mod-$q$
reduction is then immediate but $\bw$ is not hidden); the $\bw$-hidden reconstruction specified here, with $k$
recovered on shares, is the secure form.
For $T=2$ the comparison is a distributed comparison function; for $T\ge 3$ a carry-save-adder tree with a
short prefix comparison, in $O(\log N)$ offline rounds. The modulus is $m=(q-1)/\alpha=16$ at levels~3/5 and
$m=44$ at level~2 (the reduction is parameterized in $m$). The committee variant runs this over $\ge 2T-1$
parties (\S\ref{sec:mpc-why}); the offline protocol is Algorithm~\ref{alg:preprocess}, its carry step
Algorithm~\ref{alg:cef}.

\begin{lemma}[Carry reconstruction, exact]\label{lem:reconstruct}
Fix a coordinate $j$. Let $W_j=\sum_h\hat w_{h,j}$ be the integer aggregate of the per-party products and
$k_j=\lfloor W_j/q\rfloor$ its modulus-wrap count. Write $S_r=\sum_h r_{0,h,j}$, $S_\rho=\sum_h\rho_{h,j}$,
$B_j=S_r+S_\rho$, and set the corrected low sum $S_r^\ast=S_r-k_j$, $x=S_r^\ast\bmod\alpha$,
$y=S_\rho\bmod\alpha$. With $\kappa^\rho_j=\lfloor S_\rho/\alpha\rfloor$ and $b_j=[\,((B_j-k_j)\bmod\alpha)<y\,]$,
the aggregate stripe carry $\kappa_j=\lfloor S_r^\ast/\alpha\rfloor$ is recovered as
$\kappa_j=\lfloor(B_j-k_j)/\alpha\rfloor-\kappa^\rho_j-b_j$, and
$w_{1,j}=(\sum_h w_{1,h,j}+\kappa_j+\delta_j)\bmod m=\HighBits(\bA\by)_j$, where
$\delta_j=[\,x>\gamma_2\,]\in\{0,1\}$ is the round-to-nearest centering bit ($\alpha=2\gamma_2$). Both $k_j$ and
$\kappa_j$ are reconstructed on shares by secure comparison and are never opened.
\end{lemma}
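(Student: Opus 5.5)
Fix a coordinate $j$ and argue purely at the integer level; there are two claims. The first is the mask-subtraction identity $\kappa_j=\lfloor(B_j-k_j)/\alpha\rfloor-\kappa^\rho_j-b_j$. The second is that adding $\kappa_j$ and the centering bit $\delta_j$ to the per-party high parts gives $\HighBits(\bA\by)_j$.

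For the first claim, write $S_r^\ast=\kappa_j\alpha+x$ and $S_\rho=\kappa^\rho_j\alpha+y$ with $x,y\in[0,\alpha)$. Then $B_j-k_j=S_r^\ast+S_\rho=(\kappa_j+\kappa^\rho_j)\alpha+(x+y)$ with $x+y\in[0,2\alpha)$. Hence $\lfloor(B_j-k_j)/\alpha\rfloor=\kappa_j+\kappa^\rho_j+[x+y\ge\alpha]$. We also have $(B_j-k_j)\bmod\alpha=x+y-\alpha[x+y\ge\alpha]$. This residue is $<y$ exactly when $x+y\ge\alpha$ (using $x<\alpha$ in one direction and $x\ge0$ in the other). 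So $b_j=[x+y\ge\alpha]$, and the identity follows by rearranging. This step is routine; it only needs floor division with a two-term remainder.

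For the second claim, use $\hat w_{h,j}=w_{1,h,j}\alpha+r_{0,h,j}$ as integers. Here I take the per-party decomposition on the representative in $[0,q)$, with $r_{0,h,j}$ the centered low part. Summing over $h$ gives $W_j=\alpha\sum_hw_{1,h,j}+S_r$. The true coefficient is $\bw_j=W_j-k_jq=W_j-k_j(m\alpha+1)$, using $q=m\alpha+1$. Therefore $\bw_j=\alpha(\sum_hw_{1,h,j}-k_jm)+S_r^\ast=\alpha(\sum_hw_{1,h,j}+\kappa_j-k_jm)+x$. Reducing mod $m$, the $k_jm$ term vanishes, so $\lfloor\bw_j/\alpha\rfloor\equiv\sum_hw_{1,h,j}+\kappa_j \pmod m$ and $\bw_j\bmod\alpha=x$. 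The FIPS $\Decompose$ rounds to nearest, so it adds $1$ to the floor exactly when $x>\gamma_2$, i.e.\ it adds $\delta_j$.

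The main obstacle is matching the FIPS boundary conventions exactly. I would treat two edge cases by direct case analysis:
\begin{enumerate*}[label=(\roman*)]
\item the special rule in $\Decompose$ setting $w_1=0$ and decrementing $r_0$ when $\bw_j-r_0=q-1$;
\item per-party centering, where a negative $r_{0,h,j}$ shifts $w_{1,h,j}$ by one.
\end{enumerate*}
Case (ii) is absorbed because $\hat w_{h,j}=w_{1,h,j}\alpha+r_{0,h,j}$ holds for either convention. For case (i), the top stripe $[q-1-\gamma_2,q-1]$ gives $\lfloor\cdot/\alpha\rfloor+\delta=m$, and this is $\equiv0\pmod m$, which agrees with FIPS once the result is reduced mod $m$. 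This is why the final formula is taken mod $m$. I would also need to confirm that the low sums stay small enough that $k_j$ as defined matches the number of $q$-wraps. Finally, the statement that $k_j$ and $\kappa_j$ are reconstructed on shares and never opened follows from the protocol: each quantity is computed by the secure comparisons of Algorithm~\ref{alg:cef}. Correctness then reduces to the integer identities above.
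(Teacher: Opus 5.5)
Your proposal is correct and is essentially the paper's proof: the same stripe-carry identity $B_j-k_j=(\kappa_j+\kappa^\rho_j)\alpha+(x+y)$ giving $b_j=[x+y\ge\alpha]$, and the same use of $q=m\alpha+1$, which kills the $k_jm\alpha$ term mod $m$ and leaves the floor high part plus the centering bit $\delta_j$, with the $w_1=m\to0$ wrap absorbed by reducing mod $m$. Two cosmetic points: your worry about whether $k_j$ matches the number of $q$-wraps needs no further check, since $W_j\ge 0$ and $k_j=\lfloor W_j/q\rfloor$ give $\bw_j=W_j-k_jq$ by definition; and the special-case region is the half-open interval $(q-1-\gamma_2,\,q-1]$, because at $q-1-\gamma_2$ itself $x=\gamma_2$, $\delta_j=0$, and the sum is $m-1$, which still agrees with FIPS.
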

\begin{proof}
The true coordinate is $\bw_j=W_j\bmod q$. Since $q=m\alpha+1$, $W_j\bmod q=W_j-k_j(m\alpha+1)=(W_j-k_j)-k_j
m\alpha$; the $k_j m\alpha$ term is a multiple of $\alpha$ and $\equiv 0\pmod m$, so
$\HighBits(\bw_j)\equiv\HighBits(W_j-k_j)\pmod m$, with common low residue $(W_j-k_j)\bmod\alpha$. Writing
$W_j-k_j=(\sum_h w_{1,h,j})\alpha+S_r^\ast$, the stripe-carry identity applies to the shifted sum:
$B_j-k_j=\alpha(\lfloor S_r^\ast/\alpha\rfloor+\kappa^\rho_j)+(x+y)$ with $x+y\in[0,2\alpha)$, so
$\lfloor(B_j-k_j)/\alpha\rfloor=\kappa_j+\kappa^\rho_j+[\,x+y\ge\alpha\,]$; and $(B_j-k_j)\bmod\alpha<y$ iff
$x+y\ge\alpha$ (as $x<\alpha$), so $b_j=[\,x+y\ge\alpha\,]$ and subtracting gives $\kappa_j$. Then $\sum_h
w_{1,h,j}+\kappa_j=\lfloor(W_j-k_j)/\alpha\rfloor$ is the floor high part; FIPS-204 $\HighBits$ is its centered
variant $\lfloor\cdot/\alpha\rfloor+[\,x>\gamma_2\,]$, so adding $\delta_j$ yields it exactly, the value $m$
realizing $\Decompose$'s $w_1{=}m\!\to\!0$ wrap via $\bmod\,m$. (Without the $-k_j$ correction the formula
reconstructs $\HighBits$ of the un-reduced integer $W_j$, which differs from $\bw_j$ whenever a rounding
boundary lies in $(\bw_j,\bw_j+k_j]$; this is the $q$-wrap the aggregate must clear.)
\end{proof}

\begin{algorithm}[t]
\caption{$\mathsf{TALUS.Preprocess}$: offline nonce preparation (MPC profile).}\label{alg:preprocess}
\begin{algorithmic}[1]
\Statex \textbf{Input:} committee $C$, $|C|\ge 2T{-}1$; pairwise seeds; session id.\quad
  \textbf{Output:} $(\wone,\mathsf{com},\{\hat\by_i\}_{i\in[N]})$ or $\bot$.
\State \textbf{Nonce DKG:} each $h\in C$ samples degree-$(T{-}1)$ $g_h$ with free term $\hat\by_h\gets
  [-\gamma_1/|C|{+}1,\gamma_1/|C|]^{n_\ell}$ (the restricted range ensures the $|C|$-party sum stays inside $[-\gamma_1{+}1,\gamma_1]$); sends $g_h(i)$ to every $i\in[N]$
  \Comment{each $i$ stores signing share $\hat\by_i\!:=\!\sum_h g_h(i)$; only $C$ runs CEF}
\For{$h\in C$} \Comment{$\hat\by_h=g_h(0)$: $h$'s additive contribution to $\by$}
  \State $\hat\bw_h\gets\bA\hat\by_h$;\ $H_h\gets\lfloor\hat\bw_h/\alpha\rfloor\bmod m$;\ $r_{0,h}\gets\hat\bw_h\bmod\alpha$
  \State sample \emph{full-stripe} $\rho_h\gets[0,\alpha)^{n_\ell}$; derive pairwise-zero-sum $\mathsf{mask}^H_h,\mathsf{mask}^L_h$ from seeds
  \State broadcast $\tilde H_h\gets(H_h+\mathsf{mask}^H_h)\bmod m$ and $r_{0,h}+\rho_h+\mathsf{mask}^L_h$ \Comment{secure summation: only $\sum_h$ opens}
\EndFor
\State open $B\gets\sum_h(r_{0,h}+\rho_h)$ \Comment{$B\bmod\alpha$ is a one-time pad (Lem.~\ref{lem:fuzzy})}
\State $(\kappa^\rho,b,\delta,k)\gets\mathsf{CEF}(\{\rho_h\},\{w_{1,h}\},B)$ \Comment{Algorithm~\ref{alg:cef}; $k=$ mod-$q$ wrap}
\State $w_{1,j}\gets(\sum_h\tilde H_{h,j}+\lfloor(B_j-k_j)/\alpha\rfloor-\kappa^\rho_j-b_j+\delta_j)\bmod m$
  \Comment{exact, Lem.~\ref{lem:reconstruct}}
\State $\rzero\gets\bw-\alpha\wone$ \Comment{low part on shares; $\bw{=}\sum_h\hat\bw_h$ in $\Z_q$, never opened}
\State $\mathsf{pass}\gets\bigwedge_j[\,|(\rzero)_j|<\gamma_2-\beta\,]$ \Comment{secure comparison; open only the key-independent BCC bit}
\If{$\neg\mathsf{pass}$} \State \Return $\bot$ \Comment{discard; retry with a fresh nonce DKG} \EndIf
\State one designated party forms $\mathsf{com}\gets\Hash(\wone\Vert r)$ for fresh $r$ and broadcasts $\mathsf{com}$;\quad \Return $(\wone,\mathsf{com},r,\{\hat\by_i\}_{i\in[N]})$
\end{algorithmic}
\end{algorithm}

\begin{algorithm}[t]
\caption{$\mathsf{CEF}$: carry elimination on the shared full-stripe masks (per coordinate).}\label{alg:cef}
\begin{algorithmic}[1]
\Statex \textbf{Input:} shared masks $\{\rho_h\}$, shared high parts $\{w_{1,h}=\lfloor\hat\bw_h/\alpha\rfloor\}$; public $B$.\quad
  \textbf{Output:} $(\kappa^\rho,b,\delta,k)$, boolean-shared, never opened.
\State boolean-share each $\rho_h$; derive Beaver triples from pairwise seeds
\For{$\ell=1,\dots,\lceil\log_2|C|\rceil$} \Comment{carry-save-adder tree}
  \State apply a $4{:}2$ carry-save compressor via one round of Beaver ANDs
\EndFor
\State obtain carry-save $(S,C)$ with $S{+}C=\sum_h\rho_h$ (boolean-shared)
\State $S_r\gets B-(S{+}C)$;\quad $S_H\gets\sum_h w_{1,h}$ \Comment{low sum, high sum; on shares}
\State $k\gets\lfloor(S_H\,\alpha+S_r)/q\rfloor$ \Comment{modulus-wrap count, secure comparison; never opened}
\State $\kappa^\rho\gets\lfloor(S{+}C)/\alpha\rfloor$ \Comment{mask stripe-count; never opened}
\State $x\gets(S_r-k)\bmod\alpha$ \Comment{$=\bw\bmod\alpha$, mod-$q$ corrected; on shares}
\State $b\gets[\,((B-k)\bmod\alpha)<((S{+}C)\bmod\alpha)\,]$ \Comment{sub-stripe borrow}
\State $\delta\gets[\,x>\gamma_2\,]$ \Comment{round-to-nearest centering bit ($\alpha=2\gamma_2$)}
\State \Return $(\kappa^\rho,b,\delta,k)$
\end{algorithmic}
\end{algorithm}

\begin{lemma}[Fuzzy-$\rzero$ hiding]\label{lem:fuzzy}
Let $\mathcal A$ corrupt a minority $C'\subsetneq C$ of the committee, so at least one honest party
contributes a mask each of whose coordinates is uniform on $[0,\alpha)$ and unknown to $\mathcal A$. Then:
(i) each honest party's masked high part $\tilde H_{h,j}$ and masked low broadcast
$(r_{0,h}+\rho_h+\mathsf{mask}^L_h)_j$ are uniform from $\mathcal A$'s view (pairwise one-time pads), so only
their aggregates are revealed;
(ii) the opened aggregate $B_j=\sum_h(r_{0,h}+\rho_h)_j$ satisfies $B_j\bmod\alpha$ \emph{uniform} on
$[0,\alpha)$ and \emph{independent} of the true low sum $\sum_h r_{0,h,j}$, a perfect one-time pad.
Hence the only information about the low parts that leaks is the coarse quotient $\lfloor B_j/\alpha\rfloor$,
\emph{independently of $|C|$}.
\end{lemma}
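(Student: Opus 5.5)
Both parts rest on the one-time-pad property of a uniform additive mask: if $U$ is uniform on a finite group and independent of $V$, then $U+V$ is uniform and independent of $V$. For (i), fix an honest party $h\notin C'$. Its broadcast high part is $\tilde H_h=(H_h+\mathsf{mask}^H_h)\bmod m$ and its low broadcast is $r_{0,h}+\rho_h+\mathsf{mask}^L_h$. The pairwise-zero-sum masks are derived from seeds shared between pairs of parties. Because $C'$ is a strict minority, the committee contains at least two honest parties, so every honest $h$ shares at least one seed with another honest party $h'$, and $\mathcal A$ does not know that seed. I will condition on all values $\mathcal A$ knows: its own seeds, its own shares, and the corrupt contributions. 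Conditioned on these, the term $\mathsf{mask}_h$ derived from the $(h,h')$ seed is uniform (in $\Z_m$ for the high part, and in the integer/mod-$q$ domain of the secure summation for the low part). It is also independent of $(H_h,r_{0,h},\rho_h)$.

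I will not argue party by party in isolation. Instead I take the standard secure-summation view: the vector of honest broadcasts is a uniform element of the coset whose sum equals the honest aggregate minus the known corrupt masks. I will prove this by ordering the honest parties and absorbing one fresh honest-honest seed at each step, which gives a simple induction. The consequence is that the joint view of the honest broadcasts is simulatable from their aggregate alone, which is the "only aggregates are revealed" claim. The low-part mask domain needs care: it must be a group, e.g.\ integers reduced modulo a sufficiently large modulus, so that uniformity is exact. I will state this assumption explicitly.

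For (ii), let $h^\ast$ be an honest party and fix coordinate $j$. Write $B_j=\rho_{h^\ast,j}+R$, where $R=\sum_h r_{0,h,j}+\sum_{h\ne h^\ast}\rho_{h,j}$. Here $\rho_{h^\ast,j}$ is uniform on $[0,\alpha)$ and independent of $R$ and of the true low sum, since the masks are sampled fresh and independently of the nonce shares. Then $B_j\bmod\alpha=(\rho_{h^\ast,j}+R)\bmod\alpha$. This is uniform on $\Z_\alpha$ conditioned on any value of $R$, hence also conditioned on any value of $\sum_h r_{0,h,j}$ jointly with $\mathcal A$'s view. This gives exact independence. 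The final sentence then follows directly: $B_j$ is determined by the pair $(B_j\bmod\alpha,\lfloor B_j/\alpha\rfloor)$, and the first component carries no information about the low sum. By (i), $B_j$ (together with $\sum_h\tilde H_h$) is everything opened about the low parts. The CEF outputs $(\kappa^\rho,b,\delta,k)$ stay boolean-shared and are never opened, so honest-majority secure comparison contributes only simulatable shares. None of these arguments mention $|C|$ except through the existence of one honest mask, which gives independence from $|C|$.

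The main obstacle is step (i). I must make sure the pairwise masks truly hide each individual honest broadcast when $\mathcal A$ holds its own seeds. This requires at least two honest parties, which the strict-minority hypothesis guarantees once $|C|\ge 3$ (the committee has $|C|\ge 2T-1\ge 3$ for $T\ge 2$). I also have to handle the domain subtlety for the integer-valued low broadcast. If the summation were over unbounded integers, the pairwise masks could not be exactly uniform. I would resolve this first by fixing the secure summation to run modulo a modulus exceeding the maximum possible $|B_j|$, so the opened sum is recovered exactly and uniformity holds in that group.
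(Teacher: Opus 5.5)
Your arguments for (i) and (ii) are the paper's one-time-pad argument. In (i) you are more careful than the paper's proof on two points. First, a pairwise seed hidden from $\mathcal A$ exists because the strict-minority hypothesis leaves at least two honest committee members; a seed shared with a corrupt party is known to $\mathcal A$, so the paper's ``else $h$ would be corrupt'' is not quite the right reason. Second, exact uniformity of the low broadcast does require the secure summation to run in a finite group such as $\Z_M$, with $M$ larger than the range of $B_j$.

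The genuine gap is the last step, where you say the conclusion ``follows directly''. The paper's proof takes the same step without argument (``Only $\lfloor B_j/\alpha\rfloor$ \ldots\ leave the broadcast''). Part (ii) gives only \emph{marginal} independence of $B_j\bmod\alpha$ from the low sum $S_r=\sum_h r_{0,h,j}$. To conclude that $B_j$, i.e.\ the pair $(B_j\bmod\alpha,\lfloor B_j/\alpha\rfloor)$, tells $\mathcal A$ nothing about $S_r$ beyond $\lfloor B_j/\alpha\rfloor$, you would need $B_j\bmod\alpha$ to be independent of $S_r$ \emph{conditionally on} the quotient, and that fails.

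Already with no corrupt parties, $B_j=S_r+\sum_h\rho_{h,j}$ with every $\rho_{h,j}\ge 0$. So observing $B_j=b$ rules out every $S_r>b$. The quotient event $\lfloor B_j/\alpha\rfloor=\lfloor b/\alpha\rfloor$ alone still gives $S_r=b+1$ positive probability whenever $b+1$ lies in the same stripe, since the masks sum to a small value with positive probability. Once the quotient is known, the remainder is therefore informative.

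What (i) and (ii) actually establish is weaker. The opened $B_j$ is a location observation of the low sum, with noise equal to the sum of the honest masks, whose support ends sharply at $0$. Its residue mod $\alpha$ is useless \emph{on its own}. The statement that only the coarse quotient leaks does not follow, and in the information-theoretic sense it is false. Neither does the quantitative $|C|$-independence, since the noise width grows with the number of honest masks. To close this you must either weaken the lemma's final sentence to ``the leakage is $B_j$ itself'' or change the protocol so the full $B_j$ is not opened. Note that the paper's later edge-free, variance-$\alpha^2/12$ accounting of this residual rests on the unproven sentence.
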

\begin{proof}
(i) $\tilde H_{h,j}=(w_{1,h,j}+\mathsf{mask}^H_{h,j})\bmod m$ where the additive mask includes a
pairwise term keyed on a seed unknown to $\mathcal A$ (else $h$ would be corrupt), so it is uniform in $\Z_m$
and one-time-pads $w_{1,h,j}$; the pairwise masks sum to $0\bmod m$ by antisymmetry, so aggregation is
unaffected. The same pairwise construction ($\mathsf{mask}^L_h$, $\sum_h\mathsf{mask}^L_h=0$) one-time-pads
each low broadcast and cancels in $B$, so only $B$ is revealed, never an individual $r_{0,h}+\rho_h$.
(ii) $B_j\bmod\alpha=(\sum_h r_{0,h,j}+\sum_h\rho_{h,j})\bmod\alpha$; since at least one honest
$\rho_{h,j}$ is uniform on $[0,\alpha)$ and independent of the rest, $\sum_h\rho_{h,j}\bmod\alpha$ is uniform
on $\Z_\alpha$, so $B_j\bmod\alpha$ is uniform and independent of $\sum_h r_{0,h,j}$. Only
$\lfloor B_j/\alpha\rfloor$ and the boundary corrections (computed on the shared masks, never opened) leave
the broadcast.
\end{proof}
Because $B\bmod\alpha$ is a perfect one-time pad, the opening of Proposition~\ref{prop:carrypriv} leaks
nothing about $\rzero$ beyond the unmasked coarse quotient $\lfloor B/\alpha\rfloor$, and is
\emph{committee-independent}. The residual quotient $\lfloor B/\alpha\rfloor$ is a variance-$\alpha^2/12$
integer observation with no sharp edge to exploit, so only the quadratic moment estimator applies (not the
edge-exploiting one of the $\bs_2$ channel), at a per-key wall $4(\alpha^2/12)/\tau=\alpha^2/(3\tau)$. This is
$\ge 2^{28.2}$ across levels ($2^{28.2}/2^{30.8}/2^{30.5}$; \S\ref{sec:security}), above the $\bs_1$ wall and
hence non-binding, which we disclose rather than treat as zero.

\paragraph{From $w_{1,j}$ to the pool.} The formula above reconstructs $\wone$ in the clear, which is the
one-round design. For the two-round profile, $\wone$ must instead stay hidden until Round~1: the reference
route commits to $\wone$ under an offline hiding-and-binding (ROM-hash) commitment and opens it online;
producing a genuinely secret-shared $\sthr{\wone}$ opened in Round~1 is an equivalent refinement. Either
keeps $\wone$ hidden until after $\mu$. The low part $\rzero=\bw-\alpha\wone$ then gives the BCC bit as the
opened conjunction of the per-coefficient comparisons.

\paragraph{Residuals (stated, not hidden).} The masked broadcast opens a fuzzy $\rzero$ that is a one-time
pad on the low sum (Lemma~\ref{lem:fuzzy}); the only residual is the coarse quotient $\lfloor B/\alpha\rfloor$
(committee-independent per-key wall $\ge 2^{28.2}$ across levels; non-binding). That residual and the accumulation of BCC-fail bits fold into the
$q_s$-bounded accounting of \S\ref{sec:security}, below the cap of \S\ref{sec:refresh-cap}. Realizing the multiplications
as authenticated (BGW-style) rather than preprocessing-based triples is what the malicious model of
\S\ref{sec:mpc-malicious} requires.

\section{Lower bound: Fisher information and the estimator}
\label{app:lb}
We give the computation behind Theorem~\ref{thm:lb}.

\paragraph{Fisher information.} Let $f$ be the per-coefficient law of the accepted nonce. It is a truncated
discrete Irwin--Hall, the law of a sum of $\nu$ bounded shares, which we treat via its continuous envelope
(the discreteness shifts the constant by $\lesssim 10^{-3}$ relatively and is absorbed into it;
Remark~\ref{rem:proofstatus}). The share count is set by the profile: it is
$\nu\approx T$ signer shares in the TEE profile, and $\nu=|C|=2T-1$ committee shares in the MPC profile
(\S\ref{sec:mpc}). The location Fisher information $I=\int (f'/f)^2 f$ is positive for every $\nu\ge 2$ and
zero iff $\nu=1$. We quote the two-summand case $\nu=2$, a triangular density (the cleanest closed form,
realized by the TEE profile at $T=2$). Three derivations agree on $I=5.74\times10^{-11}$: the closed form
$I\approx(2/\gamma_1^2)\ln(\gamma_1/\beta)$, whose $\ln$ comes from the triangular edge, a discrete score
computation, and a numerical integral. The Fisher information of $\IH(\nu)$ is non-monotonic in $\nu$
(edge-dominated at $\nu=2$, a minimum near $\nu=3$, then rising with the Gaussian bulk), and the wall moves
inversely, peaking near $\nu=3$: over the deployed range it varies by up to
$\approx 1.4$ bits, the largest committee ($\nu=2T-1$, MPC) giving the \emph{smallest} wall
($\approx 2^{29.6}$ at $T=5$). All stay $\gg Q_{\mathrm{cap}}$ (\S\ref{sec:refresh-cap}; a margin of well
over ten bits at every level and committee size), and the binding $\bs_2$ wall
does not depend on $\nu$.

\paragraph{The estimator.} An accepted signature reveals $\bz=\by+c\bs_1$; on the $\tau$ coordinates that $c$
touches, this is a location observation of $\bs_1$ through the density $f$, of per-observation information
$I$. Challenges across signatures are independent random-oracle outputs, so information accumulates
\emph{linearly}: after $q_s$ signatures the Fisher information matrix is $J=I\,\tau\,q_s\cdot\mathbf{Id}$. Two
estimators realise this, and they must be distinguished. The linear (moment/least-squares) solve
$\hat\bs_1=(\sum_i\mathbf{C}_i^{\!\top}\mathbf{C}_i)^{-1}\sum_i\mathbf{C}_i^{\!\top}\bz_i$ has per-coordinate signal
gain $g=1-2Bf(B)/Z(0)\approx 0.99925$ at $\nu=2$ ($B=\gamma_1-\beta$; $Z(0)$ the normalization of $f$), bounded away from zero for every
$\nu\ge 2$ and \emph{exactly} zero at $\nu=1$, which is why the plain linear attack works on a summed nonce and
provably fails on single-party ML-DSA, where the box truncation cancels the shift ($\mathbb{E}[z\mid u]=0$).
It is a poly-time solve on public signatures and recovers $\bs_1$ once its error drops below the coefficient
spacing, numerically at $q_s\approx 2^{31.8}$ (the moment solve does not attain the Fisher wall; $g$ is the
signal gain, not the efficiency gap). The maximum-likelihood estimator additionally exploits the density's
sharp edge, which the moment solve ignores, to lower the wall to the Fisher value $q_s\approx 2^{30.4}=4/(I\tau)$
at ML-DSA-65 (the $\nu=2$ case); across levels
$2^{26.8}/2^{30.4}/2^{30.0}$. Both are efficient (there is
no computational--statistical gap), so the bound is realised by an actual attack, not merely a converse.

\paragraph{Attack wall, proof wall, cap.} The wall $4/(J(\nu)\tau)$ is \emph{parameter-specific} (it moves
with the nonce provisioning; a scheme that widens its nonce pushes it out). It is the wall of an actual
efficient attack. Separately, a security \emph{proof} by change of measure ceases to be informative earlier,
at the zero crossing $Q_{\max}=2\lambda\ln 2/G(\nu)$ of Theorem~\ref{thm:holder}, $\approx
2^{21.8}/2^{23.4}/2^{24.4}$ (ML-DSA-44/65/87, $\nu{=}3$); the ratio of the two walls is the per-level
constant of Corollary~\ref{cor:gap}. The \emph{operational} cap sits below both, at
$Q_{\mathrm{cap}}\approx 2^{13}/2^{14.5}/2^{15}$: it is bound not by either $\bs_1$ wall but by the
$\bs_2$ channel, the smallest wall in the accounting of \S\ref{sec:security} (\S\ref{sec:refresh-cap}).
Above the cap no attack is known; up to $Q_{\max}$ the certificate~\eqref{eq:lambdaprov} moreover stays
positive. The cap is enforced by \emph{key
rotation}, not proactive share-refresh: the leak is about the \emph{fixed} $\bs_1$ under a fixed public key,
so the Fisher clock $J(\nu)\,\tau\,q_s$ runs over every signature ever emitted under that key and is reset
only by re-running key generation to a fresh $\mathsf{pk}$ (\S\ref{sec:refresh-cap}). The full sweep and the
estimator gains are archived with the artifact.

\section{Upper bound: game sequence}
\label{app:ub}
We give the game sequence behind Theorem~\ref{thm:ub}. Throughout, the ideal endpoint is the
$\bs_1$-independent BCC-nonce distribution, \emph{not} a uniform-nonce ML-DSA oracle: the summed nonce is
not ML-DSA-distributed, so a black-box reduction to single-party ML-DSA does not apply.
\begin{description}[leftmargin=1.6em]
  \item[Game 0.] The real EUF-CMA game.
  \item[Game 1 (key hiding).] Replace the public key $\bt=\bA\bs_1+\bs_2$ by uniform. Indistinguishable
    under Module-LWE (Def.~\ref{def:mlwe}); advantage loss $\mathrm{Adv}^{\mathsf{MLWE}}$.
  \item[Game 2 (HVZK simulation).] Simulate each accepted response without $\bs_1$. The response
    $\bz=(\by+c\bs_1)\mid_{\text{box}}$ is not exactly $\bs_1$-independent, but the first-order dependence
    vanishes because $\infnorm{c\bs_1}\le\beta$ fits inside the acceptance window (the
    ``window-fits-support'' identity), leaving a soft second-order gap. We charge it by a box-restricted
    change of measure (Kullback--Leibler and chi-squared, not statistical distance, which saturates once the
    outputs are distinguishable): on the acceptance box the accepted-nonce density is bounded below, so the
    per-query divergence is $\tfrac12 I\tau\|\bs_1\|_2^2$ to second order, with $I$ the boundary Fisher
    information of \S\ref{sec:boundary}. Accumulated over $q_s$ sessions the resulting security loss
    is $\delta_{\IH}(q_s):=\lambda-\lambda_{\mathrm{prov}}(q_s)$ (Theorem~\ref{thm:holder} below),
    $q_s$-bounded and informative up to $q_s\le Q_{\max}$. BCC is a public
    key-independent predicate applied identically in both worlds, so it adds no term.
  \item[Game 3 (extraction).] A forgery is itself a SelfTargetMSIS witness (Def.~\ref{def:stmsis}): the
    forged $(c,\bz,\bh)$ on a fresh message gives, through the verification relation, a short vector for the
    random-oracle-programmed target, the norm bound following from the Dilithium accounting ($\bz,c\bs_1$
    bounded). Extraction is straight-line: no rewinding or forking, so the reduction keeps the full $\lambda$
    bits rather than a $\sqrt{\cdot}$ fraction. This uses \emph{only} Module-LWE and SelfTargetMSIS, with no
    one-more assumption; the two-round ordering ($\wone$ committed before $\mu$, opened after) is what removes
    the adaptive-$c$ obstruction that would otherwise force a rewind.
\end{description}

\paragraph{The Game 2 transition: mechanism and ordering.}
Game~2 is a distributional change, not a separate simulator: the signing oracle's code is \emph{unchanged},
and only the underlying probability measure shifts. In every session the oracle executes exactly the same
protocol (sample $\by$ from $\IH(\nu)$, BCC-filter, commit $\wone=\HighBits(\bA\by)$, wait for $\mu$, compute
$c$, form $\bz=\by+c\bs_1$, check norms, compute hint, verify, emit $\sigma$). Let $P_0$ denote the joint
distribution of the $q_s$-session transcript in Game~1 (the real signing key $\bs_1$) and $P_2$ the same
protocol run with each response replaced by $\bz'=\by$ (the shift $c\bs_1$ removed). The adversary is never
``told'' which world it inhabits; rather, R\'enyi probability preservation
\[
  \Pr\nolimits_{P_0}[E]\;\le\;e^{(\alpha-1)D_\alpha/\alpha}\;\cdot\;\Pr\nolimits_{P_2}[E]^{(\alpha-1)/\alpha},\qquad D_\alpha=q_s\cdot R_\alpha^{\mathrm{sess}},
\]
bounds the probability of \emph{any} event $E$ (in particular, a forgery) across the two measures, with the
per-session divergence $R_\alpha^{\mathrm{sess}}$ computed below (Lemma~\ref{lem:persession}).

Three features of this step deserve emphasis.
\begin{enumerate}[nosep,leftmargin=1.6em]
  \item \emph{No reduction to single-party ML-DSA.} Both $P_0$ and $P_2$ use the $\IH(\nu)$ nonce law, not
    a uniform-nonce ML-DSA oracle. The IH-to-uniform divergence is therefore \emph{never needed}: the ideal
    endpoint is already $\bs_1$-independent under the summed-nonce law, and the forgery is extracted as a
    SelfTargetMSIS witness directly in Game~3.
  \item \emph{Ordering preserved.} $P_0$ and $P_2$ share the same protocol ordering: $\wone$ is committed
    (offline, before $\mu$) and opened (online, after $\mu$) in both measures. No random-oracle programming
    is needed at this step; the distributional change affects only the conditional law of $\bz$ given
    $(\by,c)$, not the sequencing or the RO. Definition~\ref{def:eufcma} explicitly grants the adversary
    the pre-message $\wone$, and both measures provide it identically.
  \item \emph{Rejected responses.} A coordinator who observes a rejected $\bz$ sees the same $\by+c\bs_1$
    observation that an accepted $\bz$ carries, and rejected sessions are counted toward $q_s$. Hint-weight
    rejections satisfy $\infnorm{\bz}<\gamma_1-\beta$, lie inside the acceptance box, and are covered by the
    same $R_\alpha^{\mathrm{sess}}$ bound (empirically they are the only rejections observed;
    Appendix~\ref{app:empirical}). Responses in the $\bz$-norm region are \emph{not}: a coordinate can land
    beyond the box, re-exposing the support edge the box cuts. We split them off as a bad event of
    per-session probability $2^{-32.5}/2^{-34.7}/2^{-36.6}$ (ML-DSA-44/65/87, $\nu{=}3$; computed exactly,
    Appendix~\ref{app:empirical}), i.e.\ an expected $2^{-19.5}/2^{-20.2}/2^{-21.6}$ occurrences over a full
    $Q_{\mathrm{cap}}$ key lifetime. The full certificate
    of~\eqref{eq:lambdaprov} therefore holds for observers of the accepted transcript and for the TEE
    profile (the enclave never emits a rejected $\bz$); the MPC-coordinator view is certified to
    $\approx 19.5$--$21.6$ bits (no attack on it is known; an on-shares $\bz$-norm check before the Round-2
    reveal would restore the full certificate at the cost of one online secure comparison, a construction
    option we do not take up here).
\end{enumerate}

We now make Game~2 precise. The load-bearing quantity is the change of measure between the real
($\bs_1$-shifted) accepted response and the $\bs_1$-independent simulation, restricted to the acceptance box.
Throughout this subsection $\alpha>1$ denotes the R\'enyi order, unrelated to the stripe width
$\alpha=2\gamma_2$ of \S\ref{sec:carry}; $\sigma^2=\eta(\eta{+}1)/3$ is the per-coordinate variance of a
secret entry, so $\mathbb{E}\|\bs_1\|_2^2=n_\ell\sigma^2$.

\begin{definition}[Box-restricted shift divergence]\label{def:sid}
Let $f$ be the accepted-nonce density on the acceptance box $\mathcal{B}=[-\gamma_1{+}\beta,\gamma_1{-}\beta]$
and $\delta$ a per-coordinate shift with $|\delta|\le\beta$. The per-coordinate R\'enyi-$\alpha$ divergence of
the shifted density from $f$ is
\[
  R_\alpha(\delta)\;=\;\frac{1}{\alpha-1}\,\log\!\!\sum_{x\in\mathcal{B}}\frac{f(x-\delta)^{\alpha}}{f(x)^{\alpha-1}},
  \qquad\alpha>1 .
\]
The box restriction (cutting $\beta$ from each end) keeps $f(x)$ and $f(x-\delta)$ both bounded below, so
$R_\alpha(\delta)$ is finite for \emph{every} $\alpha>1$. The $\alpha<3/2$ divergence of the unrestricted
Irwin--Hall envelope is a support-edge artifact of the continuous approximation, absent here.
\end{definition}

\begin{lemma}[The shift divergence is soft]\label{lem:persession}
On $\mathcal{B}$ the density is bounded below (the support overhangs the box by $\beta$) and piecewise
polynomial, so to second order in the shift
\[
  R_\alpha(\delta)\;=\;\tfrac{\alpha}{2}\,J(\nu)\,\delta^2\,\big(1+O(\delta/a)\big),\qquad a=\gamma_1/\nu,
\]
where $J(\nu)=\int_{\mathcal B}(f'/f)^2 f$ is the box-truncated Fisher information of the summed-nonce law
$\IH(\nu)$: at $\nu{=}2$ it is the closed form $J(2)=I=\tfrac{2}{\gamma_1^2}\ln\tfrac{\gamma_1}{\beta}$ that
powers Theorem~\ref{thm:lb}; at $\nu{=}3$ the Fisher integrand $(f'/f)^2 f$ is $x^2/(2a^3(3a^2{-}x^2))$ on $|x|\le a$ and
the constant $1/(4a^3)$ on $a<|x|\le B$ (the tail's score is $-2/(3a{-}|x|)$), giving
$J(3)=\bigl({-}1+\sqrt3\operatorname{arctanh}(1/\sqrt3)\bigr)/a^2+(B{-}a)/(2a^3)$, i.e.\
$J(3)\gamma_1^2\approx 10.3$; $J(5)\gamma_1^2\approx 15.3$ (the
same non-monotonicity in $\nu$ as Appendix~\ref{app:lb}). Summing the $n_\ell$
independent coordinates and taking the challenge expectation ($\mathbb{E}_c\|c\bs_1\|_2^2=\tau\|\bs_1\|_2^2$),
the per-session divergence is
\[
  R_\alpha^{\mathrm{sess}}\;=\;\tfrac{\alpha}{2}\,G(\nu),\qquad G(\nu):=J(\nu)\,\tau\,n_\ell\sigma^2 ,
\]
with $G(3)=1.56\times10^{-5}$ nats at ML-DSA-65 ($\tau=49$, $n_\ell\sigma^2=8533$;
$G(2)=2.40\times10^{-5}$ from $I=5.74\times10^{-11}$). For
the realistic shift $\delta=\mathrm{RMS}(c\bs_1)\approx 18\ll a$ the correction is negligible and every order
is available to the optimizer below; the continuous-envelope constants are cross-checked against an exact
discrete-score sum and a direct integral, all three agreeing (code archived with the artifact).
\end{lemma}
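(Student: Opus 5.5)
Fix a coordinate and write $f$ for the accepted-nonce density on $\mathcal B$. On $\mathcal B$, $f$ is bounded below and piecewise polynomial. The plan is a Taylor expansion of the R\'enyi sum $S_\alpha(\delta)=\sum_{x\in\mathcal B} f(x-\delta)^\alpha f(x)^{1-\alpha}$ to second order in $\delta$, treating the sum via the continuous envelope as the paper does, with the discreteness correction absorbed into the constant. Set $g(x,\delta)=f(x-\delta)/f(x)$. Then $S_\alpha=\int_{\mathcal B} g^\alpha f$, and expanding $g=1-\delta\,f'/f+\tfrac{\delta^2}{2}f''/f+O(\delta^3)$ gives
\[
 S_\alpha = \int_{\mathcal B} f \;-\;\alpha\delta\int_{\mathcal B} f' \;+\;\tfrac{\alpha}{2}\delta^2\int_{\mathcal B} f'' \;+\;\tfrac{\alpha(\alpha-1)}{2}\delta^2\int_{\mathcal B}(f'/f)^2 f \;+\;O(\delta^3).
\]
On the accepted-nonce law normalised to $\mathcal B$, the first term is $1$. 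Taking $\tfrac{1}{\alpha-1}\log$ then yields $\tfrac{\alpha}{2}J(\nu)\delta^2$, provided the first- and second-order boundary terms $\int_{\mathcal B}f'=f(B)-f(-B)$ and $\int_{\mathcal B}f''=f'(B)-f'(-B)$ vanish or are absorbed.

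\textbf{Boundary terms.} The first vanishes by symmetry of $\IH(\nu)$. The second also vanishes by symmetry, since $f'$ is odd. This is exactly the ``window-fits-support'' identity invoked in Game~2. The overhang $\beta$ of the support beyond $\mathcal B$ guarantees that $x-\delta$ stays in the support for $|\delta|\le\beta$, so no $\log 0$ or edge mass appears and $R_\alpha$ is finite for all $\alpha>1$.

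\textbf{Error term.} Next I bound the $O(\delta^3)$ remainder uniformly. On each polynomial piece of the Irwin--Hall density, of scale $a=\gamma_1/\nu$, the score and its derivatives satisfy $|f^{(k)}/f|\lesssim a^{-k}$ away from the outer edge. This yields the relative correction $1+O(\delta/a)$. The main obstacle is the outer region near $|x|=B$, where $f$ is small (for $\nu=2$, linear in $\gamma_1-|x|$, with $\gamma_1-|x|\ge\beta$). There the ratio $\delta/(\gamma_1-|x|)$ is only $O(1)$ rather than small. I would handle it by splitting $\mathcal B$ into a bulk region, where Taylor applies cleanly, and an edge layer of width $O(a)$. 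In the edge layer, because $f$ is polynomial, I would compute $g^\alpha$ exactly and show that its contribution is still second order times $J$ up to the stated relative error. This is also where the $\ln(\gamma_1/\beta)$ in $J(2)$ originates. The piecewise kinks of $f$ (non-smooth junctions at multiples of $a$) contribute only $O(\delta^3)$ jump terms, which I would check by integrating by parts across each junction.

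\textbf{Explicit constants.} For $\nu=3$, I would substitute the explicit piecewise quadratic density: central piece $\propto 3a^2-x^2$, tail $\propto(3a-|x|)^2$. This gives the scores, hence the Fisher integrands, and integrating them yields the closed form for $J(3)$.

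\textbf{Per session.} The $n_\ell$ coordinates are independent given $c$, so the divergences add, giving $\tfrac{\alpha}{2}J\|c\bs_1\|_2^2$. Averaging over $c$ uses $\mathbb{E}_c\|c\bs_1\|_2^2=\tau\|\bs_1\|_2^2$, since the $\tau$ signed monomials give uncorrelated cross terms. Replacing $\|\bs_1\|_2^2$ by its expectation $n_\ell\sigma^2$ then gives $G(\nu)$. The numeric values of $G(2)$ and $G(3)$ follow by direct substitution.
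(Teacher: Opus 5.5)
Your route is the paper's own: a per-coordinate second-order expansion of the R\'enyi sum, additivity over the $n_\ell$ coordinates, then the challenge average (the paper simply asserts the expansion). However, the step that disposes of the second-order boundary term is wrong. Since $f$ is even, $f'$ is odd, so $f'(-B)=-f'(B)$ and $\int_{\mathcal B}f''=f'(B)-f'(-B)=2f'(B)$. Oddness doubles this term rather than cancelling it, and $f'(B)<0$ strictly: it equals $-1/\gamma_1^2$ for the triangle at $\nu=2$ and $-\beta/(8a^3)$ at $\nu=3$. Inserting this into your own display gives
\[
R_\alpha(\delta)=\tfrac{\alpha}{2}J(\nu)\,\delta^2+\tfrac{\alpha}{\alpha-1}\,f'(B)\,\delta^2+O(\delta^3).
\]
The extra term has the same order in $\delta$ and is not proportional to $J(\nu)$. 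It cannot be absorbed into $1+O(\delta/a)$: at $\nu=2$ it is a relative correction $-1/((\alpha-1)\ln(\gamma_1/\beta))\approx-0.13/(\alpha-1)$ at ML-DSA-65. It also sends $R_\alpha\to-\infty$ as $\alpha\downarrow 1$. That symptom locates the problem. You normalized $f$ on $\mathcal B$ but not the shifted law, whose mass on $\mathcal B$ is $Z(\delta)=\int_{\mathcal B}f(x-\delta)\,dx=1+f'(B)\delta^2+O(\delta^3)<1$. So the sum you expand is not a divergence between probability measures.

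The missing step is to renormalize the accepted shifted law by $Z(\delta)$, or to keep the rejection outcome in the sample space, which is what the probability-preservation step of Game~2 needs anyway. The extra $-\tfrac{\alpha}{\alpha-1}\log Z(\delta)$ then cancels the boundary term exactly. More cleanly, write $p=q(1+\epsilon)$ with $p,q$ both normalized on $\mathcal B$. Then $\int q\,\epsilon=0$ identically, and no symmetry argument is needed for boundary or kink terms. The second-order term is $\tfrac{\alpha(\alpha-1)}{2}\int q\,\epsilon^2=\tfrac{\alpha(\alpha-1)}{2}J(\nu)\delta^2+O(\delta^3)$. Here $Z'(0)=f(-B)-f(B)=0$ guarantees that renormalizing does not change the score.

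Separately, your edge-layer step cannot deliver the error term you promise. On the outer polynomial piece the ratio is exactly $g=(1\pm\delta/u)^{\nu-1}$ with $u=\gamma_1-|x|\ge\beta$. So the expansion of $g^\alpha$ there is governed by $\alpha\delta/\beta$, not by $\delta/a$. At $\nu=2$ the odd orders cancel between the two edges, but the fourth-order term is $\binom{\alpha}{4}\delta^4/(\gamma_1^2\beta^2)$. That is a relative correction of about $(\alpha\delta/\beta)^2/(24\ln(\gamma_1/\beta))$, which is not $O(\delta/a)$. Once $\alpha\delta\gg\beta$, the edge integrand grows like $(1+\delta/\beta)^{(\nu-1)\alpha}$ and eventually dominates, so the quadratic formula is not available at every order. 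What an exact edge computation does give is the quadratic formula plus an explicit $\alpha$-dependent correction. That correction is small at the optimizer's $\alpha$ of a few tens near the cap: a percent or two at $\nu=2$ for $\delta\approx 18$, far less at $\nu=3$. The paper's proof is silent here too, so this is a limitation of the statement itself. Still, you should state the edge bound in this $\alpha$-dependent form rather than as a uniform $1+O(\delta/a)$.
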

\begin{proof}
$\IH(\nu)$ has a piecewise-polynomial density on $[-\gamma_1,\gamma_1]$; at $\nu{=}3$ it is
$f(x)=(3a^2-x^2)/(8a^3)$ for $|x|\le a$ and
$f(x)=(3a-|x|)^2/(16a^3)$ for $a<|x|\le 3a$ ($a=\gamma_1/\nu$). On $\mathcal{B}$ the density is bounded below,
so its score $f'/f$ has finite variance $J(\nu)=\int_{\mathcal B}(f'/f)^2 f$, the $\beta$-truncated Fisher
integral, whose $\nu{=}2$ case is the closed-form $I$ of the lower bound; the second-order expansion of
$R_\alpha$ in the shift is $\tfrac{\alpha}{2}J(\nu)\delta^2$ per coordinate. A weight-$\tau$ challenge maps $\|\bs_1\|_2^2$ to
$\mathbb{E}_c\|c\bs_1\|_2^2=\tau\|\bs_1\|_2^2$; independence across the $n_\ell$ coordinates sums the
per-coordinate divergences. Boundedness below on $\mathcal{B}$ gives finiteness for all $\alpha>1$, so no
$\alpha<3/2$ restriction applies to the box.
\end{proof}

\paragraph{Key dependence of $G(\nu)$.} $G(\nu)$ evaluates the per-session cost at the key-generation mean
$\mathbb{E}[\|\bs_1\|_2^2]=n_\ell\sigma^2$; for a fixed key, $n_\ell\sigma^2$ is replaced by $\|\bs_1\|_2^2$.
By concentration over the $n_\ell$ iid coordinates, $\|\bs_1\|_2^2$ deviates from its mean by less than $8\%$
at the $3\sigma$ level across all FIPS levels, shifting $Q_{\max}$ by $<0.12$ bits for all but a negligible
fraction of keys; the table evaluates at the mean. Even the maximal key ($\bs_1=\pm\eta$ throughout, an
exponentially unlikely draw) shifts $Q_{\max}$ by only $\log_2(\eta^2/\sigma^2)\le 1.3$ bits, retaining
$\ge 110$ bits at each level's cap at $\nu{=}3$ ($\approx 106$ in the $\nu{=}2$ corner).

\begin{theorem}[Optimized change-of-measure bound]\label{thm:holder}
Let the ideal-world forging advantage be at most $2^{-\lambda}$ ($\lambda=128$, used at every level as the
conservative common target). Game~3 extracts
a SelfTargetMSIS witness directly from a forgery (the KLS/Dilithium route: the forgery \emph{is} the witness,
no rewinding), so the reduction keeps the full $\lambda$ base, inheriting the standard non-tightness of
SelfTargetMSIS relative to MSIS as in Dilithium~\cite{KLS18,FIPS204}. By R\'enyi probability preservation,
for every $\alpha>1$ the certified security after $q_s$ independent sessions is
\begin{equation}\label{eq:security}
  \lambda_{\mathrm{prov}}(q_s)\;\ge\;\frac{\alpha-1}{\alpha}\,\lambda\;-\;\frac{\alpha-1}{2\ln 2}\,q_s\,G(\nu),
  \qquad G(\nu)=J(\nu)\,\tau\,n_\ell\sigma^2 ,
\end{equation}
where $J(\nu)$ is the box-truncated Fisher information of the summed-nonce law $\IH(\nu)$, so
$R_\alpha^{\mathrm{sess}}=\tfrac{\alpha}{2}G(\nu)$ (Lemma~\ref{lem:persession}). Optimizing $\alpha=\sqrt{\lambda/C'}$ with $C'=q_s G(\nu)/(2\ln 2)$ gives the closed form
\begin{equation}\label{eq:lambdaprov}
  \lambda_{\mathrm{prov}}(q_s)\;=\;\big(\sqrt{\lambda}-\sqrt{C'}\big)^2
  \;=\;\lambda\Big(1-\sqrt{q_s/Q_{\max}}\Big)^2 ,\qquad
  Q_{\max}=\frac{2\lambda\ln 2}{G(\nu)} .
\end{equation}
The bound approaches $\lambda$ at $q_s{=}1$ (the box keeps every order finite, so the optimizer is free to
take $\alpha$ large) and vacates ($\alpha\to 1$) at $Q_{\max}$.
\end{theorem}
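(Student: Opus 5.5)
The plan is to derive \eqref{eq:security} from R\'enyi probability preservation applied to the forgery event $E$ between the real measure $P_0$ (Game~1) and the $\bs_1$-free measure $P_2$ (Game~2), then obtain \eqref{eq:lambdaprov} by optimizing over $\alpha$.

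\textbf{Step 1: total divergence.} The sessions use independent nonces and independent random-oracle challenges. The only difference between $P_0$ and $P_2$ is the conditional law of $\bz$ given $(\by,c)$: everything else is common code. Because $\wone$, BCC and the commitment are identical in both measures, they contribute nothing. By the chain rule (additivity) of R\'enyi divergence over a product of sessions, and adaptively via the standard conditional-independence argument, the total divergence is $D_\alpha\le q_s R_\alpha^{\mathrm{sess}}$. Lemma~\ref{lem:persession} gives $R_\alpha^{\mathrm{sess}}=\tfrac{\alpha}{2}G(\nu)$, so $D_\alpha\le \tfrac{\alpha}{2}q_sG(\nu)$ in nats.

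\textbf{Step 2: probability preservation.} The preservation inequality gives
\[
\Pr_{P_0}[E]\le \exp\bigl(\tfrac{\alpha-1}{\alpha}D_\alpha\bigr)\Pr_{P_2}[E]^{(\alpha-1)/\alpha}.
\]
In $P_2$ the forgery is extracted straight-line as a SelfTargetMSIS witness (Game~3), so $\Pr_{P_2}[E]\le 2^{-\lambda}$ up to the Module-LWE term of Game~1. Taking $-\log_2$ yields
\[
\lambda_{\mathrm{prov}}\ge\tfrac{\alpha-1}{\alpha}\lambda-\tfrac{\alpha-1}{\alpha}\cdot\tfrac{\alpha q_sG}{2\ln 2}.
\]
The factor $\alpha$ cancels, which is exactly \eqref{eq:security}.

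\textbf{Step 3: optimization.} Write $C'=q_sG/(2\ln2)$. The bound becomes
\[
\lambda-\lambda/\alpha-(\alpha-1)C'=\lambda+C'-(\lambda/\alpha+\alpha C').
\]
This is maximized at $\alpha=\sqrt{\lambda/C'}$, valid when $\alpha>1$, i.e.\ $C'<\lambda$. The maximum value is $\lambda+C'-2\sqrt{\lambda C'}=(\sqrt\lambda-\sqrt{C'})^2$. Since $C'/\lambda=q_s/Q_{\max}$ with $Q_{\max}=2\lambda\ln2/G$, this equals $\lambda(1-\sqrt{q_s/Q_{\max}})^2$.

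\textbf{Endpoint behavior.} At $q_s=1$, $C'$ is tiny and $\alpha$ is large. This is admissible only because the box restriction keeps every order finite (Definition~\ref{def:sid}). As $q_s\to Q_{\max}$, $\alpha\to1$ and the bound vacates.

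\textbf{Main obstacle.} Step~1 is the hard part: justifying that the second-order per-coordinate expression of Lemma~\ref{lem:persession} actually upper-bounds the exact per-session divergence for all $\alpha$ the optimizer uses. This needs to hold uniformly across the adaptively chosen challenges, the averaging over $c$ (Jensen is in the wrong direction for $\log\mathbb E$), and the dependence on the fixed key.

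\textbf{Planned handling of the obstacle.} First, I would bound the $O(\delta/a)$ remainder explicitly using $\|c\bs_1\|_\infty\le\beta\ll a$, absorbing it into a $(1+\epsilon)$ factor on $G$. Second, rather than averaging over $c$, I would use a worst-case bound on $\|c\bs_1\|_2^2$ conditioned on the key, or condition on the key and use $\|\bs_1\|_2^2$ as in the key-dependence paragraph. Third, I would treat out-of-box rejected responses as the separate bad event already accounted for, so the divergence is computed only on the box.
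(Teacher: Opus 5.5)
Your Steps 1--3 are the paper's proof: R\'enyi probability preservation with the session divergences adding to $D_\alpha=q_s R_\alpha^{\mathrm{sess}}$, substitution of $R_\alpha^{\mathrm{sess}}=\tfrac{\alpha}{2}G(\nu)$ so that the factor $\alpha$ cancels, and the stationary point $\alpha=\sqrt{\lambda/C'}$; your algebra and the validity condition $C'<\lambda$ are correct. The obstacle you flag is the one the paper handles by writing the exact per-session cost as the cumulant $\tfrac{1}{\alpha-1}\log\mathbb{E}_c[e^{(\alpha-1)S}]$, with $S=\sum_j R_\alpha((c\bs_1)_j)$, and collapsing it to $\mathbb{E}_c[S]$ because $(\alpha-1)S\ll 1$ in the operating range (using the exact cumulant at small $q_s$), whereas your worst-case-over-$c$ alternative would replace the mean $\tau\|\bs_1\|_2^2$ by something larger (the generic bound is $\tau^2\|\bs_1\|_2^2$) and so would certify a smaller $Q_{\max}$ than the stated one.
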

\begin{proof}
R\'enyi probability preservation gives $\Pr_{\mathrm{real}}[E]\le e^{(\alpha-1)D_\alpha/\alpha}\,
\Pr_{\mathrm{ideal}}[E]^{(\alpha-1)/\alpha}$ with $D_\alpha=q_s R_\alpha^{\mathrm{sess}}$ the total divergence
(R\'enyi divergences of independent sessions add). With $E$ the forgery event, $\Pr_{\mathrm{ideal}}[E]\le
2^{-\lambda}$ and $R_\alpha^{\mathrm{sess}}=\tfrac{\alpha}{2}G(\nu)$ (Lemma~\ref{lem:persession}) this is
\eqref{eq:security}; its stationary point in $\alpha$ is $\sqrt{\lambda/C'}$, giving~\eqref{eq:lambdaprov}. The
per-session cost is the cumulant form $\tfrac{1}{\alpha-1}\log\mathbb{E}_c[e^{(\alpha-1)S}]$, $S=\sum_j
R_\alpha((c\bs_1)_j)$; since $(\alpha-1)S\ll 1$ throughout the operating range $q_s\gtrsim 2^{10}$ it
collapses there to
$\mathbb{E}_c[S]=\tfrac{\alpha}{2}G(\nu)$ up to a $q_s$-independent $O(0.1)$-bit correction (the key-conditioned
tilt is likewise linear; verified numerically); the small-$q_s$ endpoint of Table~\ref{tab:si-loss-new} is
evaluated with the exact cumulant.
\end{proof}

\paragraph{Where the bound applies: the box must overhang.} Lemma~\ref{lem:persession}, and with
it~\eqref{eq:lambdaprov}, needs the summed-nonce support to overhang the acceptance box by $\beta$, so the box
cuts off the support edge and every R\'enyi order stays finite. This holds exactly when the adversary knows
\emph{no} nonce shares: the revealed nonce is then $\IH(\nu)$ on the full $[-\gamma_1,\gamma_1]$. Two regimes
follow.
\begin{itemize}[nosep,leftmargin=1.6em]
  \item \emph{The external view (both profiles).} An adversary holding no nonce shares -- an outside observer
    of either profile -- sees the accepted signatures, carrying the full $\nu$-share sum, in the overhang
    regime. Then
    \eqref{eq:lambdaprov} governs and (Table~\ref{tab:si-loss-new}) certifies $\approx 116$--$118$ bits at
    each level's operational cap at the table's $\nu{=}3$ ($\approx 114$--$116$ at $\nu{=}2$;
    $\approx 112$--$121$ at the table's reference $q_s{=}2^{14}$).
  \item \emph{Insiders (MPC committee members, TEE signer coalitions).} A party knowing $t$ of the $\nu$
    shares subtracts its own contributions from the response, leaving the conditional
    nonce $\IH(m)$, $m=\nu-t$, whose support $\pm m\gamma_1/\nu$ retracts strictly inside the box. The edge is
    re-exposed, $R_\alpha$ diverges for $\alpha\ge 1+\tfrac{1}{m-1}$, and the same optimization caps the
    certificate at $\tfrac{\alpha-1}{\alpha}\lambda\le\lambda/m$. For a corrupt minority of a $\nu=2T{-}1$
    committee, $t=T{-}1$ and $m=T$, a ceiling of $\lambda/T$: $\lambda/2=64$ bits at $T{=}2$ ($\nu{=}3$),
    about $42.7$ bits at $T{=}3$ ($\nu{=}5$); the honest majority keeps $m\ge 2$, so the insider ceiling
    stays positive. In the TEE profile a coalition of $T{-}1$ of the $T$ signers leaves
    $m{=}1$, a shifted \emph{uniform} residual whose support edge no R\'enyi order survives: there the box
    argument certifies nothing. These are limits of the box argument, not known attacks: an MPC minority
    insider's own informative event (a residual coordinate beyond its $\IH(m)$ support, its excess
    attributable to the honest secret; computed exactly at the $m{=}2$ worst case) has expected count
    $\approx 2^{-4.4}/2^{-4.6}/2^{-5.0}$ over a full key lifetime
    (Appendix~\ref{app:empirical}), no insider attack
    below the operational cap is known at any of these views, and a tight insider bound (nonce smoothing, or
    an adaptive box) is left open.
\end{itemize}

\begin{remark}[Extension to malicious adversaries]\label{rem:malicious-eufcma}
The game sequence above applies unchanged when the corrupt parties deviate arbitrarily from the
protocol (Definition~\ref{def:eufcma} already permits this). At each transition the corrupt parties'
contributions are known to the adversary and factor out:
\begin{itemize}[nosep,leftmargin=1.6em]
  \item \emph{Game~$0\to 1$.} The public key decomposes as
    $\bt=\bA(\bs_{1,\mathrm{hon}}{+}\bs_{1,\mathrm{cor}})+(\bs_{2,\mathrm{hon}}{+}\bs_{2,\mathrm{cor}})$;
    the adversary subtracts the known $\bs_{1,\mathrm{cor}},\bs_{2,\mathrm{cor}}$, reducing key hiding to
    M-LWE on the honest (short) secrets.
  \item \emph{Game~$1\to 2$.} In each signing session the adversary subtracts its known nonce
    and key contributions, observing
    $\bz{-}\by_{\mathrm{cor}}{-}c\bs_{1,\mathrm{cor}}=\by_{\mathrm{hon}}+c\bs_{1,\mathrm{hon}}$,
    a shifted observation of $\by_{\mathrm{hon}}\sim\IH(m)$. The corrupt contribution is the same
    constant in both measures and does not enter the R\'enyi divergence; the insider ceiling $\lambda/m$
    above applies directly.
  \item \emph{Game~$2\to 3$.} SelfTargetMSIS extraction is straight-line and
    adversary-behaviour-independent.
\end{itemize}
In the MPC profile no zero-knowledge proofs are needed for this: well-formedness of the corrupt
contributions is extractable by honest-majority interpolation ($N\ge 2T{-}1$: the honest parties hold
$\ge T$ evaluations of every corrupt degree-$(T{-}1)$ sharing), with the DKG instantiation itself deferred;
in the TEE profile that extraction role falls to the key-generation proofs of \S\ref{sec:mpc-malicious}.
ZK-extractable
inputs are otherwise required only for simulation-based transcript privacy (\S\ref{app:malicious}) and for
identifiable abort. The certificate covers the accepted-transcript view; the MPC coordinator's
rejected-response view is bounded separately (item~3 of the Game~2 mechanism above).
\end{remark}

\begin{corollary}[The $\lambda$-versus-$H(\bs_1)$ gap]\label{cor:gap}
In the overhang regime the proof wall $Q_{\max}$ and the attack wall $4/(J(\nu)\tau)$ of
Theorem~\ref{thm:lb} satisfy
\[
  \frac{4/(J(\nu)\tau)}{Q_{\max}}\;=\;\frac{2\,n_\ell\sigma^2}{\lambda\ln 2},
\]
independent of $J(\nu)$ and of the committee count: a per-level constant $2^{5.5}/2^{7.6}/2^{6.3}$
(ML-DSA-44/65/87, evaluated at the key-generation mean $n_\ell\sigma^2$). The two walls measure
different things. Distinguishing the transcript from its simulation costs $O(\lambda)$ bits of
accumulated leakage; recovering $\bs_1$, hence forging, costs $\Theta(n_\ell\sigma^2)\gg\lambda$ bits
of accumulated information. Between the walls the transcript already fails the box
closeness the simulation needs while no key recovery is yet possible; certifying to the attack wall would
require asserting that the accumulated leakage is not forging-useful until $\bs_1$ is recovered, which is
exactly the attack's hardness.
\end{corollary}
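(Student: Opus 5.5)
The ratio identity is pure algebra on the two closed forms already in hand, so I would first derive it symbolically and only then interpret it. From Theorem~\ref{thm:holder}, $Q_{\max}=2\lambda\ln 2/G(\nu)$ with $G(\nu)=J(\nu)\,\tau\,n_\ell\sigma^2$ (Lemma~\ref{lem:persession}). From Theorem~\ref{thm:lb}, read with the box-truncated information $J(\nu)$ in place of $I$ (the two coincide at $\nu=2$, and Appendix~\ref{app:lb} states the wall as $4/(J(\nu)\tau)$ for general $\nu$), the attack wall is $4/(J(\nu)\tau)$. Dividing gives
\[
\frac{4/(J(\nu)\tau)}{Q_{\max}}=\frac{4}{J(\nu)\tau}\cdot\frac{J(\nu)\,\tau\,n_\ell\sigma^2}{2\lambda\ln 2}=\frac{2\,n_\ell\sigma^2}{\lambda\ln 2}.
\]
Both $J(\nu)$ and $\tau$ cancel, and with them every dependence on the committee count $\nu$. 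This proves the displayed identity and the independence claim.

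The numerical constants follow by evaluating $n_\ell\sigma^2$ with $\sigma^2=\eta(\eta+1)/3$ and $n_\ell$ from Table~\ref{tab:params}, at $\lambda=128$:
\begin{itemize}
\item ML-DSA-44: $1024\cdot 2=2048$;
\item ML-DSA-65: $1280\cdot 20/3\approx 8533$;
\item ML-DSA-87: $1792\cdot 2=3584$.
\end{itemize}
The ratios are $2\cdot 2048/(128\ln 2)\approx 2^{5.5}$, $2\cdot 8533/(128\ln 2)\approx 2^{7.6}$ and $2\cdot 3584/(128\ln 2)\approx 2^{6.3}$. As in the paragraph on key dependence of $G(\nu)$, these use the key-generation mean. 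For a fixed key, $\|\bs_1\|_2^2$ replaces $n_\ell\sigma^2$ in $Q_{\max}$, and the same concentration bound controls the deviation.

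The interpretive part explains why the gap is $n_\ell\sigma^2$ versus $\lambda$, and I would argue it from the two proofs. The change-of-measure bound vacates once the accumulated divergence $q_sG(\nu)/2$ reaches order $\lambda\ln 2$ nats, that is, $O(\lambda)$ bits of distinguishing information. Key recovery needs the Fisher matrix $J(\nu)\tau q_s\,\mathbf{Id}$ to drive the per-coordinate error below the spacing. Equivalently, the accumulated information per coordinate must exceed roughly $4/\sigma^2$ relative to its prior variance $\sigma^2$, so the total, summed over $n_\ell$ coordinates, scales as $\Theta(n_\ell\sigma^2)$ times the per-observation quantity. The ratio of the two totals is therefore the stated constant.

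The last sentence of the corollary is the part I expect to be the real obstacle, because it is not a theorem but a barrier statement. I would make it precise only as a conditional. Any reduction that certifies security past $Q_{\max}$ by comparing the real transcript to the $\bs_1$-free simulation must fail there, because by the Rényi/Pinsker converse the transcripts are already $\Omega(1)$-distinguishable. Going further would require an assumption that the distinguishable-but-non-recovering leakage does not aid forgery, and that assumption is equivalent to hardness of the estimation problem of Theorem~\ref{thm:lb}. I would present this as a justification, not as a formal impossibility.
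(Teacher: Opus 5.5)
Your derivation of the identity and the per-level constants is exactly the paper's route. You cancel $J(\nu)\tau$ between $Q_{\max}=2\lambda\ln 2/G(\nu)$ and the $\nu$-matched attack wall $4/(J(\nu)\tau)$, then evaluate $n_\ell\sigma^2$ at the key-generation mean. Your values $2048/8533/3584$ give $2^{5.5}/2^{7.6}/2^{6.3}$ as stated. Like the paper, you present the final sentence as an argument rather than a theorem.

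One correction to your barrier paragraph: the reason you give proves too much. You say the natural reduction fails at $Q_{\max}$ because the transcripts are already $\Omega(1)$-distinguishable there. But statistical distinguishability sets in once the accumulated divergence $q_sG(\nu)/2$ reaches a constant number of nats, i.e.\ at $q_s=\Theta(1/G(\nu))$. At ML-DSA-65 that is roughly $2^{17}$--$2^{19}$, well below $Q_{\max}\approx 2^{23.4}$. At $q_s=2^{19}$ the certificate \eqref{eq:lambdaprov} still gives about $78$ bits. The paper uses a R\'enyi change of measure precisely because statistical distance saturates that early.

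What actually makes the bound vacate at $Q_{\max}$ is exhaustion of the divergence budget. Even in the best case $\alpha\to 1$, the accumulated divergence $q_sG(\nu)/2$ reaches $\lambda\ln 2$ nats there. Beyond that point, probability preservation can no longer transfer the $2^{-\lambda}$ ideal-world bound to the real game.

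Phrasing it this way also fixes your garbled ``$4/\sigma^2$ relative to its prior variance'' sentence. Both walls are readings of the same clock $D(q_s)=q_sG(\nu)/2$. It equals $\lambda\ln 2$ nats at $Q_{\max}$. At the attack wall, where $J(\nu)\tau q_s=4$, it equals $2n_\ell\sigma^2$ nats. Their ratio is exactly the stated constant, which is the precise content of ``$O(\lambda)$ versus $\Theta(n_\ell\sigma^2)$ bits.''
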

\noindent We therefore argue that no black-box reduction to Module-LWE and SelfTargetMSIS closes the gap: a
barrier argument for the natural reduction class, short of a meta-reduction
(Remark~\ref{rem:proofstatus}).

\begin{table}[h]
\centering\small
\caption{The upper bound in the overhang regime (the external view), by level, at the committee share
  count $\nu{=}3$ with its matched $J(3)$. $\lambda_{\mathrm{prov}}$
  is the certificate of~\eqref{eq:lambdaprov} at the reference $q_s=2^{14}$; the proof wall is its
  zero crossing; the attack wall is the $\nu$-matched $4/(J(3)\tau)$ of Theorem~\ref{thm:lb} (\S\ref{sec:lb}
  quotes the two-share case $\nu{=}2$, $0.6$--$0.7$ bits lower); the gap is their $J$-independent ratio.
  Insider views fall in the retracted regime (ceiling $\lambda/m$ for $m\ge 2$; nothing at $m{=}1$), treated
  above.}
\label{tab:si-loss-new}
\setlength{\tabcolsep}{5pt}
\begin{tabular}{lccccc}
\toprule
Level & $\lambda_{\mathrm{prov}}$ at $2^{14}$ & Proof wall & Attack wall & Gap & $q_s{=}1$ \\
\midrule
ML-DSA-44 & $\approx 112$ & $2^{21.8}$ & $2^{27.4}$ & $2^{5.5}$ & $\approx\lambda$ \\
ML-DSA-65 & $\approx 118$ & $2^{23.4}$ & $2^{31.0}$ & $2^{7.6}$ & $\approx\lambda$ \\
ML-DSA-87 & $\approx 121$ & $2^{24.4}$ & $2^{30.7}$ & $2^{6.3}$ & $\approx\lambda$ \\
\bottomrule
\multicolumn{6}{l}{\footnotesize Exact box $R_\alpha$, $\alpha$ optimized on an integer grid; code archived with the artifact.}\\
\multicolumn{6}{l}{\footnotesize The gap is the $J$-independent ratio $2n_\ell\sigma^2/(\lambda\ln 2)$, computed exactly; individual walls are rounded to one decimal.}
\end{tabular}
\end{table}

\begin{remark}[Scope and open constants]\label{rem:proofstatus}
The Fisher information $J(\nu)$ is computed from the continuous $\IH(\nu)$ envelope (closed-form at $\nu{=}2,3$;
Lemma~\ref{lem:persession}) and cross-checked against a discrete-score sum and a direct integral, agreeing to
within the discreteness gap: $\lesssim 10^{-3}$ relative at $\nu{=}2$ and $\approx 4\times10^{-6}$ at
$\nu{=}3$ (edge-dominated), shifting the table's entries by $<0.01$ bits and absorbed in its rounding. And~\eqref{eq:lambdaprov} charges all accumulated leakage as potentially forging-useful; a
non-black-box argument that only $\bs_1$-recovery enables a forgery would move the certified wall toward the
attack wall, but is unproven and would itself be the assumption named in Corollary~\ref{cor:gap}. The
machine-checked write-up, the insider tightening (the MPC committee and TEE signer-coalition views), and the
full-concurrency treatment (open, as for FROST
without its binding factor) accompany the extended version.
\end{remark}

\section{Carry-necessity: full proof}
\label{app:trilemma}
We prove Theorem~\ref{thm:trilemma}. Identify $\Rq\cong\Z_q^{256}$ coordinate-wise, so
$\HighBits=h^{\times nk}$ for the scalar map $h:\Z_q\to\{0,\dots,m-1\}$ ($m=(q-1)/\alpha\ge 16$), and $h$ is
non-constant (property~(P)). Model assumptions: (A1) $\bA$ injective; (A2) $\varphi$ an explicit
$\Z_q$-linear map given by a known matrix $\Phi\in\Z_q^{d\times n\ell}$; (A4) a public trapdoor-free $\psi$
with $\wone(\by)=\psi(\varphi(\by))$ for all $\by$; (A5) hiding, i.e.\ a simulator $S$ with
$S(\wone(\by))\approx\varphi(\by)$.

\begin{lemma}[Cyclic invariance]\label{lem:cyclic}
Let $V\le\Z_q^{nk}$ be a subspace and $u\in V$. If $\HighBits(v+u)=\HighBits(v)$ for all $v\in V$, then $u=0$.
\end{lemma}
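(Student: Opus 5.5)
The plan is to argue by contradiction. Suppose $u\in V$ is nonzero and $\HighBits(v+u)=\HighBits(v)$ for all $v\in V$. We iterate the hypothesis along the cyclic subgroup generated by $u$ and then use primality of $q$ to sweep a full coordinate of $\Z_q$. This forces the scalar map $h$ to be constant, contradicting property~(P).

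\emph{Iteration along $u$.} Since $V$ is a subspace and $u\in V$, every $v+tu$ with $v\in V$ and $t\in\Z_q$ lies in $V$. Applying the hypothesis to $v+(t-1)u$ gives
\[
\HighBits(v+tu)=\HighBits(v+(t-1)u).
\]
By induction on $t=1,\dots,q-1$ we get $\HighBits(v+tu)=\HighBits(v)$ for every $t\in\Z_q$. Taking $v=0\in V$ yields $\HighBits(tu)=\HighBits(0)$ for all $t\in\Z_q$.

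\emph{Reduction to one coordinate.} Since $u\ne 0$, pick a coordinate $i$ with $u_i\ne 0$. Because $\HighBits=h^{\times nk}$ acts coordinate-wise, the previous identity gives $h(t\,u_i)=h(0)$ for every $t\in\Z_q$. Because $q$ is prime, $u_i$ is a unit, so $t\mapsto t\,u_i$ is a bijection of $\Z_q$. Hence $h(x)=h(0)$ for all $x\in\Z_q$, i.e.\ $h$ is constant. This contradicts (P): $h$ takes $m\ge 16$ distinct values, for instance $h(0)=0$ while $h(\alpha)=1$. Therefore $u=0$.

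I expect no real obstacle. The only point needing care is that the iteration stays inside $V$, which the subspace hypothesis guarantees; in fact only closure of $V$ under adding $u$, together with $0\in V$, is used. The other point is that primality of $q$ is exactly what makes a single nonzero coordinate generate all of $\Z_q$. For a composite modulus the argument would only cover a proper subgroup, and $h$ could be invariant along it.
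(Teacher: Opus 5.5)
Your proof is correct and follows essentially the paper's argument: chain the invariance along the line $\{tu\}\subseteq V$, restrict to a coordinate with $u_i\neq 0$, and use primality of $q$ so that the multiples of $u_i$ exhaust $\Z_q$, which forces $h$ to be constant and contradicts (P). The only difference is cosmetic: you iterate at the vector level starting from $v=0$ and then project to a coordinate, whereas the paper projects first and chains $h((t+1)a)=h(ta)$ directly.
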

\begin{proof}
Suppose $u\ne 0$ and pick a coordinate $j$ with $a:=u_j\ne 0$. Since $V$ is a subspace, $t u\in V$ for all
$t\in\Z_q$; evaluating the hypothesis at $v=tu$ and reading coordinate $j$ gives $h((t{+}1)a)=h(ta)$ for all
$t$. As $q$ is prime and $a\ne 0$, $a$ is a unit, so $x\mapsto x+a$ is a single $q$-cycle on $\Z_q$; chaining
over $t=0,\dots,q-1$ makes $h$ constant on $\{0,a,\dots,(q{-}1)a\}=\Z_q$, contradicting (P). Hence $u=0$.
\end{proof}
The line $\{tu\}$ lies entirely in $V=\mathrm{Im}(\bA)$, and on it coordinate $j$ already runs through all of
$\Z_q$; so injectivity of $\bA$ suffices and surjectivity of $\bA\by$ is not needed.

\begin{lemma}\label{lem:kerphi}
$\ker\varphi\subseteq\ker\bA$.
\end{lemma}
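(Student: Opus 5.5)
The plan is to combine the correctness assumption (A4) with the cyclic-invariance Lemma~\ref{lem:cyclic}, applied to the subspace $V=\mathrm{Im}(\bA)$. Fix $\mathbf{u}\in\ker\varphi$; the goal is to show $\bA\mathbf{u}=0$.

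The first step uses linearity. For every nonce $\by$ we have $\varphi(\by+\mathbf{u})=\varphi(\by)$. By (A4) this gives
\[
\HighBits(\bA\by+\bA\mathbf{u})=\psi(\varphi(\by+\mathbf{u}))=\psi(\varphi(\by))=\HighBits(\bA\by).
\]
Set $v=\bA\by$. Then $\HighBits(v+\bA\mathbf{u})=\HighBits(v)$ for every $v$ in the image of the nonce domain under $\bA$. The shift $\bA\mathbf{u}$ lies in $V=\mathrm{Im}(\bA)$, which is a $\Z_q$-subspace of $\Z_q^{nk}$.

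The second step extends this identity to all of $V$ and then invokes Lemma~\ref{lem:cyclic}. In the wide-nonce model, where $\by$ ranges over all of $\Z_q^{n\ell}$, the image is exactly $V$. Lemma~\ref{lem:cyclic} with $u=\bA\mathbf{u}$ then gives $\bA\mathbf{u}=0$, so $\mathbf{u}\in\ker\bA$. This is unconditional. As the remark after Lemma~\ref{lem:cyclic} notes, only the line $\{t\bA\mathbf{u}\}$ is used, so it suffices that the identity hold along $v=t\bA\mathbf{u}$, that is, at nonces $\by=t\mathbf{u}$.

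The main obstacle is the boxed ML-DSA nonce $\by\in[-\gamma_1{+}1,\gamma_1]^{n\ell}$. There, (A4) is only assumed on the box, so $t\mathbf{u}$ need not be an admissible nonce. I would first try to avoid needing the whole line. If $\mathbf{u}\in\ker\varphi$ and $\by$, $\by+\mathbf{u}$ both lie in the box, the identity still holds, so one may restrict to short kernel vectors. If $\ker\varphi$ contains some nonzero $\mathbf{u}$, it contains a subspace. Using the mild expansion hypothesis (that $\bA$ maps the box so that some coordinate of $\bA\mathbf{u}$ is a unit whose multiples $t(\bA\mathbf{u})_j$, for small $t$, sweep across a rounding boundary of $h$ while $t\mathbf{u}$ stays admissible, or is reached from a box point $\by_0$), one can chain the equalities $h(x_0+(t{+}1)a)=h(x_0+ta)$ across a stripe boundary of $h$. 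This contradicts the fact that $h$ jumps there, for all but a negligible fraction of $\rho$. I expect this boxed case to be the delicate step. In the proof I would state it conditionally on the expansion hypothesis and give the wide-nonce case as the clean unconditional version.
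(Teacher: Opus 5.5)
Your argument is the paper's proof. For $k\in\ker\varphi$, linearity and (A4) give $\HighBits(\bA\by+\bA k)=\HighBits(\bA\by)$ for all $\by$, and Lemma~\ref{lem:cyclic} applied on $V=\mathrm{Im}(\bA)$ with $u=\bA k$ forces $\bA k=0$; your boxed-nonce discussion goes beyond what the lemma needs (it is stated in the wide model, where (A4) holds for every $\by$), and the paper treats that regime separately via a single-step expansion hypothesis (some coordinate of $\bA k$ of magnitude $\ge 2\gamma_2$) rather than by chaining small multiples.
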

\begin{proof}
Let $k\in\ker\varphi$. For every $\by$, $\varphi(\by+k)=\varphi(\by)$, so by (A4)
$\wone(\by+k)=\wone(\by)$, i.e.\ $\HighBits(\bA\by+\bA k)=\HighBits(\bA\by)$. Putting $v=\bA\by\in
V=\mathrm{Im}(\bA)$ and $u=\bA k\in V$, Lemma~\ref{lem:cyclic} gives $\bA k=0$, i.e.\ $k\in\ker\bA$.
\end{proof}

\emph{Injectivity forces a leak.} By (A1) $\ker\bA=\{0\}$, so $\ker\varphi=\{0\}$ and $\varphi$ is injective;
$\Phi$ then has full column rank and an explicit left inverse $\Phi^{+}$, so from the broadcast
$z=\Phi\by$ anyone recovers $\by=\Phi^{+}z$. Since $\wone=\HighBits\circ\bA$ is massively non-injective
(range $\le m^{nk}\ll q^{n\ell}=$ domain), pick $\by_0\ne\by_1$ with $\wone(\by_0)=\wone(\by_1)=:\omega$;
the real broadcasts $\varphi(\by_0)\ne\varphi(\by_1)$ are two distinct points, and no simulator $S(\omega)$
can be statistically close to both (they are at distance $1$; the distinguisher ``output $1$ iff
$\Phi^{+}z=\by_0$'' separates them). So (A5) fails even against an unbounded simulator: the broadcast leaks
strictly more about $\by$ than $\wone$ does. This proves Theorem~\ref{thm:trilemma}.

\paragraph{Boxed-nonce regime.} The argument above is unconditional for a wide (coset-covering) nonce.
For the ML-DSA boxed nonce $\by\in B=\{\infnorm{\by}<\gamma_1\}$ the cyclic sweep leaves $B$, and the result
holds with high probability over key generation under a mild expansion hypothesis on $\bA$: for every
$0\ne k\in\ker\varphi$ with a box pair, some coordinate of $\bA k$ has magnitude $\ge 2\gamma_2$, giving a
single-step boundary crossing (for pseudorandom $\bA$ this fails only with probability $\le(2/m)^{nk}$).

\paragraph{Why $g^{(\cdot)}$ is not a counterexample.} The discrete-log map is excluded on two independent
grounds: it is not an explicit invertible $\Z_q$-map (its inverse is the discrete log), so (A2) and the
left-inverse leak do not apply; and there is no efficient $\psi$ that exposes $\HighBits$ from $g^{x}$
(extracting high bits of a discrete log is as hard as the discrete log). This is exactly why the FROST
template ports to Schnorr (whose challenge reads off the opaque group element) but not to ML-DSA, whose
challenge needs the rounded pre-image.

\section{Semi-honest simulation}
\label{app:semihonest}
The emitted signature $\sigma=(c,\bz,\bh)$ already carries the $\bs_1$ channel (through $\bz$) and the
$\bs_2$ channel (through $\bh$), so both are available to any simulator given $\sigma$. The only value a
semi-honest party opens \emph{beyond} $\sigma$ is the fuzzy $\rzero$ of CEF, and by Lemma~\ref{lem:fuzzy}
that reduces to the coarse quotient $\lfloor B/\alpha\rfloor$. This makes the semi-honest view simulatable up
to that residual (and the online rejected-attempt observations, the same $\bs_1$ channel).

\begin{proposition}[Semi-honest simulation]\label{prop:semihonest}
Let $\mathcal A$ semi-honestly corrupt at most $T{-}1$ parties (of which those in $C$ form a minority
$|C'\!|\!<\!|C|/2$). There is a PPT simulator that, given $(\mathsf{pk},\sigma)$
and the corrupted parties' inputs and randomness, outputs a view computationally indistinguishable from
$\mathcal A$'s real view in a signing session, up to the $q_s$-bounded MPC residuals of \S\ref{sec:security}:
the coarse quotient $\lfloor B/\alpha\rfloor$ of the CEF opening (one per run, $\approx 3.15$ per emitted
signature; per-key wall $\ge 2^{28.2}$, non-binding) together with the rejected-attempt response observations.
In the TEE profile $B$ is never opened and the simulation is exact from $(\mathsf{pk},\sigma)$.
\end{proposition}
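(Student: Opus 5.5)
The simulation proceeds message by message through a single signing session, in protocol order: preprocessing (nonce DKG, CEF masked broadcast, BCC bit, commitment), then Round~1 (opening of $\mathsf{com}$), then Round~2 (the $\bz_h$). The simulator gets $(\mathsf{pk},\sigma)$ and the corrupted inputs and randomness. It runs the corrupt parties honestly and replaces only the honest parties' messages with simulated ones, and we show this by a hybrid argument. Throughout, $\wone$ counts as public: it is recomputable from $\sigma$ via $\wone=\UseHint(\bh,\bA\bz-c\bt_1 2^d,\alpha)$, and the BCC bit is a key-independent public bit per \S\ref{sec:bcc-independence}.

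\textbf{Preprocessing.} Since $|C'|<T$, the honest parties' Shamir evaluations $g_h(i)$ sent to corrupt parties are jointly uniform, so the simulator samples them uniformly. By Lemma~\ref{lem:fuzzy}(i), each honest party's masked broadcasts $\tilde H_h$ and $r_{0,h}+\rho_h+\mathsf{mask}^L_h$ are uniform in the adversary's view subject only to the aggregate constraints. The simulator samples them uniformly and then adjusts one honest party's values so that the aggregates match targets it chooses. For $\sum_h\tilde H_h$, the target is forced by $\wone$ together with the hidden corrections of Lemma~\ref{lem:reconstruct}. Because those corrections ($\kappa^\rho,b,\delta,k$) stay boolean-shared and never open, $\sum_h\tilde H_h\bmod m$ is itself uniform from the adversary's view: an honest pairwise mask pads it. For $B$, Lemma~\ref{lem:fuzzy}(ii) makes $B\bmod\alpha$ uniform and independent of $\rzero$, so the simulator samples it uniformly. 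The coarse quotient $\lfloor B/\alpha\rfloor$ is the one value it cannot produce from $(\mathsf{pk},\sigma)$ alone, and this is exactly the disclosed residual: we give the simulator this quotient (or its key-dependent law) as the leakage term in the statement. The secure comparisons and Beaver multiplications inside Algorithm~\ref{alg:cef} are standard honest-majority gates on uniform boolean shares. For them the simulator uses the semi-honest BGW/Beaver simulator, which outputs uniform shares of the outputs and, for the single opened BCC bit, the public value $1$ (the session produced a signature). Failed candidates are simulated the same way with output bit $0$. They are key-independent, and there are about $3.15$ of them per signature, which yields the ``one per run'' accounting.

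\textbf{Online rounds.} The commitment $\mathsf{com}=\Hash(\wone\Vert r)$ is simulated in the ROM as a random string. In Round~1 the simulator programs $\Hash$ at $(\wone\Vert r)$, with $\wone$ read off $\sigma$, so that the opening is consistent. Hiding of the ROM commitment makes this indistinguishable, with loss $q_H/2^{|r|}$. Round~2 consists of the honest $\bz_h$. The adversary knows its own $\bz_h$, which are determined by corrupt randomness, $c$, and corrupt key shares. The honest $\bz_h$ are Lagrange-related to the aggregate $\bz$ in $\sigma$, and they carry honest nonce Shamir shares of a degree-$(T-1)$ polynomial with at most $T-1$ corrupt points. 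Hence, conditioned on $\bz$ and the corrupt view, they are uniform subject to $\sum_h\lambda_h\bz_h=\bz$. The simulator samples them uniformly and fixes the last one by this linear constraint. Rejected attempts expose a $\bz$ that is not in any emitted $\sigma$. These are the ``rejected-attempt response observations'' the statement allows, and the simulator receives them as leakage too. In the TEE profile the only message outputs are $\wone$ and $\sigma$, and $B$ is never opened, so the simulation is exact.

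\textbf{Main obstacle.} The delicate step is the joint consistency of the preprocessing simulation with $\sigma$. The real $\rzero$ of the session is correlated with $\bz$ (through $\bA\bz-c\bt_1 2^d=\bw-c\bs_2+c\bt_0$), so I must check that nothing opened in CEF other than $\lfloor B/\alpha\rfloor$ couples to that correlation. My plan is to argue that every other opened value is either padded by an honest uniform mask on a full stripe (Lemma~\ref{lem:fuzzy}) or is the public BCC bit. I would then state the indistinguishability as statistical up to the leakage of $\lfloor B/\alpha\rfloor$ and the rejected responses, and computational only through the ROM commitment.
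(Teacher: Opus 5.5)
Your decomposition is the paper's own: the honest-majority gate simulator for the CEF intermediates, Lemma~\ref{lem:fuzzy} for the masked broadcasts, the BCC bit treated as public, $\mathsf{com}$ built from the $w_1$ read off $\sigma$ (your ROM programming is a harmless variant), and the honest $\bz_h$ drawn from the residual nonce-polynomial randomness. One step, however, is false, and it is a genuine gap. You argue that the opened aggregate $\sum_h\tilde H_h\bmod m$ is uniform because an honest pairwise mask pads it. It is not. The pairwise masks sum to $0\bmod m$; that cancellation, in Lemma~\ref{lem:fuzzy}(i), is exactly what Lemma~\ref{lem:reconstruct} relies on. Hence $\sum_h\tilde H_{h,j}\equiv\sum_h w_{1,h,j}$, and Lemma~\ref{lem:reconstruct} gives
\[
  w_{1,j}-\textstyle\sum_h\tilde H_{h,j}\;\equiv\;\kappa_j+\delta_j\pmod m,\qquad \kappa_j+\delta_j\in\{-1,0,\dots,|C|\}.
\]
Both terms on the left are in the adversary's view: $w_1$ is opened in Round~1 and is recomputable from $\sigma$. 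So the corrections $k,\kappa^\rho,b,\delta$, though each is never opened, are opened in net, and the true aggregate carry is recovered exactly whenever $|C|+2\le m$. A uniformly chosen target is therefore distinguished with advantage close to $1$: for $|C|+2<m$, the real difference lies in a set of $|C|+2$ residues in every one of the $n_k$ coordinates.

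Sampling $\kappa+\delta$ from its law instead does not help. Combined with the fully opened $B$, it gives $B_j-\alpha(\kappa_j+\delta_j)=(\rzero)_j+k_j+\sum_h\rho_{h,j}$. That is $\rzero$ blurred only by the honest masks' unreduced sum, plus a wrap $0\le k_j<|C|$. Meanwhile $\rzero$ is tied to the key through $\mathbf O=\rzero+c(\bt_0-\bs_2)=\bA\bz-c\bt_1 2^d-\alpha\wone$, which is computable from $\sigma$. So this value carries strictly more than $\lfloor B/\alpha\rfloor$, its law given $\sigma$ is key-dependent, and it is neither padded by a full-stripe mask nor equal to the BCC bit. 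The plan in your \emph{Main obstacle} paragraph breaks exactly here.

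The paper's step~(ii) does not supply the missing argument either: it draws the masked broadcasts uniformly without reconciling their opened sum with $w_1$. A correct proof must therefore do one of two things. Either add this aggregate-carry leakage to the charged residuals and bound it. Or change the protocol so that $\sum_h H_h$ stays shared and $\sthr{\wone}$ is computed entirely on shares, and then show that $w_1$ alone does not expose $\kappa+\delta$.

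Two smaller fixes are also needed. First, in Round~2 the honest $\bz_h$ are not uniform subject only to $\sum_h\lambda_h\bz_h=\bz$. Let $Z$ be the online sharing polynomial, with $Z(i)=\hat\by_i+c\bs_{1,i}$. Every corrupt party, inside or outside the quorum, knows $Z(i)$ at its own index. You must therefore sample the degree-$(T{-}1)$ polynomial $Z$ subject to $Z(0)=\bz$ and all corrupt evaluation points, as the paper does; with $T{-}1$ corrupt parties the honest values are then fully determined. Second, uniformity of the honest evaluations $g_h(i)$ rests on there being at most $T{-}1$ corrupt parties in total, not on $|C'|<T$, because corrupt parties outside $C$ also receive evaluations.
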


\begin{proof}
$\mathcal A$'s view is (i) the CEF secure-multiplication and comparison intermediates; (ii) the masked
broadcasts $r_{0,h}+\rho_h,\tilde H_h$ and the opened $B$; (iii) the opened BCC bit and the commitment
$\mathsf{com}$; (iv) the online reveals $\wone,\{\bz_h\},\bz,\sigma$. We simulate each.
(i) The secret-shared multiplication and comparison gates are semi-honest secure, so their intermediates come
from the standard honest-majority gate simulator, revealing nothing beyond the gate outputs handled below.
(ii) By Lemma~\ref{lem:fuzzy}, the individual masked broadcasts $\tilde H_h$ and $r_{0,h}+\rho_h+\mathsf{mask}^L_h$
are pairwise one-time pads and $B\bmod\alpha$ is a full-stripe one-time pad, even given the corrupt parties'
own masks, so the simulator draws them uniformly; the only real information not so reproduced is the coarse
quotient $\lfloor B/\alpha\rfloor$, the charged residual.
(iii) The BCC bit is key- and message-independent (\S\ref{sec:bcc}), hence a public function of the accepted
nonce, and $\mathsf{com}$ is a hiding commitment to the public $\wone$, simulated by committing to the $\wone$
read from $\sigma$.
(iv) $\wone=\UseHint(\bh,\bA\bz-c\bt_1 2^d,\alpha)$ is recovered from $\sigma$. The online sharing
$\bz_i=\by_i+c\bs_{1,i}$ is degree-$(T{-}1)$, and the corrupt minority sees all $T$ broadcasts; but the
polynomial's non-constant coefficients are the nonce-sharing randomness, fresh each session and a one-time
pad on the $\bs_1$-sharing coefficients, so the whole polynomial reveals only $\bz=\sum_i\lambda_i\bz_i$ about
$\bs_1$, which $\sigma$ already carries. The simulator samples the honest openings from this residual
randomness subject to $\bz$ and the corrupt evaluation points, matching the real joint law.
Corrupt parties outside $C$ additionally hold, for each honest committee member $h$, the evaluation
$g_h(j)$ at their index $j$ (sent during the nonce DKG). The corrupt parties hold at most $T{-}1$
evaluations of each honest $g_h$ in total; since $g_h$ has degree $T{-}1$, its non-constant coefficients
are fresh uniform randomness that one-time-pads any $T{-}1$ evaluations (a Vandermonde bijection at
distinct nonzero integer points, whose pairwise differences are units in $\Rq$), making them jointly uniform
and independent of the free term $\hat\by_h$ and of
everything else in the view. The simulator draws them uniformly at random.
Composing, the view is reproduced from $(\mathsf{pk},\sigma)$ and the corrupt inputs except for the coarse
quotient $\lfloor B/\alpha\rfloor$ and the rejected-attempt observations, both $q_s$-bounded
(\S\ref{sec:security}) and non-binding (wall $\ge 2^{28.2}$, resp.\ the same $\bs_1$ channel already charged),
so semi-honest security holds for $q_s\le Q_{\mathrm{cap}}$. In the TEE profile the enclave runs CEF internally and never opens $B$,
so step~(ii) is vacuous and the simulation is exact.
\end{proof}

\section{Malicious-model security}
\label{app:malicious}

\paragraph{Unforgeability.}
EUF-CMA against a malicious adversary is established by Remark~\ref{rem:malicious-eufcma}: the game sequence
of Theorem~\ref{thm:ub} extends directly because each transition operates on honest-party randomness, from
which the corrupt contributions factor out as known shifts. The operative bound is the insider ceiling
$\lambda/m$ ($m$ honest nonce shares) from the overhang analysis of \S\ref{app:ub}, with the MPC
coordinator's rejected-response view bounded separately (item~3 there). The required well-formedness of
corrupt contributions is extractable by honest-majority interpolation in the MPC profile, and falls to the
key-generation proofs of \S\ref{sec:mpc-malicious} in the TEE profile; no assumptions beyond Module-LWE and
SelfTargetMSIS are used.

\paragraph{Simulation-based transcript privacy.}
The malicious lift of Proposition~\ref{prop:semihonest} requires additionally that the simulator
\emph{extract} each corrupt party's effective input despite arbitrary deviations. This is the standard
extract-then-simulate paradigm: run ZK extractors on the corrupt parties' DKG and nonce-generation messages;
if extraction succeeds and the inputs are well-formed, invoke the semi-honest simulator with the extracted
inputs; if not, abort with blame from the authenticated offline transcript (before the online round, so Blame
never opens a post-$\bz$ nonce). The closure conditions of \S\ref{sec:mpc-malicious} specify the mechanisms:
\begin{itemize}[nosep,leftmargin=1.6em]
  \item authenticated offline transcripts and pre-online fault attribution;
  \item an offline hiding-and-binding commitment to the $\wone$-shares (binding the Round-1 reveal);
  \item authenticated multiplication in CEF;
  \item proof-of-possession and range/well-formedness proofs in key generation (for extractability).
\end{itemize}
Each severs one of a pair of revealed values whose join would determine the key. The one residual gap is
online $\bz_h$ attribution: a corrupt party sending a malformed $\bz_h$ causes abort (self-verification blocks
a forged output) but per-party culprit identification requires a verifiable-$\bz_h$ proof, which we defer.
The complete simulation-based proof accompanies the extended version.

\section{Empirical measurements}
\label{app:empirical}
The offline BCC pass rate and the online first-attempt success rate were measured by running the reference
implementation (the \texttt{talus-tee} crate) end to end: for each level we generated a batch of nonces
through the nonce DKG, applied the BCC filter, and ran one distributed online signing attempt per
BCC-passing nonce, on a distinct message, classifying each outcome ($T=3$, $N=5$, fixed seeds; the harness
is archived with the artifact).

\begin{table}[h]
\centering\small
\setlength{\tabcolsep}{4pt}
\begin{tabular}{lccccc}
\toprule
Level & BCC pass (offline) & Trials/nonce & Online success & Online fail & Exp.\ att. \\
\midrule
ML-DSA-44 & $43.09\%$ ($10{,}773/25{,}000$) & $2.32$ & $98.04\%$ & $1.96\%$ & $1.020$ \\
ML-DSA-65 & $31.61\%$ ($12{,}645/40{,}000$) & $3.16$ & $99.57\%$ & $0.43\%$ & $1.004$ \\
ML-DSA-87 & $38.91\%$ ($9{,}728/25{,}000$)  & $2.57$ & $99.23\%$ & $0.77\%$ & $1.008$ \\
\bottomrule
\end{tabular}
\caption{Measured offline BCC pass rate and online first-attempt success rate. Online failures are
\emph{entirely} hint-weight rejections: the $\bz$-norm check did not fire in any of the $\approx 33{,}000$
online attempts, because the summed (Irwin--Hall) nonce is concentrated well inside the $\gamma_1-\beta$
bound.}
\label{tab:empirical}
\end{table}

The BCC rates match the analytic $p_{\BCC}=(1-\beta/\gamma_2)^{n_k}$ of Proposition~\ref{prop:bccrate} to
within $0.3\%$. That the online residual is purely hint-weight (never $\bz$-norm) is a by-product of the
same nonce concentration that Theorem~\ref{thm:lb} charges \emph{against} security: the concentration that
lightens the $\bz$-norm tail is exactly the non-flatness that yields $I>0$.

\paragraph{Exact rejected-response exposure.} Complementing the measurement, we compute \emph{exactly} (not
by sampling) the tail masses of $\bz=\by+c\bs_1$ under the committee share model ($\nu{=}3$ shares uniform
on $[-\gamma_1/\nu{+}1,\gamma_1/\nu]$, average-case key): the closed-form $\IH(3)$ edge tails combined with
the exact PMF of $(c\bs_1)_j$ give, per session, a $\bz$-norm-band probability of
$2^{-22.1}/2^{-23.7}/2^{-25.4}$ (band $\gamma_1{-}\beta<|z_j|\le\gamma_1$; the $\ge$ convention shifts
entries by $<0.1$ bit) and an over-$\gamma_1$ probability of $2^{-32.5}/2^{-34.7}/2^{-36.6}$
(ML-DSA-44/65/87), i.e.\ expected $<2^{-9}$ resp.\ $<2^{-19.4}$ occurrences over a full $Q_{\mathrm{cap}}$
key lifetime; the analogous insider-residual overhang mass ($\IH(2)$, exact support $[-2a{+}2,2a]$) is
$\approx 2^{-4.4}/2^{-4.6}/2^{-5.0}$ per key lifetime.
These are closed-form combinatorial computations with no statistical error; the script is archived with the
artifact. The measured $0/33{,}000$ $\bz$-norm firings above match the predicted expectation
($\approx 2^{-8.1}$ for that mixed batch), and the hint-weight-only failure profile is exactly what the
computation predicts. A larger implementation-level probe, $10^8$ signing sessions run end to end through
the \texttt{talus-tee} crate over twelve independent keys (harness archived), observed $12$ $\bz$-norm
rejections against $\approx 7$ predicted by the key-averaged model, Poisson-consistent ($p\approx 0.07$);
the implementation thus reproduces a $2^{-24}$-per-session tail at the predicted rate. We stress the scope of this
table: it is evidence for the construction's round count and acceptance rates, \emph{not} for the upper
bound of \S\ref{sec:ub}. That bound's load-bearing constant $J(\nu)$ (through the per-session divergence of
Lemma~\ref{lem:persession}) is analytic, not empirical: it is
computed in Appendix~\ref{app:ub} (a machine-checked write-up is planned), and no amount of
signing-rate measurement bounds a forger's advantage.

\bibliography{references,refs-extra}

\end{document}